\numberwithin{equation}{section}
\newtheorem{thm}{Theorem}[section]
\newtheorem{prop}[thm]{Proposition}
\newtheorem{rem}{Remark}[section]
\newcommand{\eq}[1]{(\ref{#1})}
\newcommand{\mbr}{\medbreak}
\newcommand{\sbr}{\smallbreak}
\renewcommand{\Re}{\operatorname{\rm Re}}
\renewcommand{\Im}{\operatorname{\rm Im}}
\newcommand{\beqast}{\begin{eqnarray*}}
\newcommand{\eqast}{\end{eqnarray*}}
\newcommand{\beqa}{\begin{eqnarray}}
\newcommand{\eqa}{\end{eqnarray}}
\newcommand{\bbe}{\begin{equation}}
\newcommand{\ee}{\end{equation}}
\renewcommand{\Re}{\operatorname{\rm Re}}
\renewcommand{\Im}{\operatorname{\rm Im}}
\newcommand{\bC}{{\mathbb C}}
\newcommand{\bE}{{\mathbb E}}
\newcommand{\bQ}{{\mathbb Q}}
\newcommand{\bR}{{\mathbb R}}
\newcommand{\bZ}{{\mathbb Z}}
\newcommand{\cK}{{\mathcal K}}
\newcommand{\cD}{{\mathcal D}}
\newcommand{\cL}{{\mathcal L}}
\newcommand{\cC}{{\mathcal C}}
\newcommand{\cU}{{\mathcal U}}
\newcommand{\hG}{{\hat G}}
\newcommand{\hV}{{\hat V}}
\newcommand{\hf}{{\hat f}}
\newcommand{\Om}{{\Omega}}
\newcommand{\al}{\alpha}
\newcommand{\be}{\beta}
\newcommand{\De}{\Delta}
\newcommand{\de}{\delta}
\newcommand{\eps}{\epsilon}
\newcommand{\ka}{\kappa}
\newcommand{\la}{\lambda}
\newcommand{\lp}{\lambda_+}
\newcommand{\lm}{\lambda_-}
\newcommand{\La}{\Lambda}
\newcommand{\mum}{\mu_-}
\newcommand{\mup}{\mu_+}
\newcommand{\sg}{\sigma}
\newcommand{\om}{\omega}
\newcommand{\ze}{\zeta}
\newcommand{\ga}{\gamma}
\newcommand{\gap}{\gamma_+}
\newcommand{\gam}{\gamma_-}
\newcommand{\Ga}{\Gamma}
\newcommand{\hh}{\hat h}
\begin{document}

\title[Fast reliable pricing and calibration of the rough Heston model II]
{Fast reliable pricing and calibration of the rough Heston model II}

\author[
S. Boyarchenko, M. de Innocentis and
S. Levendorski\u{i}]
{
Svetlana Boyarchenko, Marco de Innocentis and
Sergei Levendorski\u{i}}

\begin{abstract}
This is an extended and modified version of the preprint ``Fast reliable pricing and calibration of the rough Heston model ".
We suggest new fast and accurate methods for pricing and calibration
of the rough Heston model and analyze relative advantages of several popular methods of numerical Fourier
inversion. The pricing method and analysis of other methods are quite general and applicable to wide classes of models where the characteristic functions can be calculated explicitly or numerically. In application to the rough Heston model,  it is necessary  to numerically solve the fractional Volterra equation. We analyze sources of errors of several variations of the Adams method, and construct a modification that increases accuracy of the solution.   For moderate or long maturities and strikes near spot, thousands of prices are computed in several milliseconds (ms) in \textsc{Matlab} on a Mac with moderate specs, with relative errors $\lesssim 10^{-4}$. Even for options close to expiry and far-OTM, the pricing takes a few tens or hundreds of ms.  
\smallbreak 

\noindent We show that, for the calibrated parameters in El Euch and Rosenbaum (Math.\ Finance 2019, v.~29), the model implied vol surface is much flatter and fits the market data poorly; thus the calibration in \emph{op.\ cit.} is a case of ``ghost calibration'' (M.~Boyarchenko and S.~Levendorski\u{i}, Quant.\ Finance 2015, v.~15): the numerical error and model specification error offset each other, creating an apparently good fit that vanishes when a more accurate pricer is used. We explain how such errors arise in popular iFT implementations that use fixed numerical parameters, yielding spurious smiles/skews, and  provide numerical evidence that SINH acceleration (S.~Boyarchenko and S.~Levendorski\u{i} IJTAF 2019, v.~22), of Fourier inversion 
is faster and more accurate than competing methods. Robust error control is ensured by a general Conformal Bootstrap principle that we formulate; the principle is applicable to many Fourier-pricing methods. We outline how this principle and our method enable accurate calibration procedures that are hundreds of times faster than approaches commonly used in the industry. Using the methods developed in the paper and the CB principle, we address the outstanding problem
\emph{Markovian or non Markovian} - which class of models fits the empirical data best.

\medbreak

 \noindent \emph{Disclaimer:} The views expressed herein are those of the authors only.
 No other representation should be attributed.
 
\end{abstract}

\thanks{
\emph{S.B.:} Department of Economics, The
University of Texas at Austin, 2225 Speedway Stop C3100, Austin,
TX 78712--0301, {\tt sboyarch@utexas.edu} \\
\emph{M.I.:}  Deutsche Bank, 21 Moorfields, London, EC2Y 9DB.
Email address: {\tt
marcdein@gmail.com} \\
\emph{S.L.:}
Calico Science Consulting. Austin, TX.
 Email address: {\tt
levendorskii@gmail.com}}
    
\maketitle

\noindent
{\sc Key words:} rough Heston model, fractional Adams method, Fourier transform, sinh-acceleration, CM method, COS method, Lewis method, calibration, conformal bootstrapping principle

\noindent
{\sc MSC2020 codes:} 60-08,60E10,60G10, 60G22,65C20,65D30,65G20,91G20,91G60

\tableofcontents

\section{Introduction}\label{s:intro} 
Pricing methods based on the Fourier transform are amenable to fast calculations, hence, widely used
for calibration of various models.
However, pricing errors can have serious impact on calibration results and an assessment of 
the relative performance of models. As demonstrated in  \cite{one-sidedCDS}, naive applications of popular Fourier methods with fixed, non-optimal parameters can lead to \emph{ghost calibration}. In this phenomenon, numerical inaccuracies of the pricing algorithm systematically offset the model's specification errors, producing a fallacious, yet seemingly excellent, fit to market data with incorrect model parameters and a distorted implied volatility surface. Conversely, the correct model may be ruled out simply because, at the true parameters, numerical error dominates: 
  \emph{sundial calibration} \cite{paraHeston}: a sundial never shows midnight. 
  The first aim of the paper is to analyze sources of errors of several popular methods of Fourier inversion,
  and compare their performance in application to the Heston model and rough Heston model.
  The analysis admits a straightforward modification to the Bates model and its rough analog,
  and, if errors of numerical evaluation of the characteristic function are properly taken into
 account, to more complicated stochastic volatility (SV) models and models with stochastic interest rates.  In the case of the Heston model, Bates model and other simple models, the characteristic 
  function is calculated explicitly. If certain complex-analytical subtleties are properly taken into account,
  a careful choice of an algorithm for the Fourier inversion guarantees that the option price
  is calculated sufficiently accurately. 
  If the characteristic function can be calculated only numerically, then 
  an additional source of errors appears, and the errors increase as the maximal absolute value
  of the spectral parameter $\xi$ that is used in the quadrature for the Fourier inversion increases. 
  Let $N$ be the number of terms in the numerical procedure for the Fourier inversion, and $M$ the number of time steps
  in a chosen quadrature for the evaluation of the characteristic function. Then, typically, $M$ necessary
  to satisfy a given error tolerance increases with $N$, and, for small times to maturity, $N$ must be large;
  hence, for small maturities, both $M$ and $N$ must be large, and accurate option pricing becomes
  very time consuming. If time to maturity $T$ is large and a relatively simple quadrature for the evaluation
  of the characteristic function is used, then $M$ must be large as well. Only for moderate maturities, and for strikes $K$ not far from the spot, moderately large $N$ and $M$ can be used in essentially any reasonable procedure.
  In other regions of the maturity-strike space, an accurate pricing procedure is difficult to design. 
  Following Leo Tolstoy (``All happy families are alike; each unhappy family is unhappy in its own way"), we 
  formulate the \emph{Anna Karenina principle for option pricing}: in a good region of the $(K,T)$-space, all reasonable models and pricing methods are alike; close/far from maturity and far in the tails, models and pricing methods perform differently\footnote{Disclaimer: we are not the first to formulate the principle. J. Diamond 
 \cite{Diamond1994}
formulated the principle in applications to biology: ``A deficiency in any one of a great number of factors can render a species undomesticable. Therefore, all successfully domesticated species are not so because of a particular positive trait, but because of a lack of any number of possible negative traits." }.
Even if the pricer is sound, in applications, the restrictions on the CPU cost imply that one cannot choose both
$N$ and $M$ very large. One is tempted to use the  same parameters of the numerical scheme
for all $(K,T)$ in the data set and all parameters of the model. Hence, typically, too small $N$ and/or $M$ are
used, and the error depends on the choice. If $N$ is small, the error of the numerical Fourier
inversion is large; if $N$ is large but $M$ is insufficiently large, then the characteristic function
is evaluated with large errors; the relative errors can be of the order of thousands percent and more.
We formulate   the \emph{ Uncertainty Principle of calibration}: using different parameters of the numerical scheme,
one can produce a host of different prices and volatility curves and surfaces, and choose shapes one likes better.
  
  In the paper, we demonstrate these effects in the case of the rough Heston model \cite{EuchRosenbaum2019}.
 We suggest new fast and accurate methods for pricing and calibration
of the rough Heston model and analyze  errors of several popular methods of numerical Fourier
inversion, which lead to the ghost and sundial calibration. 
The pricing method and analysis of other methods are quite general and applicable to wide classes of models where the characteristic functions can be calculated explicitly or numerically. 
In particular, modifications of pricing and calibration procedures constructed in the paper can be
easily adjusted to generalizations of the rough Heston model of Bates type. In application to the rough Heston model,  it is necessary  to numerically solve the fractional Volterra equation. We suggest modifications that increase accuracy of the solution. 
As it is explained in \cite{pitfalls}, even in classical affine models,
accurate numerical solution of a system of Riccati equations can be difficult 
and serious errors can result. In the numerical example for the two-factor CIR model considered
in \cite{pitfalls}, 
a numerically calculated trajectory for the cumulant $\phi(T,\xi)$, with a fairly small time step, blows up as $T\to \infty$ although
according to the theoretical analysis, the exactly calculated trajectory does not blow up. In the case of the fractional Volterra equation,
the theoretical results available in the mathematical literature are very far from being complete, in the case of large (in absolute value) spectral parameter $\xi$ especially. For design of accurate numerical schemes, the following difficulties must be
taken into account; similar difficulties arise for other models but the detailed analysis is model-specific.  Let time to maturity $T$ be fixed. For accurate Fourier inversion, $\phi(T,\xi)$ with large $\xi$ must be used
and the number of terms $N$ in the chosen quadrature must be large;
for large $\xi$, very large number of time steps $M$ is necessary to calculate $\phi(T,\xi)$
sufficiently accurately. We consider in detail several variations of the Adams method in applications
to the fractional Volterra equation.
The inherent instability of numerical solutions of the latter leads to the following
interesting effect.    Let $T$ and $M$ be fixed, and $M$ is insufficiently large. Then, if one variation
of the Adams method is used,
$\Re \phi(T,\xi)$ may start  to increase as $\xi\to \infty$ along the line of integration. Hence,
the pricing error becomes too large if large $\xi$ are used, and the calculated price falls outside the no-arbitrage bounds. 
If another variation of the Adams method is used,
 $\Re\phi(T,\xi)\to -\infty$, and very fast. Then the sequence of prices evaluated using longer $\xi$ 
 grids converges extremely fast, and one is tempted to conclude that the convergence of the method
is excellent - although the reason is that all additional terms become irrelevant. The excellent convergence is fallacious, and the residual non-negligible error remains. The residual error depends on the parameters of the model, maturity and strike, and can be rather large. 
To control the errors in difficult situations where the theoretically sound error control is unavailable
or too complicated for practical purposes, we suggest the \emph{Conformal Bootstrap principle} (CB principle).
We use the principle in applications to the Heston and rough Heston models; the principle can be used
in applications to other models as well. The main idea is to use different conformal deformations of the
contour of integration in the pricing formula, make the corresponding conformal changes of variables and apply a good quadrature to the resulting integral over $\bR$ or $\bR_+$. If the prices obtained with  different deformations agree up to the error $10^{-m}$, where $m\ge 5$, we surmise that each of the prices differ from
the one calculated with a perfect pricer by not more than $10^{-m+2}$. If a simply connected region of analyticity of 
the characteristic function where the initial contour and deformed one lie are known and the leading term of asymptotics of the cumulant at infinity is known as well, then admissible deformations are known. Furthermore,  for each deformation,  sufficiently accurate prescriptions
for the step $\ze$ and number of terms $N$ of the simplified trapezoid rule are easy to derive. The prescription being an approximate one, small adjustments by 10\% - 30\%  may be needed; in some cases,
up to 80\%.
The CB principle allows one to essentially guarantee that the pricing formula 
with the chosen $\ze$ and $N$ produces prices with the desired accuracy.
We discuss the reliability
of the ad-hoc CB principle in the case when the domain of the analyticity of the integrand in the pricing formula
is unknown.

In the first version of the paper, as an efficient family of deformations, we used the sinh-deformation 
defined by conformal maps of the form $\xi = i\om_1+b\sinh(i\om+y)$ and the simplified trapezoid rule.
In addition to the sinh-family of deformations, we also used (in different publications), the fractional-polynomial  
deformations and logarithmic deformations of $\bR$, and, in applications to evaluation of probability 
distributions of stable L\'evy processes, deformations of $\bR_+$. 
In the case of integrals over $\bR_+$, the most efficient family is the rotation of $\bR_+$. After the rotation, we apply the exponential change of variable and simplified trapezoid rule. 
The families of deformations of $\bR$ and $\bR_+$ are sufficiently flexible for efficient evaluation of distributions and pricing options
if the characteristic function admits a bound via  $\exp(-c|\xi|^\nu)$, where $\nu\in (0,2)$, in the union of a strip
and cone around the line of integration in the Fourier inversion formula. The families work well even if $\nu\in(0,1)$, hence, the characteristic function decays slowly at infinity.
In the present paper, we introduce  an alternative to the sinh-acceleration,  the rotation of the contour of integration
$\bR_+$ in the Gauss-Laguerre quadrature (GL quadrature). Assume that the integrand can be evaluated
very accurately at a small CPU cost and the error tolerance is not smaller than E-10, which is sufficient
for any practical application. 
We prove that if the sinh-deformation (or rotation of $\bR_+$)
is  applicable and either $\nu>1$ or $\nu=1$ and $c>1$,
then the efficiency of the rotated GL quadrature is comparable to the sinh-acceleration 
but the number of terms in the GL quadrature is 20-50\% larger. 
If $\nu<1$ or $\nu=1$ and $c<1$, then the theoretical error bound for the GL quadrature 
gives $+\infty$. If $f(y)$ is the integrand, the error bound is in terms of high order derivative of
$f_0(y)=e^yf(y)$. Hence, one expects that increasing number of terms, one obtains a diverging sequence.
Surprisingly, we observed that, for moderately large number of terms, the GL quadrature works well if $\nu=1$ and $c<1$ (which is the case
for all parameter sets of the Heston model and rough Heston models that we considered if time to maturity $T$ is not rather large)  but $c$ is
not too close to 0. Apparently, in cases when the GL quadrature is surprisingly accurate, there exists a sufficiently accurate
approximation of the integrand by a function which decays at infinity faster than $e^{-|\xi|}$ but
insignificantly differs from the characteristic function on the interval that contains all nodes of the GL quadrature with
a chosen number of terms $N$. However, relatively simple error bounds  that we managed 
to derive using this interpretation imply significantly larger errors than we observe. 

Thus, one can try to use the rotated GL quadrature instead of the sinh-acceleration but the results
will be not reliable, for short maturity options especially unless high precision arithmetic is used.  In the case of Heston model and rough Heston model,  $c=c(T)\to 0$ as $T\to 0$; heuristically, we hope
that if the results produced for 2-3 different rotations agree well, then the prices are sufficiently accurate,
as in the case when the sinh-acceleration is applied. This conclusion can be false
if the fallacious variation of the Adams method is used.  In this case several dozens of non-neglible terms
in the GL quadrature applied after different rotations are values of the same analytic functions. Hence,
the difference of the price calculated using an ideal pricer and 
the sums of the non-negligible terms in the simplified trapezoid rule must be identical (up to the machine error).

Using the methods developed in the paper and the CB principle, we address the outstanding problem
\emph{Markovian or non Markovian} - which class of models fits the empirical data best. 
Starting with the celebrated Heston model \cite{heston-model}, affine models have become one of the most popular classes of stochastic
volatility models, term structure models, and models in FX. The popularity is due to the fact that
the characteristic function in an affine model can be explicitly calculated solving an associated system
of generalized Riccati equations \cite{DFS}, hence, the Fourier transform technique allows
one to express prices of options of the European type as oscillatory integrals.  However, despite their analytical convenience, models based on classical affine diffusions such as Heston have well-documented limitations in reproducing key features of observed implied volatility surfaces. In particular, these models are unable to  capture adequately the pronounced short-maturity steepness of implied volatility smiles -- a feature consistently observed in equity and FX markets. In response to these shortcomings, a new class of models, that of \emph{rough volatility} models, has been proposed. These models replace the standard Brownian Motion (BM) drivers of volatility with fractional processes characterised by a Hurst index $H \in (0,1/2)$. 
Empirical evidence presented by Gatheral, Jaisson, and Rosenbaum~\cite{GatheralJaissonRosenbaum2018} demonstrated that, at the time, the log-volatility of financial assets behaved as a fractional Brownian motion with $H \approx 0.1$, both in historical and implied volatility data. This low Hurst index implies a ``rough'' volatility path that is far less smooth than Brownian motion, leading to stronger short-term memory and more accurate modelling of the volatility clustering and bursts which are often observed in financial markets.
For various aspects of rough volatility models, see \cite{BayerFritzGatheral2015,GatheralJaissonRosenbaum2018,JacquierMartiniMuguruza2018,EuchRosenbaum2019,FordeZhang2017,FordeSmithViitasaary2021,FrizGassiatPigato2021,FrizGassiatPigato2022,BayerFrizFukasawaGatheralJacquierRosenbaum2024} and the bibliographies therein. 

 By the law of the pendulum, several recent empirical publications (see, e.g., 
\cite{GuyonAmrani,Delemotte} and the bibliographies therein) cast doubts on the superiority of rough volatility models as compared to
Markovian ones. 
The following main shortcomings are formulated: the volatility skew is too high at the short end and
too low at the long end as compared to the skew inferred from the market data, the log-log plot of ATM skew
is a downward sloping straight line although the corresponding curve inferred from the market data
is concave, and the implied volatility surfaces in rough volatility models agree poorly with the market ones.
In the paper, we include a preliminary analysis of  these conclusions. A detailed analysis 
requires a systematic empirical study which is outside the scope of the paper.  In several examples of implied volatility curves calculated for the rough Heston model, which
we found in the literature, the errors are of the order of several percent or more, hence, the calculated model
surfaces are far from the correct ones.  Thus, the claim that the surfaces in rough Heston models fit the market ones very poorly
can be attributed to inaccuracy of the numerical methods used.  If the pricing method is sufficiently accurate, the model surface can be fairly close to the market one, even in a fairly difficult situation.
In the present paper, we calibrate 
the rough Heston model to the implied volatility surface of TSLA on 2 May, 2025\footnote{Tesla (TSLA) was among the most actively traded option underlyings in 2025, which makes it a natural test case.}. The trading activity
was hectic on this day, which makes the example interesting from the point of view of applications to the risk management, and the calibrated data set enjoys (rather, suffer) difficult properties for any two-factor model to fit well. (We discuss non-typical properties this set of data below.)
Nevertheless, the model surface is fairly close to the market one (see Fig.~\ref{3surfaces} and implied volatility curves in 
Sect.~\ref{ss:calib_sinh}).

If the model implied volatilities differ from the market ones by several percent or more, the accuracy of
the numerical calculation of the derivative of the implied volatility w.r.t. strike becomes very poor indeed. 
In many cases, if a more accurate pricing method is used, the shapes of the implied volatility surfaces, curves, and skews 
significantly change.
For instance,  in the numerical example in \cite{EuchRosenbaum2019}, 
the ATM skew at the short end drops by more than 100\% when we recalculate the skew using more accurate methods. We calculated the ATM skews for the sets of parameters used in the paper and several other sets; in all cases,
the log-log ATM skew is concave up starting from maturities 1-3 days (depending on model's parameters).
In the case of TSLA, 2 May, 2025, the concavity is small.

In our numerical experiments, we observed that as the accuracy of the  pricing procedure
improves, the skew becomes significantly  lower at the short end and may become higher at the long end.
 Sufficiently accurate calculations for the interval 1D-1W require $M=40000$ even if the new efficient modification
of the Adams method is used. We suspect that  all numerical results that can be found in the literature
are obtained with less accurate numerical schemes and time steps are larger than necessary. 
For instance, in \cite{Radoicic-Gatheral}, $M=200$ is mentioned, in other texts, $M=2000$ is used.
We found that to obtain good calibration results for the data set used in the present paper, even $M=4000$  
is too small if one wishes to obtain sufficiently accurate results. Hence, the calculated curves and surfaces that can be found in the literature are incorrect and correctly calculated ones
can fit the market ones significantly better.
The analysis of the sources of errors at the short end is one of the main objectives of the paper.\footnote{
To conduct a satisfactory analysis at the long end, where the errors of numerical solution of the fractional Adams equation accumulate, it is necessary to use hybrid methods with more efficient methods applied
outside a small neighborhood of 0. We leave this natural extension of the paper to the future. It is possible that the agreement of the model skew with the market one will improve in the region of long maturities as well.}

%
\begin{figure}[p]
	\centering
	\includegraphics[width=0.65\textwidth]{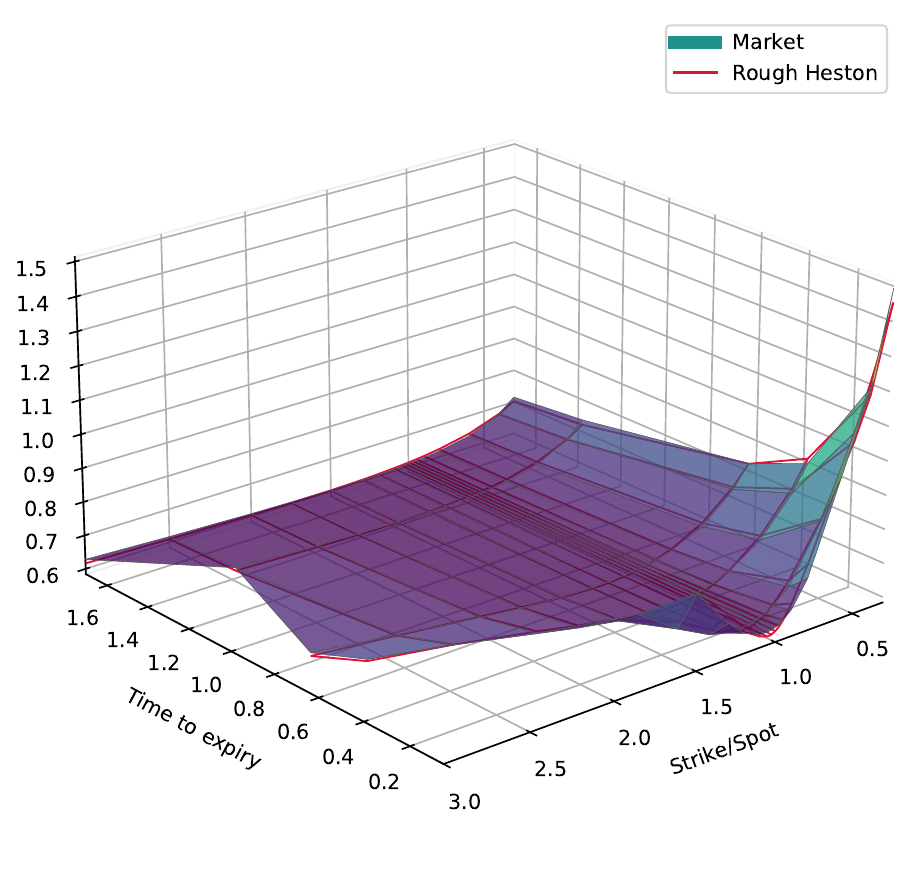}\\[1ex]
	\includegraphics[width=0.65\textwidth]{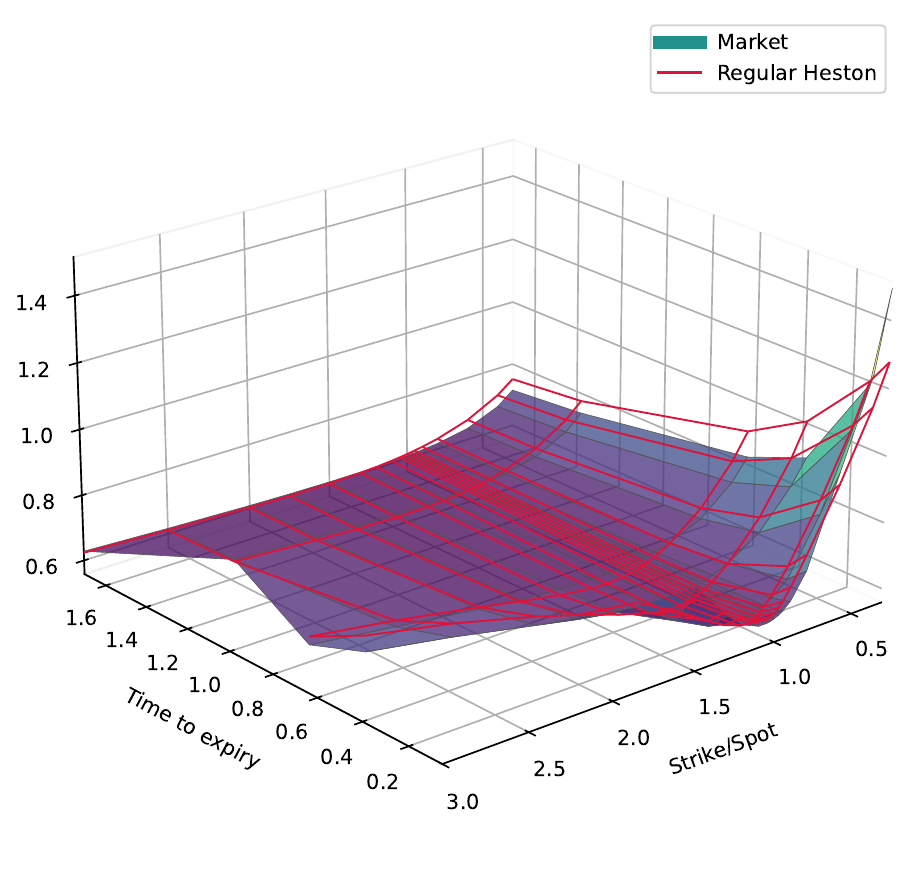}
	\caption{\small Upper panel: TSLA, 2 May 2025, market implied volatility surface (solid) versus the calibrated rough Heston model (red wireframe). Lower panel: market surface versus the regular Heston model. The rough Heston model is calibrated to implied volatilities between 3W and 1.7Y using  SINH acceleration and the hybrid BL-Adams method. Model parameters are in \eqref{params:rough_long}.}
	\label{3surfaces}
\end{figure}
\begin{figure}
\scalebox{0.8}
{\includegraphics{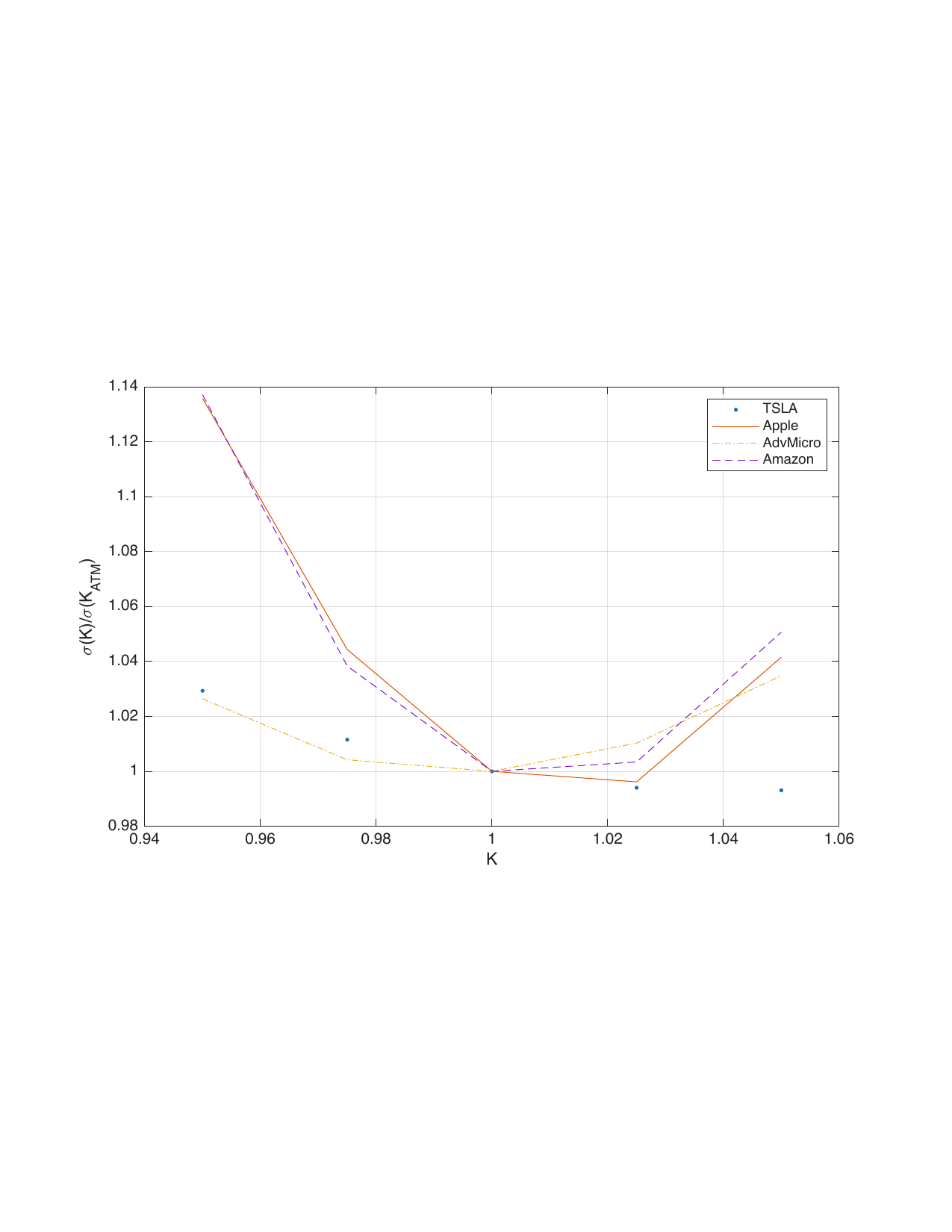}}
\vskip-5cm
\caption{\small 1W implied volatilities normalized by the spot. Apple, Advanced Micro Devices (AMD) and Amazon: 
8 Aug, 2025; TSLA: 2 May 2025}
\label{fig:normalized curves}
\end{figure} 
We were unable to find examples in the literature
where very fine grids were used, hence, the common belief about the shape of the log-log ATM skew
in the rough Heston model is based not on reliable calculations of model skews but  on the asymptotic result in \cite{Fukasawa2017}. But an asymptotic formula cannot be expected to give a good approximation outside a small neighborhood of zero.
An additional problem is that the observed implied volatility curves are just several  dots - for short maturities, about a dozen or less. Hence,  \emph{the  market skew} is not even a well-defined object.
If we define the market skew using finite differences (as in \cite{Fukasawa2017}), then, typically, the result strongly depends on the choice of the finite difference and the difference among different market skews blows up as $T\to 0$. To make a meaningful comparison of the market and model skews, it is
necessary to know how the market skew is calculated and apply the same approximate procedure
to the model. Typically, at the short end, the accurately calculated derivative differs significantly from
each finite difference. 
 An example using the real data TSLA on 2 May, 2025, obtained on a hectic date  of the financial 
shows an unexpectedly good performance of the rough Heston model. The rough Heston model
gives excellent fit
even when extrapolated across expiries. The out-of-sample performance of the Heston model 
is significantly worse. We have chosen the date for calibration as an important one from the point
of view of applications in risk management. Later, we observed the following interesting atypical feature of
this empirical parameter set: in a sizable  neighborhood of the spot, the implied volatility curve
is close to a segment of the downward sloping straight line.  See Fig.~\ref{fig:normalized curves},
where we compare 1W curves for Apple, AMD and Amazon on 
8 Aug, 2025, and the leading example TSLA on 2 May, 2025. The tails are steeper, especially the left one (see Fig. ~\ref{fig:normalized curves}).  
A shape of this kind can be reproduced in a one-factor model with a reasonably small AVE as we demonstrate; but then
the ATM skew  significantly differs from the market one. The example of TSLA on 2 May, 2025, 
seems to be an outlier. We leave to the future a study of models which can reproduce
the implied volatility curves with an approximately straight segment around the ATM strike.  Probably, one-factor models
are incapable of  producing  a small AVE \emph{and the ATM skew which is close to the market one}.
In other respects, the rough Heston model performs in this case very well. The detailed analysis
of the out-of-sample performance is in Sect.~\ref{s:model_calib}. 

A further motivation for rough volatility models comes from counterparty credit risk. One of us implemented the Heston model for risk-factor evolution and pricing within the counterparty credit risk (CCR) engine of a major bank, a setting in which the use of a stochastic volatility model had until a few years earlier seemed impractical, but became necessary for realistic exposures. The extreme  volatility clustering seen in recent episodes of market stress suggests that rough volatility models may, in time,  play a similar role in CCR and CVA, as well as in stress testing. This however requires pricing to be fast and reliable enough for the millions of revaluations such applications demand, which is  the use case our method targets.

The rest of the paper is organized as follows. The rough Heston model and several methods of
the solution of the fractional Volterra equations are introduced in Sect.~\ref{s:Rough-Adams}.
In Sect.~\ref{s:SINH-CB}, we recall the Flat iFT and  SINH-acceleration methods, give explicit recommendations
for the choice of the parameters of the SINH-acceleration, and consider the Conformal Bootstrap principle. 
In Sect.~\ref{s:FT}, we analyze several popular methods of the Fourier inversion. In particular, we outline the rotated version of the Gauss-Laguerre quadrature and a version of the Conformal Bootstrap principle
based on the the rotated version. Numerical examples are in Sect.~\ref{s:numer}. Pricing algorithms and calibration results are 
in Sect.\ref{s:fast_pricing}-~\ref{s:model_calib}. Sect. ~\ref{s:concl} concludes. The detailed pricing/calibration algorithms and additional tables and figures are in the appendix.

\section{Rough Heston model and the fractional Volterra equation}\label{s:Rough-Adams}

\subsection{Formulas for the characteristic function}\label{ss: char_function}
The rough Heston model \cite{EuchRosenbaum2019} is constructed by replacing the variance process in the Heston model
with the fractional square root process:
\beqa\label{dynS}\label{frSqrt}
d S_t&=& S_t\sqrt{V_t} dB_t, \\\label{frVdyn}
V_t &= &V_0 + \frac{1}{\Ga(H+1/2)}\int_0^t(t-s)^{H-1/2}(\ga(\theta-V_s)ds+\gamma \nu\sqrt{V_t}dW_s),
\eqa
where $S_0, V_0>0$ and $(B_t,W_t)$ is BM in 2D with the correlation coefficient $\rho\in [-1,1]$;
the components are standard BM in 1D. Denote $\al=H+1/2$, and 
let $\al\in (0,1)$, $v, \ga,\theta,\nu>0$. It is proved in \cite{EuchRosenbaum2019,EuchRosenbaum2017} that the (conditional) characteristic function 
of the log-price  $\Phi_\al(t,T,v,\xi):=\bE[e^{i\xi X_T}\ |\ X_t=0, V_t=v]$ in the rough Heston model is of the form
\bbe\label{chFRough}
\Phi_\al(t,T; v,\xi)=\exp[g_1(\xi,\tau)+vg_2(\xi,\tau)],
\ee
where $\tau=T-t$, 
\bbe\label{eq:g1g2}g_1(\xi,\tau)=\theta\ga\int_0^\tau h(\xi,s)ds,\ g_2(\xi,\tau)=I^{1-\al}h(\xi,\tau),\ee
and $h(\xi,\cdot)$ is the solution of the fractional Riccati equation
\bbe\label{RiccRough}
D^\al_t h(\xi, t)=-\frac{1}{2}(\xi^2+i\xi)+\ga(i\xi\rho\nu-1)h(\xi,t)+\frac{(\ga\nu)^2}{2}h(\xi,t)^2,
\ee
subject to $I^{1-\al}h(\xi,0)=0$. Recall that, for $\al\in (0,1)$, $I^\al$ and $D^\al$ are the fractional integral and differential operators:
\beqa\label{defIal}
I^\al u(t)&=&\frac{1}{\Ga(\al)}\int_0^t (t-s)^{\al-1}u(s)ds,\\
D^\al u(t)&=&\frac{1}{\Ga(1-\al)}\frac{d}{dt}\int_0^t (t-s)^{-\al}u(s)ds.
\eqa
Introduce the notation
\bbe\label{defF}
F(\xi,h)=-\frac{1}{2}(\xi^2+i\xi)+\ga(i\xi\rho\nu-1)h+\frac{(\ga\nu)^2}{2}h^2.
\ee
Equation \eq{RiccRough} subject to $I^{1-\al}h(\xi,0)=0$ is equivalent to the following  Volterra equation
\bbe\label{Volterra}
h(\xi,t)=I^\al F(\xi,t)=\frac{1}{\Ga(\al)}\int_0^t(t-s)^{\al-1}F(\xi,h(\xi,s))ds.
\ee
In \cite{EuchRosenbaum2019}, \eq{Volterra} is solved (numerically) using the fractional Adams method. It is 
not explained how $g_1$ and $g_2$ are evaluated. Presumably, using the piece-wise linear interpolation as in the fractional Adams method: the trapezoid rule and fractional trapezoid rule, respectively. Since $h$ is not smooth at 0 and an additional fractional integral needs to be evaluated, the errors increase. 
We use the following version of \eq{chFRough}, thereby avoiding additional errors.
\begin{prop}\label{prop:newRoughHestonChExp}
Let $\al\in (0,1)$, $v, \ga,\theta,\nu>0$, $\rho\in (-1, 1)$, and let $h(\xi,t)$ be
the solution of \eq{Volterra}. Then
\bbe\label{chFRough2}
\Phi_\al(t,T,v,\xi)=\exp\left[\int_0^\tau (\ga\theta h(\xi,s)+vF(\xi,h(\xi,s)))ds\right].
\ee
\end{prop}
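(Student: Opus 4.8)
The plan is to start from the representation \eq{chFRough}--\eq{eq:g1g2}, established in \cite{EuchRosenbaum2019,EuchRosenbaum2017}, and merely rewrite the exponent $g_1(\xi,\tau)+vg_2(\xi,\tau)$. The term $g_1(\xi,\tau)=\theta\ga\int_0^\tau h(\xi,s)\,ds$ is already literally the first summand $\int_0^\tau\ga\theta h(\xi,s)\,ds$ occurring in \eq{chFRough2}, so nothing needs to be done with it. Hence the whole content of the proposition is the identity
\[
g_2(\xi,\tau)=I^{1-\al}h(\xi,\tau)=\int_0^\tau F(\xi,h(\xi,s))\,ds ,
\]
i.e.\ that one more fractional integration of $h$, evaluated at $\tau$, collapses to the ordinary integral of $F(\xi,h)$.

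To prove this identity I would use the equivalent Volterra form \eq{Volterra}, which reads $h(\xi,\cdot)=I^{\al}[F(\xi,h(\xi,\cdot))]$, together with the semigroup (composition) law of the Riemann--Liouville fractional integral, $I^{\be}I^{\mu}g=I^{\be+\mu}g$ for $\be,\mu>0$ and $g\in L^1([0,\tau])$. Taking $\be=1-\al$ and $\mu=\al$ (both positive since $\al\in(0,1)$) gives
\[
I^{1-\al}h(\xi,\cdot)=I^{1-\al}I^{\al}[F(\xi,h(\xi,\cdot))]=I^{1}[F(\xi,h(\xi,\cdot))],
\]
and from the definition \eq{defIal} at $\al=1$ one has $I^{1}g(\tau)=\int_0^\tau g(s)\,ds$. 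Substituting $g(s)=F(\xi,h(\xi,s))$ and recombining with $g_1$ yields $g_1(\xi,\tau)+vg_2(\xi,\tau)=\int_0^\tau(\ga\theta h(\xi,s)+vF(\xi,h(\xi,s)))\,ds$, which is \eq{chFRough2}.

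The only step requiring any care — and the one I expect to be the main (minor) obstacle — is verifying the hypothesis of the semigroup law, namely that $s\mapsto F(\xi,h(\xi,s))$ belongs to $L^1([0,\tau])$, so that the composition identity and the Fubini interchange it encodes are legitimate near the endpoint $s=0$, where the kernel $(t-s)^{\al-1}$ of $I^{\al}$ is singular. This follows from the regularity of the Volterra solution: $h(\xi,\cdot)$ is continuous on $[0,\tau]$ (indeed $h(\xi,t)=O(t^{\al})$ as $t\downarrow0$, directly from \eq{Volterra}), hence bounded there, and since $F(\xi,\cdot)$ is the quadratic polynomial \eq{defF}, $F(\xi,h(\xi,\cdot))$ is continuous on $[0,\tau]$ and a fortiori in $L^1([0,\tau])$. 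With these observations in place the argument is essentially a one-line invocation of the fractional-integral semigroup property.
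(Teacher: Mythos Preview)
Your proof is correct and is essentially the paper's own argument: the paper's proof is the single line ``It suffices to note that $I^{1-\al}I^\al=I^1$,'' which is exactly the semigroup identity you apply after substituting $h=I^{\al}[F(\xi,h)]$ from \eq{Volterra}. Your added remarks on the $L^1$ regularity of $F(\xi,h(\xi,\cdot))$ merely spell out the justification the paper leaves implicit.
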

\begin{proof} It suffices to note that $I^{1-\al}I^\al=I^1$.
\end{proof}

\subsection{Asymptotics of $\phi(\xi,\tau):=g_1(\xi,\tau)+vg_2(\xi,\tau)$ as $\tau\to \infty$}\label{AsymRough}
 The asymptotic formula is an analog of the formula rigorously 
derived in \cite{paraHeston} in the Heston model. Unfortunately, we were unable
to rigorously prove the formula in the case of the rough Heston model and the validity of the assumption that
$\phi(\xi,\tau)$ admits analytic continuation to an open cone $\cC$ containing $(0,+\infty)$. We verified the latter property
for a number of sets of the parameters of the model using the conformal bootstrap principle.
The second assumption that we make is that as $\xi\to \infty$ remaining in $\cC$,
\bbe\label{asphi}
h(\xi,\tau)=h_\infty\xi + O(1),
\ee
where $h_\infty\in \bC\setminus\{0\}$. 
 We also empirically verified this assumption in a number of examples. 
Substituting \eq{asphi} into \eq{Volterra}, dividing by $\xi$ and passing to the limit
$\xi\to \infty$, we observe that \eq{asphi} fails unless $h_\infty$ is a solution
of the equation 
\[
-\frac{1}{2}+i\rho\ga\nu h_\infty+\frac{(\ga\nu)^2}{2}h_\infty^2=0.
\]
As in the case of the Heston model, we need the solution in the left half-plane,
therefore, 
\bbe\label{hinfty}
h_\infty = -\frac{i\rho+\sqrt{1-\rho^2}}{\ga\nu}.
\ee
Now we can calculate the asymptotics of $\phi(\xi,\tau)$:
\bbe\label{eq:asphi}
\phi(\xi,\tau)/\xi=\ga\theta h_\infty \tau 
+\frac{v_0h_\infty}{\Ga(1-\al)}\int_0^\tau (\tau-s)^{-\al}ds+o(1)
=-c_\infty(\tau)+o(1),
\ee
where 
\bbe\label{cinf}
c_\infty(\tau) = \frac{i\rho+\sqrt{1-\rho^2}}{\ga\nu}\left(\ga\theta \tau +\frac{v_0\tau^{1-\al}}{\Ga(2-\al)}\right),
\ee
and  $v_0 := V_0$. Note that if $\al=1$, then the asymptotic formula \eq{eq:asphi} coincides with the formula for the Heston model
derived in \cite{paraHeston}. 
We observed
that  numerically calculated $\phi$ satisfies $\phi(\xi,\tau)/\xi\to c_{\infty}(\tau)$ with a good accuracy.
See Fig.~\ref{phi-curves}.

\begin{figure}
     \caption{Curves $\cL_{\om_1,b,\om}\ni \xi\mapsto-\phi(\xi,\tau)/\xi\in \bC$ for $\om=-0.2, 0.0, 0.2$ and
     $\tau=1/252$. Parameters
    of the rough Heston model are in \eq{parEuRos},
    $c_\infty(\tau)=0.1222 - 0.1136i$. We observe a small discrepancy in the imaginary part of $c_\infty(\tau)$.
    For the choice of $\La=N\ze$, the real part plays a more important role than the imaginary part.}
    \vskip-3cm
    \includegraphics[width=0.9\textwidth,height=0.7\textheight] {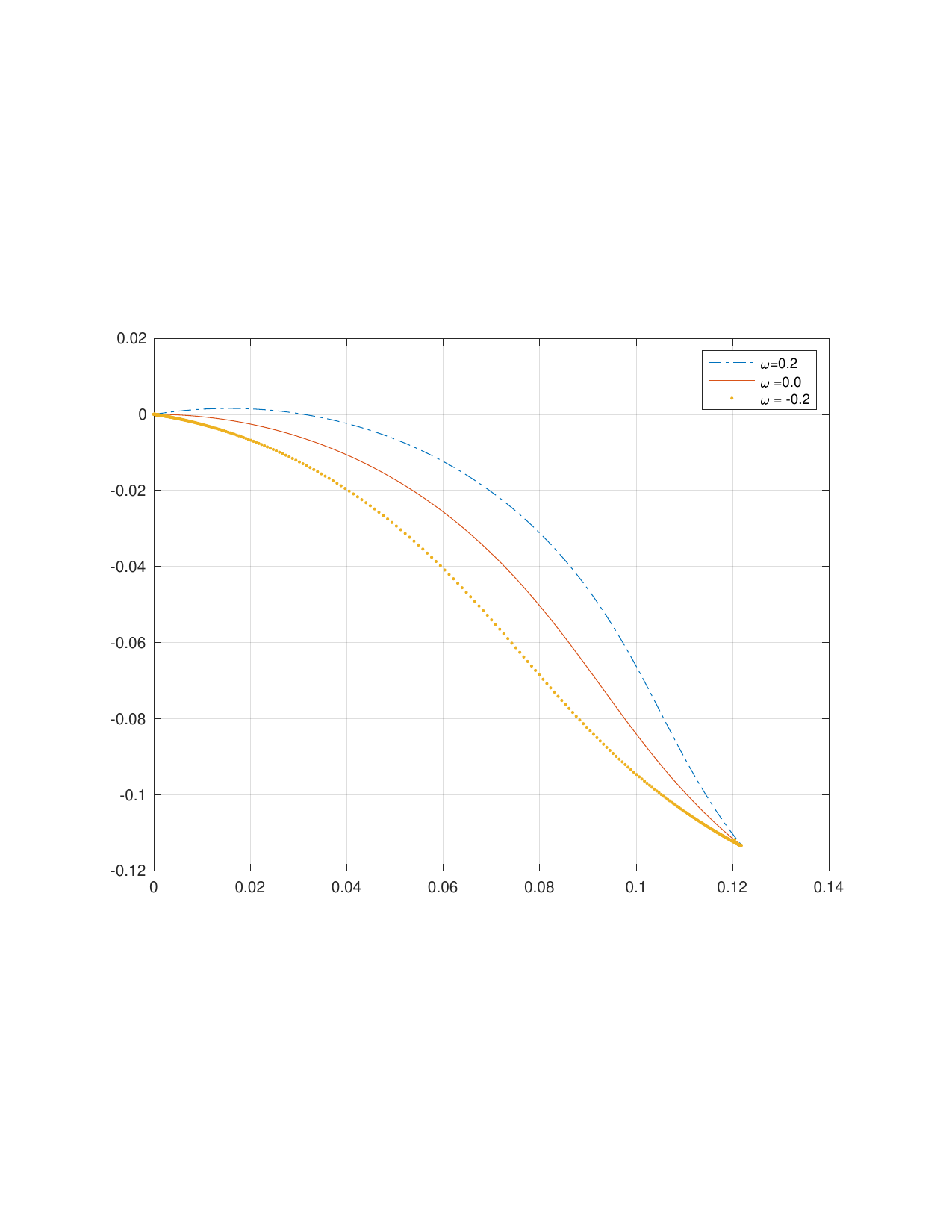}
    \label{phi-curves}
\end{figure}
However, for large $|\xi|$, the difference $\phi(\xi,\tau)-c_{\infty}(\tau)\xi$ can be large,
and the sign of $\Re (\phi(\xi,\tau)-c_{\infty}(\tau)\xi)$ can strongly depend on a numerical scheme chosen
to evaluate $\phi(\xi,\tau)$.

\subsection{Ad hoc recommendation for the choice of the cone of analyticity}
The choice of the cone $\cC_{\gam,\gap }$  is necessary to give a recommendation 
for the choice of the crucial parameters $\om$ and $d$, and then $\ze$ and $\La$ of the SINH-acceleration method.
We conjecture that the asymptotic formulas \eq{hinfty}-\eq{cinf} are valid and assume that the spot
is 1. We set $c^{re}_\infty (\tau)= \Re c_\infty(\tau)$, $c^{im}_\infty(\tau)=c_\infty(\tau)$, and  
 find $\cC_{\gam,\gap }$ as a cone such that for any ray $e^{i\ga}\bR_+\subset \cC_{\gam,\gap}$,
 the real part of
$
-[i\xi\ln K + i c^{im}_\infty(\tau)+c^{re}_\infty(\tau)]e^{i\ga}$ is positive. Define
\[
\ga^0(K, c_\infty(\tau))=\arctan((\ln K+c^{im}_\infty(\tau))/c^{re}_\infty(\tau)). 
\]
The admissible $\ga$ must satisfy  $\ga-\ga^0(K, c_\infty)\in (-\pi/2,\pi/2)$, hence, \[
\gam:=\gam(K,\tau)=-\pi/2+\max\{0,\ga^0(K, c_\infty(\tau))\},\
\gap: =\gap(K,\tau)= -\pi/2+\min\{0, \ga^0(K, c_\infty(\tau))\},
\]
and approximately optimal recommendation for the choice of $\om, d$ are as follows.
\begin{enumerate}[I.]
\item
For one strike $K$,  $\om(K,\tau) = (\gap(K,\tau)+\gam(K,\tau))/2$,  
$d(K,\tau)=k_d(\gap(K,\tau)-\gam(K,\tau))/2$, where $k_d<1$ is close to 1, e.g., $k_d=0.9$.
\item
Given a set of strikes $\cK=(K_j)_{j=1}^m$, the same $\om$ and $d$ can be used for all $K\in \cK$
only if $\max_K\gam(K,\tau)=:\gam<\gap:=\min_K\gap(K,\tau)$, and 
then we recommend the choice
$\om(\tau) = (\gap(\tau)+\gam(\tau))/2$ and $d(\tau)=k_d(\gap(\tau)-\gam(\tau))/2$.
 \end{enumerate}

\subsection{Fractional Adams method and its modifications}\label{ss:Adams_modif}

\subsubsection{Adams method} 
 One 
 fixes a uniform grid $(t_j)_{j\in \bZ_+}$, $t_j=j\De$, and 
calculates the approximations $\hh(\xi,t_k), k=1,2,\ldots,$ in two steps. First, the predictor
$\hh^P(\xi,t_k), k=1,2,\ldots,$ is calculated, and then the more accurate approximation $\hh(\xi,t_k), k=1,2,\ldots.$
To calculate the former, the rectangular rule is used while the latter is evaluated using the trapezoid rule. At the first steps, in the region of large $|\xi|$ and small $t_j$,
significant errors appear. The errors are especially clearly seen
at the first step of the induction procedure, which we write explicitly:
\bbe\label{hhP10}
\hh^P(\xi, t_1)=b_{0,1}F(\xi,\hh(\xi,t_0)=b_{0,1}(-0.5(\xi^2+i\xi)),
\ee
where $b_{0,1}=\De^\al/\Ga(\al+1)$. The RHS of \eq{hhP10} is of the order of $\De^\al|\xi|^2$, however,
\bbe\label{Ash0}
\hh(\xi, t)= \frac{-0.5(\xi^2+i\xi)}{\Ga(\al+1)}t^\al(1+o(1)), 
\ee
uniformly in $(\xi,t)$ in the region $\{(\xi,t)\ |\ 0\le t^\al |\xi|^2< c\}$. See \cite{RoughNotTough},
where the full asymptotic expansion is calculated. We construct modifications of the Adams method
changing the prediction step so that the asymptotics \eq{Ash0} is taken into account.
We use the same coefficients $a_{j,k}$ as in the fractional Adams method. 
Explicitly, for $k=0,1,\ldots, M-1$, set 
\[
a_{k+1,k+1}=\frac{\De^\al}{\Ga(\al+2)}, \
a_{0,k+1}=\frac{\De^\al}{\Ga(\al+2)}(k^{\al+1}-(k-\al)(k+1)^\al),\]
and, in the cycle $j=1,2,\ldots, k$, calculate
\[
a_{j,k+1}=\frac{\De^\al}{\Ga(\al+2)}((k-j+2)^{\al+1}+(k-j)^{\al+1}-2(k-j+1)^{\al+1}).
\]
\subsubsection{Modification BL in \cite{RoughHestonWeMarco2025} a.k.a. Modification III in \cite{RoughHestonWe2024}}
Introduce the scaled unknown and its leading asymptotic part
\[
\tilde\hh(\xi,t):=(1+|\xi|)^{-1}\hh(\xi,t),\qquad
\tilde\hh_{as}(\xi, t):=(1+|\xi|)^{-1}\hh_{as}(\xi, t),
\]
where
\[
\hh_{as}(\xi, t)= -\frac{1}{2}\,\frac{(\xi^2+i\xi)}{\Gamma(\al+1)}\,t^\al.
\]
Define the scaled remainder $\tilde\hh^1(\xi,t):=\tilde\hh(\xi,t)-\tilde\hh_{as}(\xi, t)$, and use, in place of the function
$F(\xi,h)$ given by \eq{defF}, the transformed version
\bbe\label{defFas1_tilde}
\tilde F_{as1}(\xi,\tilde h_{as},\tilde h^1)
=\ga(i\xi\rho\nu-1)(\tilde h_{as}+\tilde h^1)+(1+|\xi|)\frac{(\ga\nu)^2}{2}(\tilde h_{as}+\tilde h^1)^2.
\ee
Set $\tilde\hh^1(\xi, 0)=0$, and then, in a cycle $k=0,1,\ldots, M-1$,
\begin{enumerate}[1.]
	\item Calculate $\tilde\hh_{as}(\xi, t_{k+1})$, then evaluate the predictor
	\beqa\label{tAsk}
	\tilde\hh_0(\xi,t_{k+1})&=&\sum_{0\le j\le k}a_{j,k+1}\,
	\tilde F_{as1}\big(\xi,\tilde\hh_{as}(\xi,t_j),\tilde\hh^1(\xi,t_j)\big),\\
	\label{thh1k}
	\tilde\hh^1(\xi,t_{k+1})&=&\tilde\hh_0(\xi,t_{k+1})+a_{k+1,k+1}\,
	\tilde F_{as1}\big(\xi,\tilde\hh_{as}(\xi,t_{k+1}),\tilde\hh^1(\xi,t_{k+1})\big).
	\eqa
	\item For $m=1,\ldots,n$, perform the Picard correction
	\bbe\label{titerk}
	\tilde\hh^1(\xi,t_{k+1})=\tilde\hh_0(\xi,t_{k+1})+a_{k+1,k+1}\,
	\tilde F_{as1}\big(\xi,\tilde\hh_{as}(\xi,t_{k+1}),\tilde\hh^1(\xi,t_{k+1})\big).
	\ee
	\item Set the unscaled value
	\bbe\label{recon_h}
	\hh(\xi,t_{k+1}):=(1+|\xi|)\big(\tilde\hh^1(\xi,t_{k+1})+\tilde\hh_{as}(\xi, t_{k+1})\big).
	\ee
	\item Calculate the integral on the RHS of \eq{chFRough2} using the trapezoid rule.
\end{enumerate}

\begin{rem}{\rm 
\begin{enumerate}[(a)]\item
A pseudo-code implementation scheme for the BL Modification can be found in Appendix. 
\item
	One can use asymptotic expansions of higher orders but in our numerical experiments
	with the two-term asymptotic expansion, the latter brings no advantages.\item The accuracy of the numerical scheme can be significantly increased using the time-grid depending on $\xi$, and non-uniform grids.
	\item
	The accuracy of calculations can be increased using  grids $\{t_k\}$ depending on $\xi$. If calculations
	of $\phi(\xi, T)$ for each $\xi$ in the pricing formula are parallelized, the CPU time decreases.
Non-uniform grids can be indispensable for pricing options of long maturities.
	\item
	The CPU time of the BL-modification can be decreased sizably and accuracy somewhat improved using the standard prediction-correction method starting with some $t^*_k(\xi)$ and larger step starting at $t^*_k(\xi)$.
	Indeed, outside a neighborhood of 0 (which depends on $\xi$ and the parameters of the model),
	the solution is sufficiently regular and not small, hence, there is no advantage to use the leading term of asymptotics. 
We leave to the future the study of an approximately optimal choice of $t^*_k(\xi)$.
\item
In the numerical examples, we use $t^*_k(\xi)=0.02$ for all $\xi$,
and call the resulting hybrid method the \emph{BL-Adams method}.
\item
Outside a neighborhood of zero, alternative methods can be used. See \cite{RoughNotTough,Radoicic-Gatheral} and the bibliographies therein. These methods can be more efficient outside a small vicinity of zero than the \emph{BL-Adams method}. We leave the study of improvements of our technique along this  line to the future. 
Improvements of this kind can be very useful for option pricing and study of the ATM skew in the long run.
 	\end{enumerate}}
\end{rem}

\begin{rem}\label{gen} {\rm BL-modification can be applied to more general kernels $K(t)$. 
Assume that (after the rescaling, if necessary) $K(t)=t^{\al-1}(1+o(1))$ as $t\to 0+$. Then BL-modification
is applied exactly as in the case $K(t)=t^{\al-1}$. Naturally, the coefficients of the Adams method must be recalculated
for the given $K$. If more than one term of the asymptotics of $K(t)$ is available, additional asymptotic terms in BL-modification can be added. Additional terms can be added if $K(t)=t^{\al-1}$ as well but we found that adding one or two additional terms does not improve the performance of BL-modification. 
If the kernel is regularized in the vicinity of 0, e.g., $K(t):=K(\eps, t)=(\eps+t)^{\al-1}(1+o(t))$, where $\eps>0$ is a small parameter, then, in the formula for the leading term of $h(t)$, one replaces $t^\al$ with $(\eps+t)^\al$. }
\end{rem}

\subsubsection{Modification SQ}\label{mod SQ}
We use the label SQ  to denote seemingly natural modification of the Adams method: due to
a special form of the integrand, $\hh(\xi,t_{k+1})$ can be calculated explicitly solving  the quadratic equation for  $\hh(\xi,t_{k+1})$. The errors of a numerical method for the solution
of an integral equation strongly depend on the regularity of the solution. 
In the BL modification, we solve the equation for the auxiliary function which is more regular
than the initial one. 
In Modification SQ, the unknown function remains the same, hence, the errors in the region of small $t$ remain large.
There are two additional disadvantages. First, the solution of the quadratic equation is more time
consuming that the simple iteration method at each step of the Adams method. For the same
$t$-grids, the CPU time is 10-20\% larger; the cumulative error of the simple iteration
with 5-10 steps is smaller than the error of the Adams method (or its BL modification) itself, hence,
a more accurate evaluation of $\hh(\xi,t_{k+1})$ brings no significant advantages.
Secondly, the expression for the square root in the formula for $h_{k+1,k+1}(\xi,t)$
may cross the cut $(-\infty,0]$ as $t$ increases and then the pricing algorithm produces an incorrect result.
Finally, in the application to evaluation of the characteristic function in the rough Heston model,
we observe that the errors of the SQ modification are larger than the errors of the Adams method
and the BL modification.

\subsubsection{Errors of the evaluation of $h(\xi,\tau)$}
We evaluate $h(\xi,\tau)$ for $\tau\le 1M$ and $\xi$ on the line $\{\Im\xi = -0.5\}$  using the standard Adams method (Adams1 in Fig.~\ref{Re_vs_BLfine_SET1_T002}-\ref{Re_Phi_vs_BL})
and the BL modification. In both cases, $M=40000$ is fairly large, and the results agree extremely well. Then we take a moderate $M=1000$
and calculate errors of the BL modification, Adams1, and SQ modification (labelled Adams2 in the figures).
It is seen that the BL modification is significantly more accurate than the standard Adams method, and
the SQ modification is significantly less accurate than the standard Adams method.

\begin{figure}
\begin{tabular}{cc}
 \begin{subfigure}[h]{0.45\textwidth}
 \centering
    \includegraphics[width=0.9\textwidth,keepaspectratio]{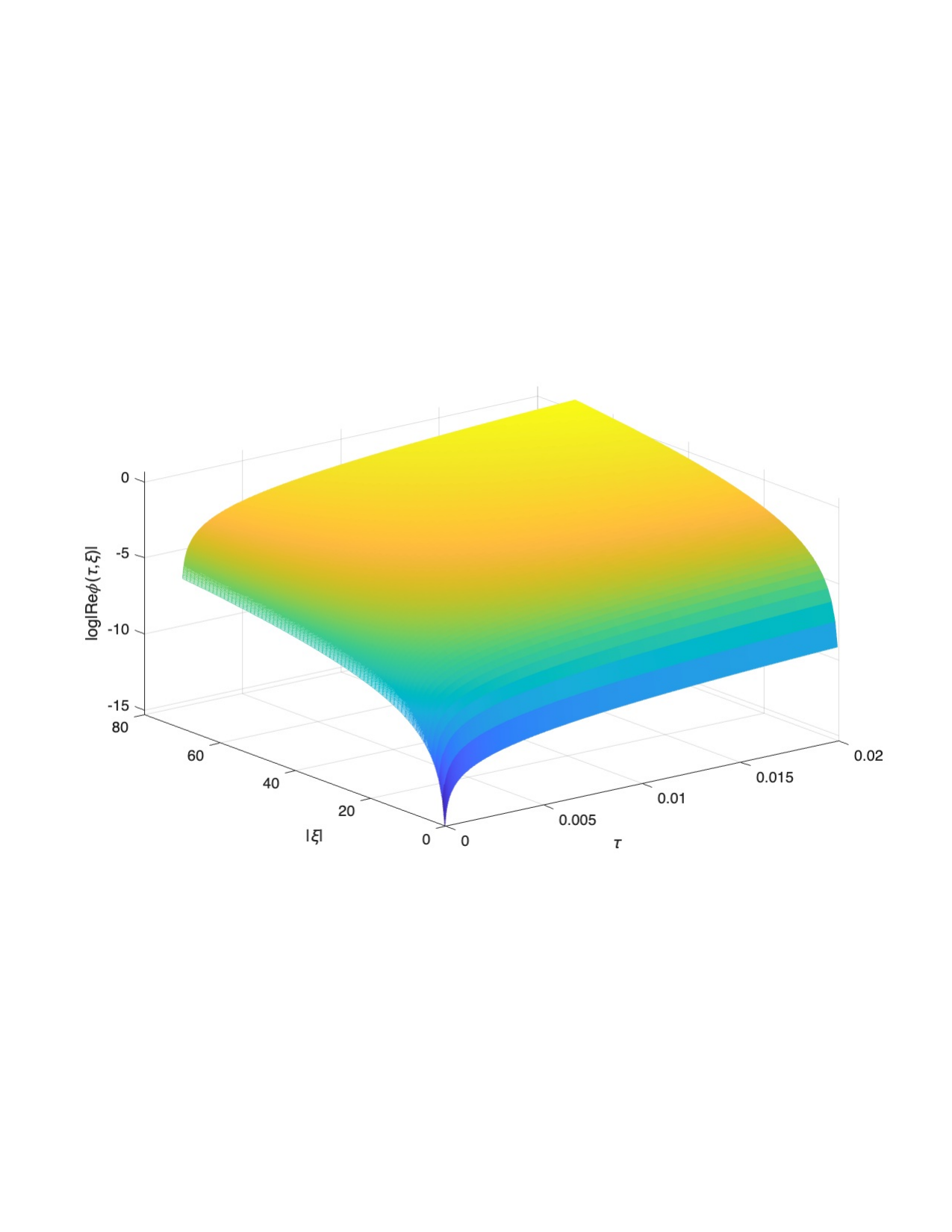}
    \vskip-2cm
    \subcaption{}
    \label{SET1_T002_ReBLfine}
\end{subfigure}
&
\begin{subfigure}[h]{0.45\textwidth}
\centering
    \includegraphics[width=0.9\textwidth,keepaspectratio]{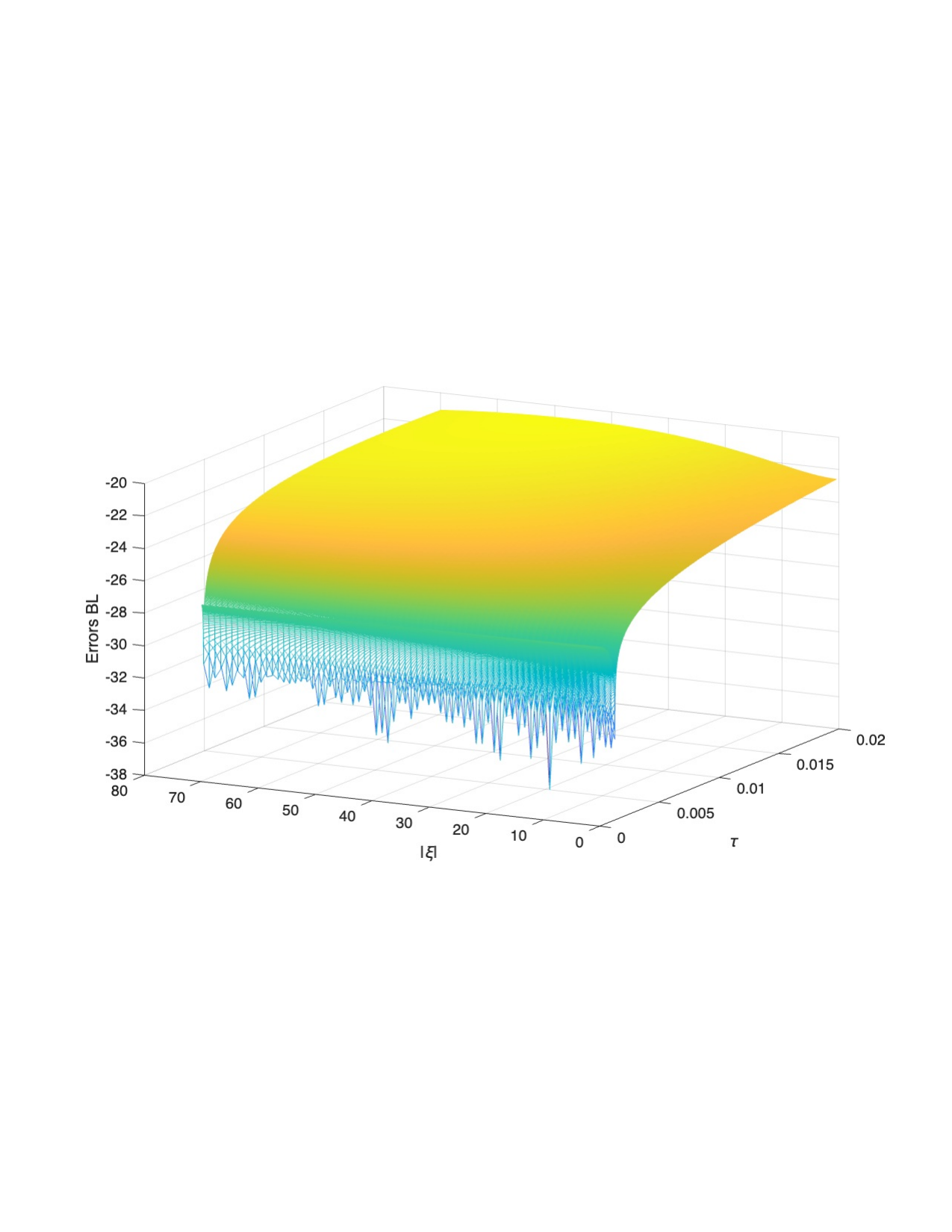} 
     \vskip-2cm
    \subcaption{} 
    \label{SET1_T002_BLfine_vs_BL.pdf}
\end{subfigure}
\\ \\
\begin{subfigure}[h]{0.45\textwidth}
 \centering
    \includegraphics[width=0.9\textwidth,keepaspectratio]{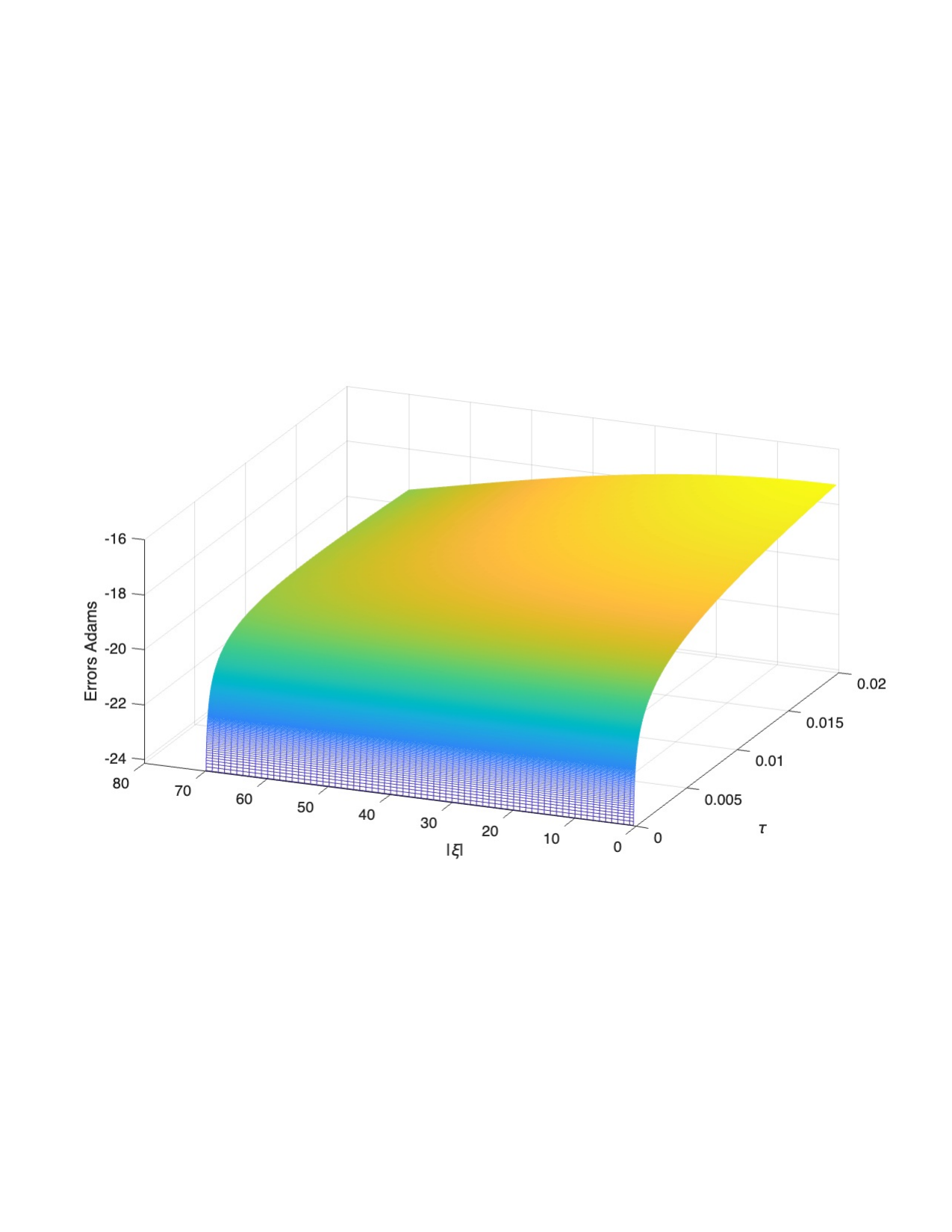}  
      \vskip-2cm  
    \subcaption{} 
    \label{SET1_T002_BLfine_vs_Adams.pdf}
\end{subfigure}
&
\begin{subfigure}[h]{0.45\textwidth}
\centering
    \includegraphics[width=0.9\textwidth,keepaspectratio]{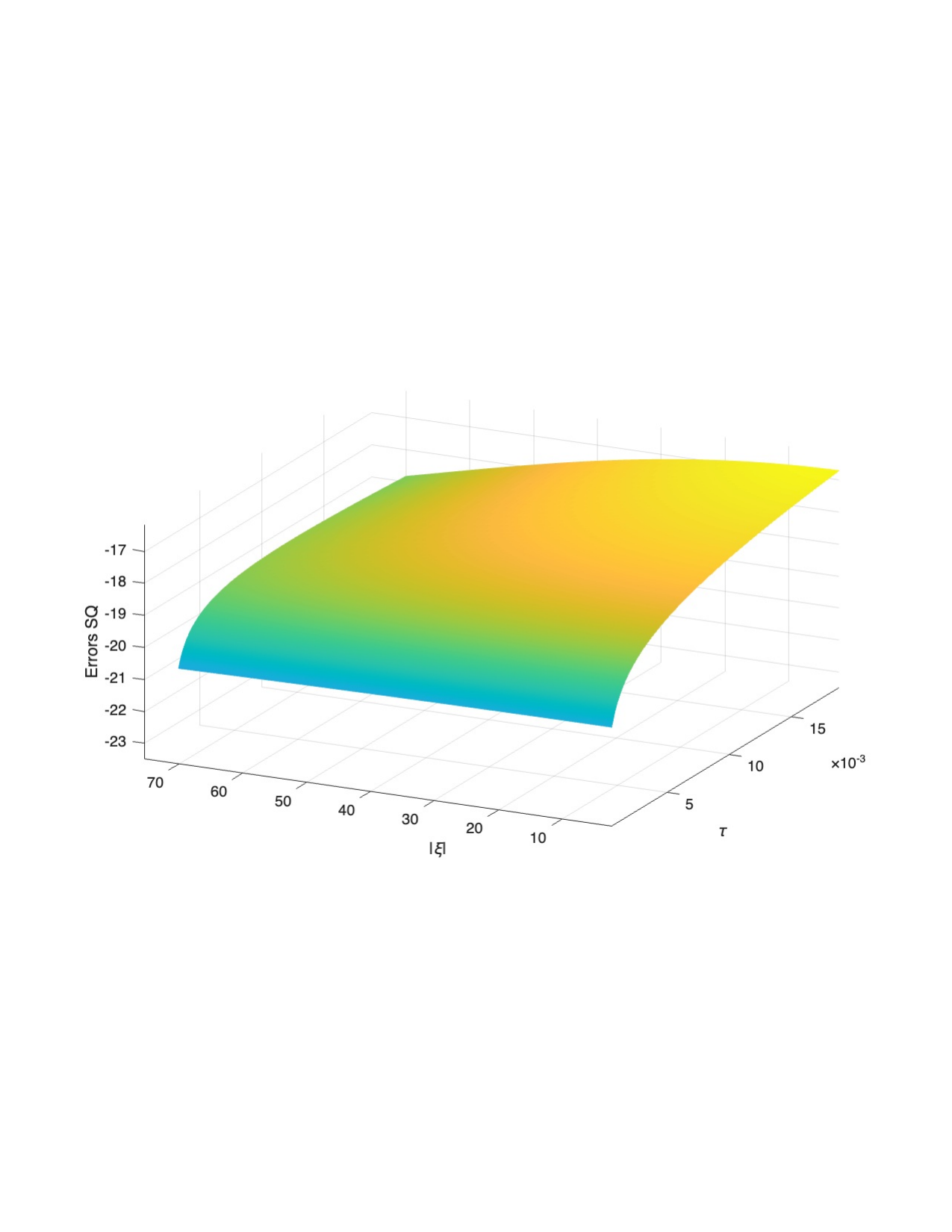}  
     \vskip-2cm 
    \subcaption{} 
    \label{SET1_T002_BLfine_vs_Adams_vs_QS}
\end{subfigure}
\end{tabular}
\caption{\small (A) Benchmark values $\Re\phi(\xi,\tau)$  are  evaluated 
using the BL modification of the Adams method with 20000 time steps. 
(B)-(D): logs of absolute values of errors  of $\Re \phi$ if $\phi(\xi,\tau)$ is  evaluated 
using the BL modification, simplified Adams method and SQ-modification, respectively, with 1000 time steps.
The number of iterations in cases (B),(C) is 5. Time to maturity $\tau\in[0.002,0.02]$, $\xi$ are on the line $\{\Im\xi=-0.5\}$.\\ 
Parameters  $(\al,\ga,\rho,\nu,\theta,v_0)=
(0.62,0.1,-0.681,0.3156,0.331,0.0392)$ }
\label{Re_vs_BLfine_SET1_T002}
\end{figure}

\begin{figure}
\begin{tabular}{cc}
 \begin{subfigure}[h]{0.45\textwidth}
 \centering
    \includegraphics[width=0.9\textwidth,keepaspectratio]{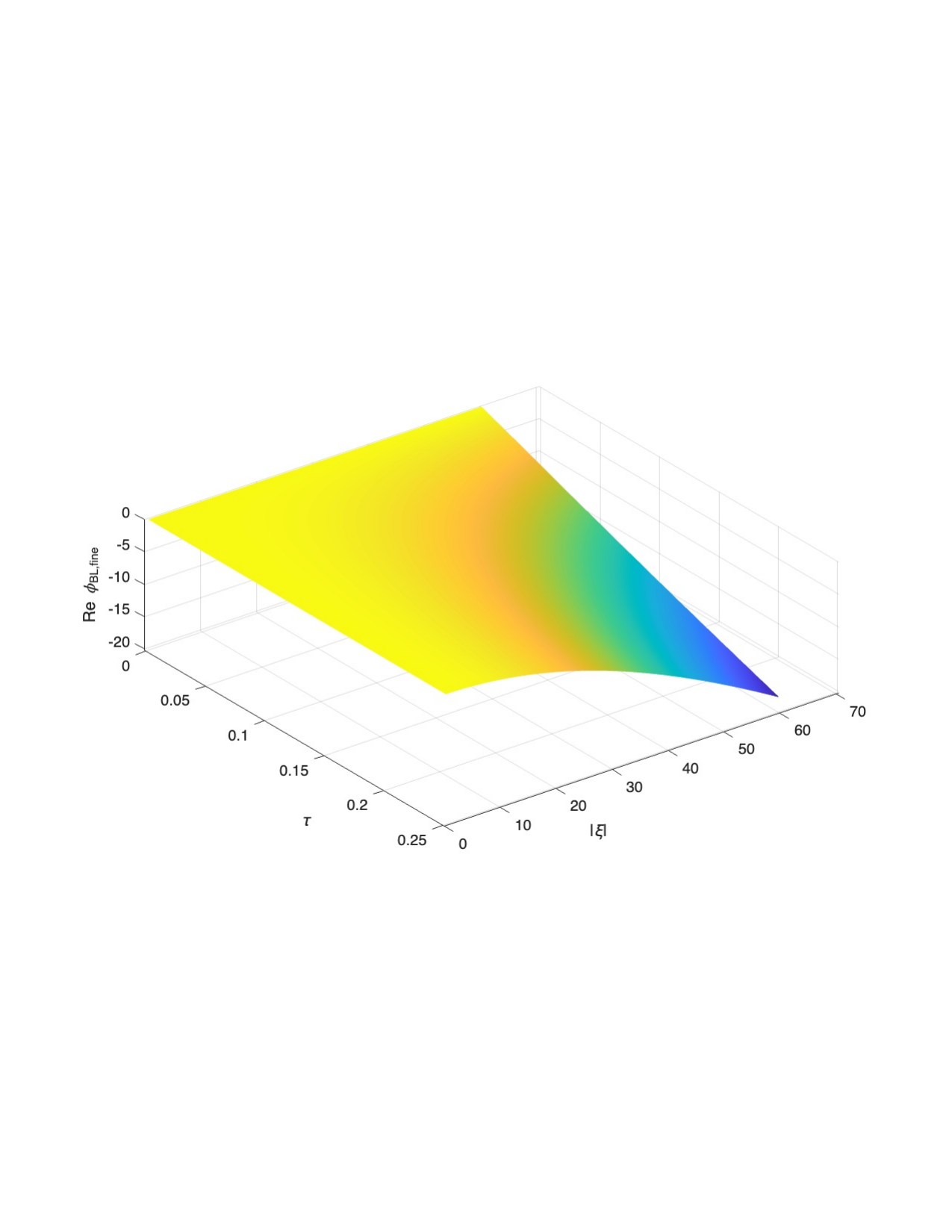}
      \vskip-2cm
    \caption{}
    \label{BLfine-1}
\end{subfigure}
&
\begin{subfigure}[h]{0.45\textwidth}
\centering
    \includegraphics[width=0.9\textwidth,keepaspectratio]{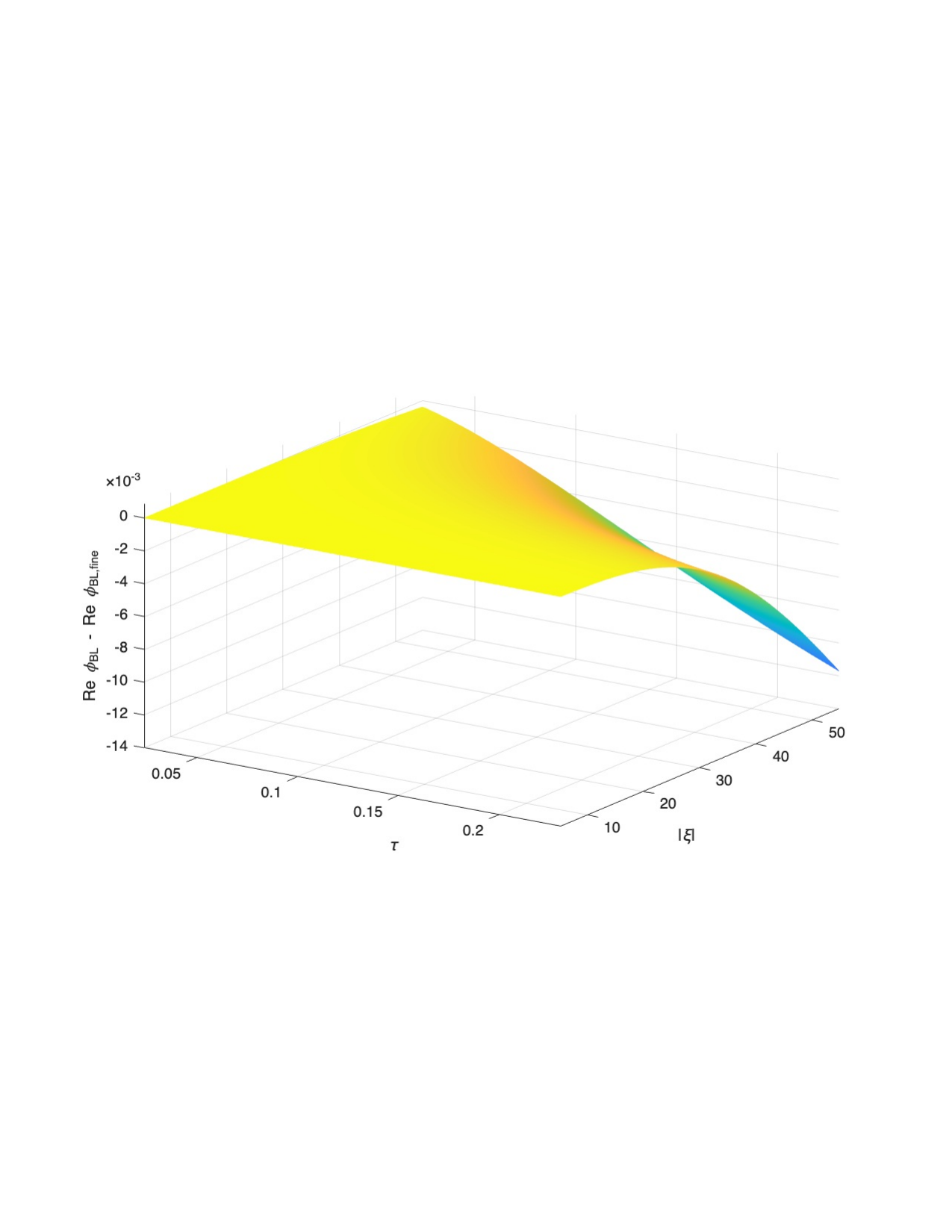} 
     \vskip-2cm
    \caption{} 
    \label{PhiBL-BLfine.pdf}
\end{subfigure}
\\ \\
\begin{subfigure}[h]{0.45\textwidth}
 \centering
    \includegraphics[width=0.9\textwidth,keepaspectratio]{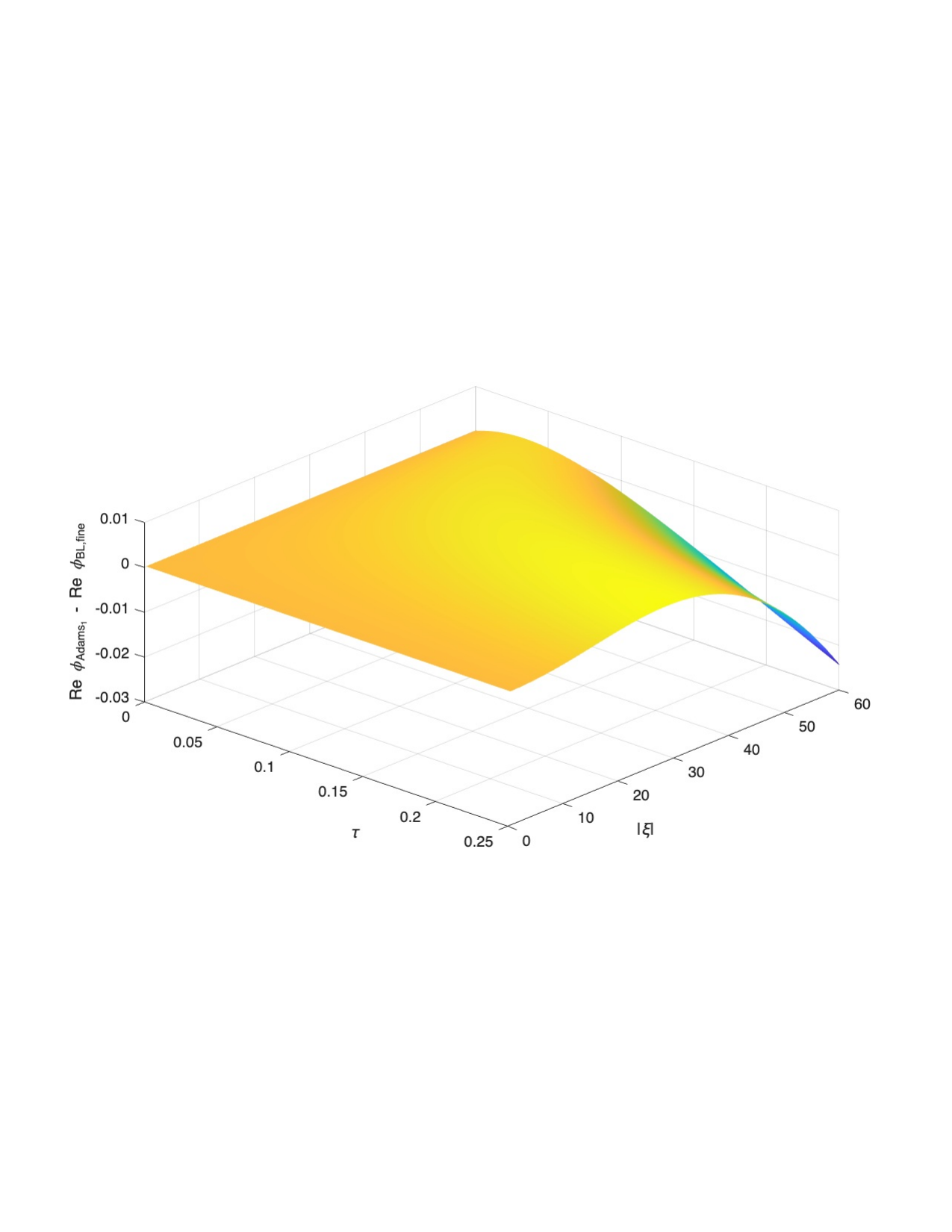}  
     \vskip-1.5cm  
    \caption{}
    \label{Adams1_vs_BLfine.pdf}
\end{subfigure}
&
\begin{subfigure}[h]{0.45\textwidth}
\centering
    \includegraphics[width=0.9\textwidth,keepaspectratio]{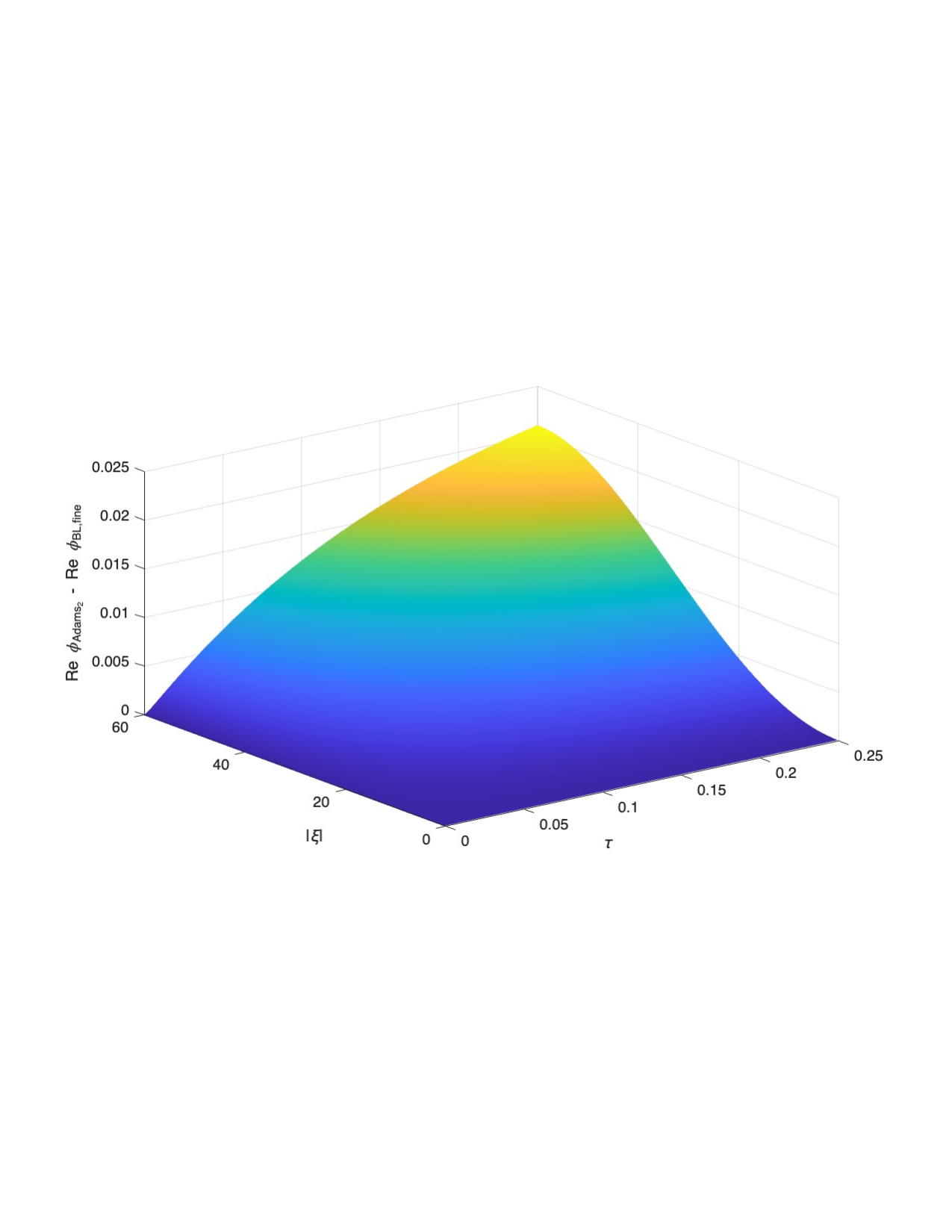}   
      \vskip-1.5cm
    \caption{} 
    \label{Adams2_vs_BL1fine}
\end{subfigure}

\end{tabular}
\caption{(A) $\Re \phi_{BL,fine}(\xi,\tau)$: $\phi(\xi,\tau)$ is  evaluated 
using the BL modification of the Adams method with 40000 time steps. 
  (B)-(D): differences $\Re\phi(\xi,\tau)-\Re \phi_{BL,fine}(\xi,\tau)$, where $\phi(\xi,\tau)$ is evaluated 
using the BL modification, Adams method and SQ modification, respectively, with 1000 time steps.
Time to maturity $\tau\in[0.0062,0.25]$, $\xi$ on the line $\{\Im\xi=-0.5\}.$\\ 
Parameters  $(\al,\ga,\rho,\nu,\theta,v_0)=
(0.62,0.1,-0.681,0.3156,0.331,0.0392)$ }
\label{Re_Phi_vs_BL}
\end{figure}

\begin{figure}
\begin{tabular}{cc}
 \begin{subfigure}[h]{0.45\textwidth}
 \centering
    \includegraphics[width=0.9\textwidth,keepaspectratio]{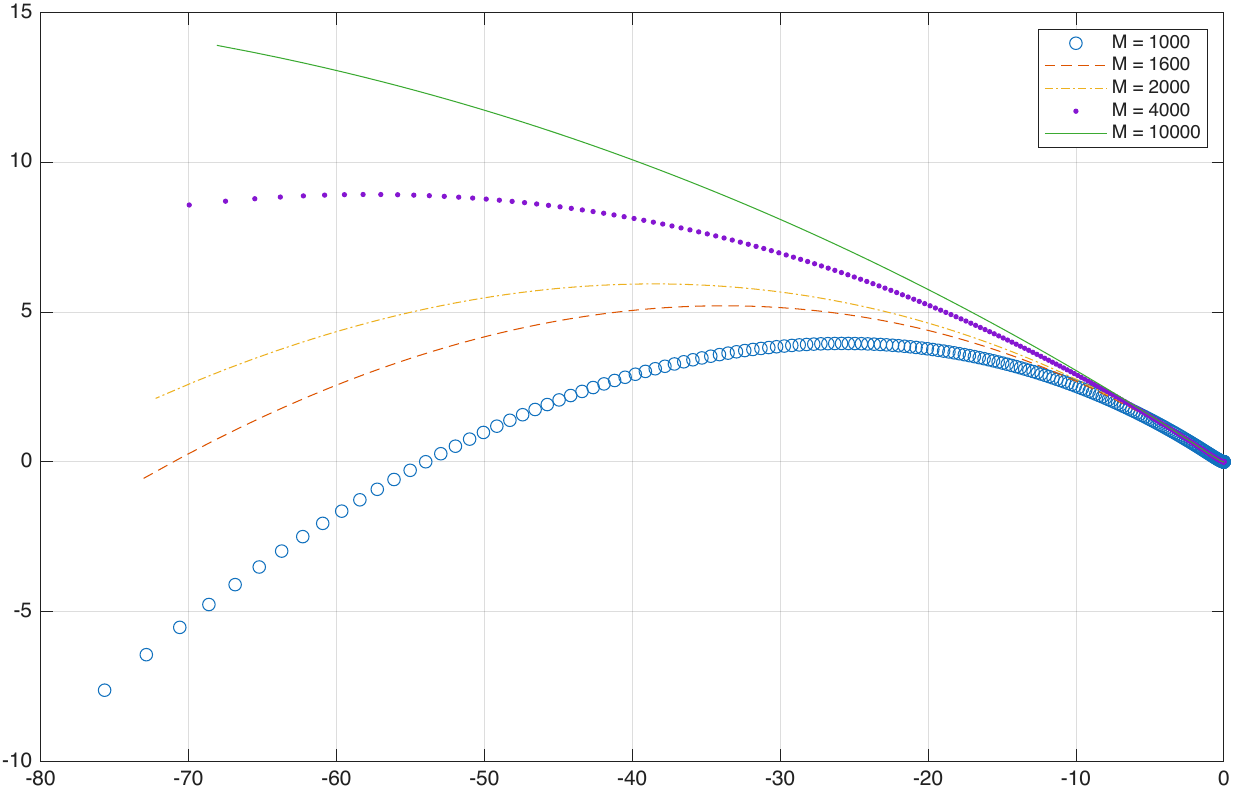}
    \label{QL200SQ}
\end{subfigure}
&
\begin{subfigure}[h]{0.45\textwidth}
\centering
    \includegraphics[width=0.9\textwidth,keepaspectratio]{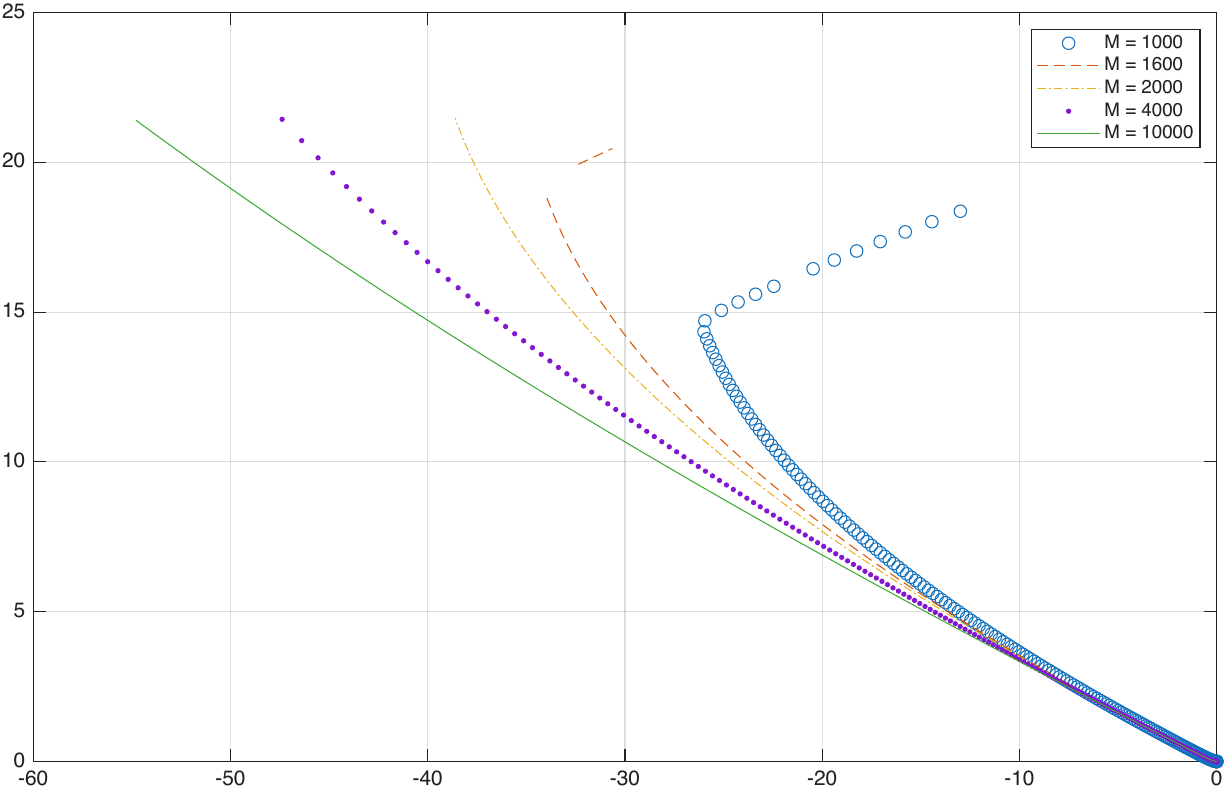} 
    \label{GapErrTSLA_1WLG200}
\end{subfigure}
\label{GL200BL}
\end{tabular}
\caption{$\phi(\tau,\xi_j)$ for $\tau=1W$ and $\xi_j=-0.5i+y_j$, where $y_j, j=1,2,\ldots, 200,$ are the nodes of the Gauss-Laguerre quadrature
of order $N=200$. Left and right panels: SQ and BL modifications are used, respectively.
Parameters  $(\alpha,\gamma,\theta,\sigma,\rho,v_0)=
(0.587271,\,3.22767,\,0.219608,\,1.49494,\,-0.310089,\,0.552303)$; $\nu=\sg/\ga$.}
\label{GL200}
\end{figure}  
In  Fig.~\ref{GL200}, we show the values of $\phi(\tau, \xi_j)$, $\tau=1/52$,
$\xi_j=-0.5i+y_j$, where $y_j$ are nodes of the Gauss-Laguerre quadrature with $200$ terms.
The curves on the left (resp., right) panel are obtained using the SQ (resp., BL) modifications.
Assuming that the conjecture about the  asymptotics of $\phi(\xi,\tau)$ as $\xi\to \infty$ is correct,
for very large $M$, the curve $(\bR_+\ni y)\to \phi(-0.5i+y, \tau)$ must have an (approximate) asymptote.
If this is the case, the curves evaluated using the QS-modification approach the asymptote from below
as $M\to\infty$, and the ones evaluated using the BL-modification - from above. One observes that 
\begin{enumerate}[(a)]
\item
the errors of the former are larger than the ones of the latter;
\item
if the SQ modification is used, $\Re\phi(-0.5i+y,\tau)\to -\infty$ as $y\to +\infty$ significantly faster than 
the correct $\Re\phi(-0.5i+y,\tau)$ which we expect to be closer to the asymptote. Hence, if $M$ is not very large, the Gauss-Laguerre quadratures with 100 nodes and more will produce seemingly rapidly converging sequence of prices;
but the errors - small but not very small - remain of the same order of magnitude;
\item
if the BL modification is used and $M$ is insufficiently large, then the sequence $\Re\phi(-0.5i+y_j,\tau)$
may start to increase, hence, the errors in prices increase and become quite substantial  as $N$ increases.
This explains the calibration/pricing errors in the example shown on \cite[Fig.~7]{RoughHestonWeMarco2025}, where the Gauss-Laguerre
quadrature with 200 nodes and BL-modification with $M=1000$ are used.
\end{enumerate}

\subsection{Markovian approximation of rough volatility}\label{sss:markov_approx_desc}

The rough Heston model is characterized by a fractional integral in the dynamics of the volatility process. The key idea of the Markovian approximation is to replace the non-Markovian fractional process with a high-dimensional Markovian process that has similar dynamics \cite{MarkovianGG}. One observes that the fractional kernel of the rough Heston model $K(t) = c_{\alpha} t^{\alpha-1}$ is a complete monotone function, hence, by Bernstein's theorem, can be represented as an integral of exponential functions:
\[
K(t) = \int_0^\infty e^{-yt} \mu(dy),
\]
where $\mu(dy)$ is a positive measure. One approximates $\mu(dy)$ by a weighted sum of
atoms, thereby approximating $K(t)$ by a finite weighted sum of exponentials:
\begin{equation}\label{e:markov-expansion}
K^n(t) = \sum_{j=1}^n c_j e^{-y_j t}.
\end{equation}
The weights $c_i$ and exponents $y_i$ are chosen in order to match the original kernel $K(t)$ as closely as possible. 
This transforms the original non-Markovian rough Heston model into a higher-dimensional, but Markovian, system which is more amenable to standard pricing techniques. Following \cite[\S 1]{MarkovianGG}, we can define
the approximation $(S^n, V^n)$ of $(S, V)$ as follows
\begin{align}
	d S^n_t &= S^n_t \sqrt{V^n_t} dB_t ,\qquad S^n_0 = S_0,\\
	V^n_t &= V_0 + \int_0^t K^n(t-s)\, \gamma (\theta -  V^n_s) \, d s + \int_0^t K^n(t-s)\gamma \nu\sqrt{V^n_s} \, d W_s.
\end{align}
In \cite{AbiJaberEuch2019} it was shown that $V^n$ solves an $n$-dimensional SDE. In \cite{MarkovianGG}, different schemes for the approximations $K^n$ are compared, including those proposed in earlier studies, e.g. in \cite{AbiJaberEuch2019}, \cite{AlfonsiKebaier2024}, \cite{MarkovianBB2021}. Of these, the scheme called ``BL2'' was shown in \cite{MarkovianGG} to be the fastest and most accurate. This rule minimizes the $L^2$ error between $K^N$ and $K$, while penalizing large nodes. The corresponding algorithm can be found in Appendix F of \cite{MarkovianGG}. 
\sbr
The  SINH-CB method can be applied in this setting to price not only European options, but barrier and
lookback options as well, after the approximation by a regime-switching L\'evy model or time discretization, and using
the schemes in \cite{EfficientLevyExtremum,EfficientDoubleBarrier2,AltFX2,EfficientDiscExtremum2025}.

\section{Flat iFT, SINH-acceleration and Conformal Bootstrap principle}\label{s:SINH-CB}

\subsection{Flat iFT and simplified trapezoid rule}\label{ss:Flat iFT and simpl. trap}
 In popular models, the (conditional) characteristic function admits  analytic continuation to a strip around the real axis.
This implies that 
the following scheme (standard from the viewpoint of  Analysis) suggested in \cite{BL-FT,genBS,KoBoL} is more efficient than the scheme in 
\cite{heston-model} based on the L\'evy inversion formula.
Let the riskless rate $r\ge 0$ be constant, and let $S_T=S_0e^{X_T}$ be the price of the underlying non-dividend
paying asset (or index) at time $T$. Let $\Phi(\xi,T)=\bE[e^{i\xi X_T}] $ be the characteristic function of $X_T$ under a no-arbitrage measure $\bQ$ chosen for pricing (the expectation is conditioned on the spot values of additional factors as in SV models).
Then $\Phi(0,T)=1$, and if $\bE^\bQ[e^{X_T}]<\infty$, $\Phi(-i,T)=e^{rT}$. 
Assume
that there exist $\mum(T)<-1<0<\mup(T)$ s.t. for $\be\in (-\mup(T),-\mum(T))$, the exponential
moments $\bE^\bQ[e^{\be X_T}]$ are finite. 
Equivalently, $\Phi(\xi,T)$ admits analytic continuation to a strip 
$S_{(\mum(T),\mup(T))}:=\{\xi\ |\ \Im\xi\in (\mum(T),\mup(T))\}$.  Then the price of the call option
with strike $K$ and maturity $T$ can be calculated as follows. 
The payoff function $G(S_0,K,x)=(S_0e^x-K)_+$ admits a representation
\bbe\label{hG}
G(S_0,K; x)=\frac{1}{2\pi}\int_{\Im\xi=\om_1}e^{ix\xi}\hG(S_0,K;\xi)d\xi,
\ee
where  $\om_1\in (\mum(T),-1)$ is arbitrary, and $\hG(S_0,K;\xi)=-Ke^{i\xi\ln (S_0/K)}/(\xi(\xi+i))$ is the Fourier transform of
$G(S_0,K; x)$ w.r.t. $x$. We  substitute the integral representation \eq{hG} of $G(S_0,K; X_T)$ into the pricing formula $V(S_0, K;T)=e^{-rT}\bE[(S_0e^{X_T}-K)_+]$, and change the order of integration and summation
(the use of the Fubini theorem can be justified in all popular models). The result is 
\bbe\label{EuroPrice}
V(S_0,K;T)=-\frac{Ke^{-rT}}{\pi}\Re\int_{\Im\xi=\om_1}\frac{e^{i\xi\ln (S_0/K)}\Phi(T,\xi)}{\xi(\xi+i)}d\xi.
\ee
Similarly, the price of the put is given by the RHS of \eq{EuroPrice} with arbitrary $\om_1\in (0,\mup(T))$
(repeat the proof for the call starting with $G(x)=(K-S_0e^x)_+$ or use the put-call parity on the LHS of \eq{EuroPrice} and
the residue theorem on the RHS). The price of the covered call is given by the RHS of \eq{EuroPrice} with $\om_1\in (-1,0)$.
Since $\overline{\Phi(\xi,T)}=\Phi(-\bar{\xi},T)$ and $\overline{\hG(\xi)}=\hG(-\bar\xi)$, an equivalent form of
\eq{EuroPrice} is
\bbe\label{EuroCallSym}
V(S_0,K;T)=-\frac{Ke^{-rT}}{\pi}\Re\int_{\Im\xi=\om_1}\frac{e^{i\xi\ln (S_0/K)}\Phi(\xi,T)}{\xi(\xi+i)}d\xi.
\ee
After truncation, the integral on the RHS of \eq{EuroPrice} (or \eq{EuroCallSym}) can be calculated using either trapezoid rule or Simpson rule.

However, since the integrand on the RHS of \eq{EuroPrice} is analytic in a strip
$S_{(\lm,\lp)}$ around the line of integration ($\lm=\mum(T), \lp=-1$ in the case of calls, $\lm=-1, \lp=0$ in the case of the covered call,
and $\lm=0, \lp=\mup(T)$ in the case of puts),
it is significantly more efficient to use the infinite trapezoid rule and then truncate the sum. The reason is
an exponential decay of the discretization error of the infinite trapezoid rule as the function of $\ze$, where $\ze$ is
the step. In Mathematical Finance, Lee \cite{Lee04} and Feng and Linetsky \cite{feng-linetsky08} were the first  to use this important property of the infinite trapezoid rule; the truncation of the sum results in
the simplified trapezoid rule. As it is stated in the review paper \cite{TrefethenWeidmanTrapezoid14}, the excellent properties of the simplified trapezoid rule had been noticed since Poisson but rigorously proved in the middle of the last century only.  
Let $H^1(S_{(\lm,\lp)})$ denote the space of functions analytic in the strip $S_{(\lm,\lp)}$ such that
\[
\int_{\lm}^{\lp} |f(\eta+i\om)|d\om \to 0\quad {\rm as}\ (\bR\ni)\eta\to\pm\infty\]
and the following analog of the Hardy norm 
is finite:
\begin{equation}\label{Hardynorm}
||f||_{S_{(\lm,\lp)}}:=\lim_{\om\uparrow \lp}\int_\bR|f(\eta+i\om)|d\eta+
\lim_{\om\downarrow \lm}\int_\bR|f(\eta+i\om)|d\eta<\infty.
\end{equation}
Fix $\om_1\in (\lm, \lp)$, and denote $d(\om_1)=\min\{\om_1-\lm, \lp-\om_1\}$. For $\ze>0$, construct a grid $\xi=i\om_1+\ze\bZ$,
and denote by $E_{\rm disc}(\zeta, \infty)$ the error of the infinite trapezoid rule
\[
\int_{\Im\xi=\om}f(\xi)d\xi\approx \ze\sum_{j\in\bZ}f(\xi_j).
\]
The following bound is proved in \cite{stenger-book} using the heavy machinery of the
sinc-functions (a simple proof can be found in \cite{paraHeston}, and several other elementary
bounds and proofs in \cite{TrefethenWeidmanTrapezoid14}):
\begin{equation}\label{discerrbound}
 \left|E_{\rm disc}(\zeta, \infty)\right|\le \frac{e^{-2\pi d(\om_1)/\zeta}}{1-e^{-2\pi d(\om_1)/\zeta}}||f||_{S_{(\lm,\lp)}}.
 \end{equation}
Let the error tolerance $\eps>0$ for the discretization error be small, and let $|\mu_\pm|$ be not too large.
 Then we 
choose $\om_1=(\lm+\lp)/2$, set $d(\om_1)=k_d(\lp-\lm)/2$, where $k_d<1$ is close to 1, e.g., $k_d=0.95$,
and use the following approximate recommendation: 
\bbe\label{ze_inf}
\ze=2d(\om_1)/\ln(100/\eps).
\ee
If the strip of analyticity is very wide, we choose a substrip around the line of integration with moderately large $|\la_\pm|$ and apply the prescription above. 

Once $\ze$ is chosen and the sum is truncated, we have the pricing formula. In the case of \eq{EuroPrice}, 
\bbe\label{EuroCallSimp}
V(S_0,K; T)=-\frac{Ke^{-rT}\ze}{2\pi}\sum_{|j|\le N}\frac{e^{i\xi_j\ln (S_0/K)}\Phi(\xi_j,T)}{\xi_j(\xi_j+i)},
\ee
where $\xi_j=i\om_1+j\ze$. The number of terms can be decreased almost two-fold: similarly
to  \eq{EuroCallSym},
\bbe\label{EuroCallSimpSym}
V(S_0,K; T)=-\frac{Ke^{-rT}\ze}{\pi}\Re\sum_{0\le j\le  N}(1-\de_{j0}/2)\frac{e^{i\xi_j\ln (S_0/K)}\Phi(\xi_j,T)}{\xi_j(\xi_j+i)},
\ee
where $\de_{jk}$ is the Kronecker symbol. We call this method Flat iFT (flat inverse Fourier transform) method.
To choose $N$ so  that the truncation error is sufficiently small, it is necessary to know the rate of decay
of $\Phi(\xi,T)$ as $\xi\to\infty$ along the contour of integration. 
Let $\Phi(\xi,T)=\exp[\phi(\xi,T)]$, 
 and let 
an upper bound for $\Re \phi(\xi,T)$ be known:
\bbe\label{upperphi0}
\Re \phi(\xi,T)<-g(|\xi|,T),
\ee
where $g(|\xi|,T)$ is a monotonically increasing function of $|\xi|$.
Then the truncation of the series
at $|\xi|=\La_0$ introduces the error of the order of  $e^{-g(\La_0,T)}/\La_0$.
If an analytic formula for $\phi(\xi,T)=\ln\Phi(\xi,T)$ is available,
then an efficient bound \eq{upperphi0} can be derived. See \cite{paraHeston,pitfalls}. In the case of the rough Heston model, an analytic formula is not available. In Sect.~\ref{ss:asymp_psi}, we use an informally proved asymptotic formula 
(see Sect.~\ref{AsymRough}) to formulate a prescription for the choice of $\La_0$.  In the case of Flat iFT, the asymptotic formula and the prescription 
should be used with $\om=0$. After $\La_0$ is calculated, we
set $N=\mathrm {ceil}\,\La_0/\ze$.


\subsection{SINH-acceleration}\label{ss: SINH}
The SINH-acceleration is applicable if the (conditional on the spot value of the underlying
and additional factors, as in the case of SV models) characteristic function
$\Phi(\xi,T)$ of $X_T$ admits analytic  continuation to the union of a strip 
$S_{(a,b)}:=\{\xi\ | \ \Im\xi\in (a,b)\}$, where $a\le -1<0\le b$, and a cone around $-0.5 i+\bR$.
For the Heston model and rough Heston model, conditions on the parameters of the model and $T$ which guarantee 
analyticity of $\Phi(\xi,T)$ in the strip $S_{[-1,0]}$ are known \cite{Lee04,paraHeston,GerholdGersteneckerPinter2019}. If $\Phi(\xi,T)$ admits analytic continuation to $S_{[-1,0]}$, then the existence of a cone of analyticity around $-0.5 i+\bR$ is proved in \cite{paraHeston}
for the Heston model, and in \cite{pitfalls} for wide classes of SV models and models with stochastic interest rates. We surmise that the (conditional) characteristic function in the rough Heston model
enjoys the same property but we were unable to prove this fact. In Sect.~\ref{ss:CP principle}, we explain how the CB principle
can be used to heuristically establish this property for any set of the parameters of the rough Heston model.


In the Heston model and rough Heston model, the integral \eq{EuroPrice} can be calculated sufficiently accurately for applications using the Flat iFT method with 200-400 terms if the time to maturity is not too short. The errors depend on the choice of the line of integration. In the real-analytic interpretation \cite{carr-madan-FFT},  choices of different lines
of integration are choices of different {\em dampening factors}. In Complex Analysis, one observes that the Fourier transform $\hf$ of a sufficiently regular function $f$ is an analytic function in a wide region $\cU_0$ of the complex plane and meromorphic function in a wider 
region $\cU$.  We choose $\cU$ so that $\hf(\xi)\to 0$ sufficiently fast as 
$\xi\to \infty$ remaining in $\cU$. The inverse Fourier transform can be calculated deforming the line of integration into any sufficiently regular curve in $\cU_0$; crossing poles, one can reduce
to the integral over any sufficiently regular curve in $\cU$ (plus residues at the poles crossed
in the process of deformation).

In all popular models bar stable  L\'evy models different from BM, $\Phi(\xi,T)$ admits analytic continuation to a region of the form
 $\cU(\gam,\gap; \mup, \mum):= i(\mup,\mum)+(\cC_{\gam,\gap}\cup(-\cC_{-\gap,\gam})\cup\{0\})$,
where $\mum<-1<0<\mup$, $\cC_{\gam, \gap}:=\{\xi\in \bC\ | \mathrm{arg}\,\xi\in (\gam,\gap)\}$.
In the case of L\'evy models, $\cU(\gam,\gap; \mup, \mum)$ is independent of $T$. In the Heston model and other SV models, the domain of analyticity depends on $T$. In the case of the Heston model, under additional restriction on
the parameters, it is  proved in \cite{Lucic}  
that $\Phi(\xi,T)$ is analytic in $\cU=\bC\setminus i((-\infty,\mum(T)]\cup [\mup(T),+\infty))$, where $\mum(T)<-1<0<\mup(T)$; in  \cite{paraHeston}, this fact is proved for jump-diffusion generalizations of the Heston model, with more than one factor driving the dynamics of the volatility process,
and algebraic equations for $\mum(T)$ and $\mup(T)$ were derived. 
For wide classes of affine jump-diffusion processes, it is proved in \cite{pitfalls} that $\Phi(\xi,T)$ is an analytic 
function on the union  $\cU_0(\mum(T),\mup(T),\gam,\gap)$ of a strip
$S_{(\mum(T),\mup(T))}$, where $\mum(T)<-1<0<\mup(T)$, and a cone
$\cC_{\gam,\gap}:=\{\xi=\rho e^{i\varphi}\ |\ \varphi\in (\gam,\gap) \vee 
 \varphi\in (\pi-\gam, \pi-\gap) \}$, where $\gam\in (-\pi/2,0), \gap\in (0,\pi/2)$
 (typically, $\ga_\pm=\pm\pi/4$; in the case of the Heston model, $\ga_\pm=\pm\pi/2$), and decays as $\xi\to \infty$ remaining in the cone. 
 Once the existence of such a strip and cone is established, we choose a deformation of the contour of integration  into a contour $\cL_{\om_1,b,\om}:=\chi_{\om_1,b,\om}(\bR)$, where $\om_1\in \bR$, $b>0$, $\om\in (\gam,\gap)$,
 and the conformal map $\chi_{\om_1,b,\om}$
 ({\em sinh-deformation}) is defined by 
 \bbe\label{eq:sinh}
 \chi_{\om_1,b,\om}(y)=i\om_1+b\sinh(i\om+y).
 \ee
 The parameters of the deformation
 are chosen so that in the process of deformation, the contour remains in  $\cU(\gam,\gap; \mup, \mum)$,
 the oscillating factor becomes a fast decreasing one and the poles at $\xi=0, -i$ are not crossed.  
 The deformation being made,
 we change the variable $\xi=\xi(y)=\chi_{\om_1,b,\om}(y)$ in \eq{EuroPrice}
 \bbe\label{EuroPricesinh}
V(S_0, K;T) = -\frac{bKe^{-rT}}{2\pi}\int_{\bR}\frac{e^{i\xi(y)\ln(S_0/K)}\Phi(\xi(y),T)}{\xi(y)(\xi(y)+i)}\cosh(i\om+y)dy,
\ee
and apply the simplified trapezoid rule: 
\bbe\label{EuroPricesinhtrap}
V(S_0, K;T) = -\frac{b\ze Ke^{-rT}}{\pi}\Re\sum_{j=0}^N e^{i\xi(j\ze)\ln(S_0/K)}g(j\ze,T)(1-\de_{0j}/2),
\ee
where
$
g(y,T)=\frac{\Phi(\xi(y),T)}{\xi(y)(\xi(y)+i)}\cosh(i\om+y)$. 

The change of variables $\xi=\chi_{\om_1,b,\om}(y)$ makes the integrand a fast decaying one in a strip of the form $S_{(-d,d)}$, hence, the error of the truncation of the infinite sum is easy to control.
 Fig.\ref{fig:two deformations} illustrates the sinh-acceleration. 
 In essentially all cases we tried, an absolute error tolerance of the order of $E-09$ can be satisfied with 20-60 terms of the simplified trapezoid rule. 

\begin{figure}
\scalebox{1.}
{\includegraphics{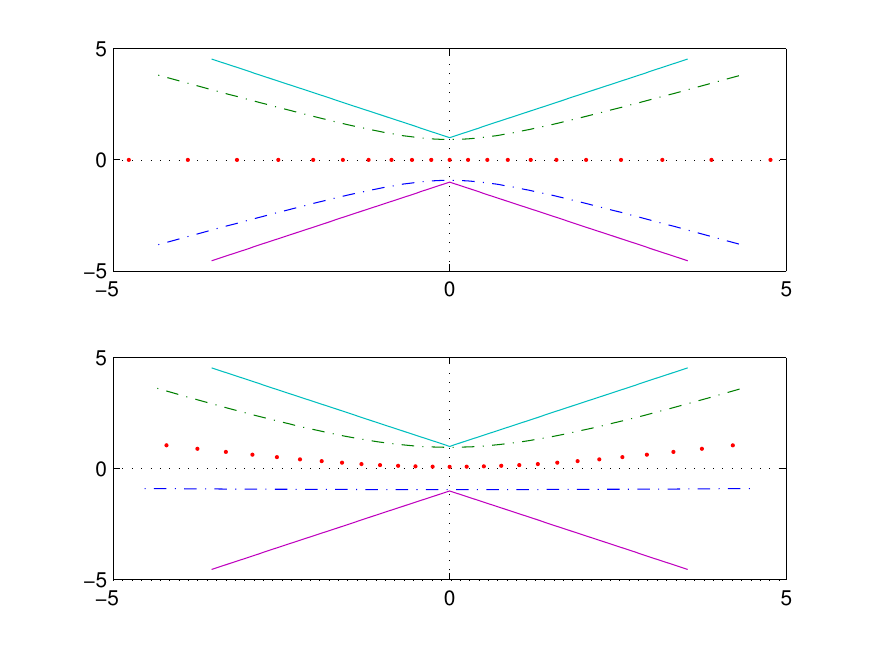}}

\caption{\small Solid lines: boundaries of the domain of analyticity $S_{(-1,1)}+\cC_{-\pi/4,\pi/4}$
in the $\xi$-coordinate.
Dots: points $\xi_j=\chi_{\om_1,\om;b}(y_j)=i\om_1+b\sinh(i\om +y_j)$ used in the simplified trapezoid rule.
Dots-dashes: boundaries of the image $\chi_{\om_1,\om;b}(S_{(-d,d)})$ of the strip of analyticity $S_{(-d,d)}$.
 Upper panel: $\om_1=\om=0$, $d=\pi/4$, $b=1/\sin(\pi/4)$. Lower panel: $\om_1=-1, \om=d=\pi/8$, $b=2/\sin(\pi/8)$.
 For the calculations represented in the lower panel, only a smaller domain $S_{(-1,1)}+\cC_{0,\pi/4}$ matters.}
\label{fig:two deformations}
\end{figure}  
Explicit recommendations for the choice
of the parameters of the deformation $\om_1,b,\om$ and parameters $\ze,N$ of the simplified trapezoid rule
are derived in \cite{SINHregular}. We add several useful details for the case of the rough Heston model.
\begin{enumerate}[I.]
\item
Find $\mu_\pm(T)$ and $\ga_\pm$. 
\item
Calculate $z_T$ using \eqref{zT} below. If $z_T\le 0$, use $\om\le 0$ and calculate the price of either the call or covered call; 
otherwise, use $\om\ge 0$ and  calculate the price of either the put or covered call.

\item
\begin{enumerate}[(a)]
\item
 If the call is priced, set $\lm=\mum(T), \lp=-1$, $\om=\gam/2$, $d_0=-\om$.
\item
If the put is priced, set $\lm=0, \lp=\mup(T)$, 
$\om=\gap/2$, $d_0=\om$.
\item
If the covered call is priced, set $\lm=-1,\lp=0$. If $S_0< K$, set
$\om=\gam(T)/2$, $d_0=-\om$.
If $S_0> K$, set
$\om=\gap(T)/2$, $d_0=\om$.
\item
For ATM options, it is optimal to set $\om=(\gam+\gap)/2$, $d_0=(\gap-\gam)/2$.
\end{enumerate}
\item
Choose $k_d<1$ close to 1, e.g., $k_d=0.9$, and set $d=k_dd_0$, $\ze=2\pi d/\ln(100/\eps)$,
\[
b=\frac{\lp-\lm}{\sin(\om+d)-\sin(\om-d)},\ \om_1=\frac{\lm\sin(\om+d)-\lp\sin(\om-d)}{\sin(\om+d)-\sin(\om-d)}.
\]

\item
As in the case of Flat iFT, to choose $N$ so  that the truncation error is sufficiently small, it is necessary to know the rate of decay
of $\Phi(\xi,T)$ as $\xi\to\infty$ along the contour of integration. 
Let $\Phi(\xi,T)=\exp[\phi(\xi,T)]$, and let 
an upper bound \eq{upperphi0} for $\Re \phi(\xi,T)$ be known.
 In the $y$-coordinate, the series decays as
$(K\ze b/\pi)e^{-g(|\xi(y_j)|,T)}/|\xi(y_j)|$. Since $|\xi(y)|$ increases as an exponential function of $y$ as $y\to\pm\infty$,  the truncation error is smaller than the last term of the truncated sum if $g(|\xi(y_j)|,T)$ is large.
We find the positive solution $\La_0$ of the equation $
e^{-g(\La_0,T)}/\La_0 = b\pi\eps/(K\ze)$, and set $\La=\ln(2\La_0/(Kb))$,  $N=\mathrm {ceil}\,\La/\ze$.
\end{enumerate}
\begin{rem}\label{rem:choice of ze}{\em The recommendation $\ze=2\pi d/\ln(100/\eps)$
 presumes that $||f||_{S_{(\lm,\lp)}}$, the analogue \eq{Hardynorm} of the Hardy norm of the integrand, is bounded by 100.  A safer alternative which we used in several publications is to use the approximation 
$||f||_{S_{(\lm,\lp)}}\approx |f(i(\om+d))|+|f(i(\om-d)))|$.
}
\end{rem}

\subsection{Ad-hoc bound for $\phi$ and choice of $N$ in the rough Heston model}\label{ss:asymp_psi}
To choose $\La:=N\ze$, we use \eq{eq:asphi}-\eq{cinf}.
The leading term of asymptotics of the expression under the exponential sign in the pricing formula is
\[
(-c_\infty(T) + i\ln(S_0/K))\xi = i z_T\xi - \Re c_\infty\xi,
\]
where 
\beqa\label{zT}
z_T&=& \ln (S_0/K)-\frac{\theta\rho}{\nu}T-\frac{v_0\rho}{\Ga(2-\al)}T^{1-\al},\\\label{Recinf}
\Re c_\infty&=&\left(\ga\theta T+ \frac{v_0 T^{1-\al}}{\Ga(2-\al)}\right)\frac{\sqrt{1-\rho^2}}{\ga\nu}.
\eqa
If we use the sinh-deformation with the parameters $\om_1,b,\om$, then, as $\xi\to\infty$ in the right half-plane
along the contour $\cL_{\om_1,b,\om}$, the absolute value of the integrand admits a bound via
$
H |\xi|^{-2}\exp(-c_\infty(\om)|\xi|)$, where $H$ is a constant, and 
\bbe\label{cinfom}
c_\infty(\om) = z_T\sin(\om)+\Re c_\infty \cos(\om).
\ee
Therefore, for a given error tolerance, 
an approximately optimal $\om$ is found as the maximizer of  $c_\infty(\om)$,
and then $\La$ is chosen solving (approximately, because high accuracy is unnecessary)
the equation
\[
H \exp[-(c_\infty(\om)b/2)e^\La]= \eps,
\]
which gives
\bbe\label{eq:La}
\La= \ln[2\ln(H/\eps)/(bc_\infty(\om))]
\ee
and $N= \mathrm{ceil}\,\La/\eps$.  If $\tau$ is very small or Flat iFT is used, this prescription
results in an unnecessary large $\La$ and $N$. Then $\La$ can be decreased solving
approximately the equation $
H \exp[-(c_\infty(\om)b/2)e^\La]/\La= \eps.
$ We find (an approximation to) $\La_1=e^\La$ solving the equation
\[
\La_1 = \frac{2}{bc_\infty(\om)}(\ln \La_1 + E),
\]
where $E=\ln (H/\eps)$.
The following approximation suffices: 
\[
\La_{10}:= 2E/(bc_\infty(\om)), \ \La_1 = \frac{2}{bc_\infty(\om)}(\ln \La_{10} + E), \La_1:=\max\{1.2, \La_1\}.
\]
Then we set $\La = \log(\La_1)$, $N=\mathrm{ceil}\, \La/\ze$.

\subsection{Conformal bootstrap principle}\label{ss:CP principle}
For the error control, we calculate the prices using two deformations.
  The probability (in the colloquial sense) that the difference of the two weighted sums of the values of the integrand calculated at different nodes on different curves is significantly larger than the difference of either sum and the true price is essentially zero.
This is an ad-hoc principle which we call Conformal Bootstrap principle. 
Fig.~\ref{contours-integrands} illustrates the idea behind the Conformal Bootstrap principle.
The graphs of integrands shown on the right panel 
 are real parts of the same analytic function on different contours shown on the left panel and the integrals must be
 equal if the numerical quadrature is ideal. However, if the errors of the quadrature
 do not define an analytic function, the quadrature ``treats"  integrands as functions
 with different analytic properties, hence, the probability (understood in the colloquial sense) that the errors of integration agree
 with accuracy $\eps=10^{-m}$ but the correct value of the integral differs from either of 4 values
 by more than $10^{-m+2}$ is negligible. The SINH-algorithm uses grids with spacing and number of nodes that depend on the parameters of deformation, and the integrands at nodes are calculated using
 the BL modification which does not define an analytic function. Additional advantage
 of the simplified trapezoid rule and sinh-accelerations: 
 \begin{enumerate}[1)]
 \item the discretization error decays
 exponentially as the function of $1/\ze$. Hence, if the difference of results obtained with
  $\ze$ and $\ze/1.2$ is of the order $\eps$, the discretization error remains not larger if $\ze$
  is decreased further. If one of the contours is close to a singularity
  and $\ze$ is insufficiently small, this agreement cannot be observed; 
  \item the truncation error decreases faster than exponentially or exponentially
  if the initial integrand decreases only as a power of $|\xi|$. Hence, if the difference of results obtained with
  $N$ and $1.2N$ is of the order $\eps$, the truncation error remains not larger if $N$
  is increased further.
  \end{enumerate}
  
\begin{figure}
 \caption{Contours of integration and graphs of $\Re I(y)$, where $I(y)=\hV(\tau, \chi(y))$ and $\chi(y)=i\om_1+b\sinh(i\om+y)$, as functions
of $y\in \bR_+$. Parameters of the rough Heston model: $(\al,\ga,\rho,\nu,\theta,v_0)=
(0.62,0.1,-0.681,0.3156,0.331,0.0392)$. Time to maturity $T=1/252$. Number of time steps $M=1000$ in all 4 cases.
Parameters of SINH-acceleration (rounded): \\
(1) $\om_1=-2.056, b=5.591, \om = 0.1$; 
(2)	$\om_1=-0.0230, b=	2.795, \om = 0.1$;\\
(3) $\om = -2.057, b=2.850, \om=0.2$; (4) $\om_1=-0.129, b=1.309,
\om=	0.3.$
  }
\begin{tabular}{cc}
 \begin{subfigure}[h]{0.45\textwidth}
 \centering
    \includegraphics[width=0.9\textwidth,height=0.3\textheight]{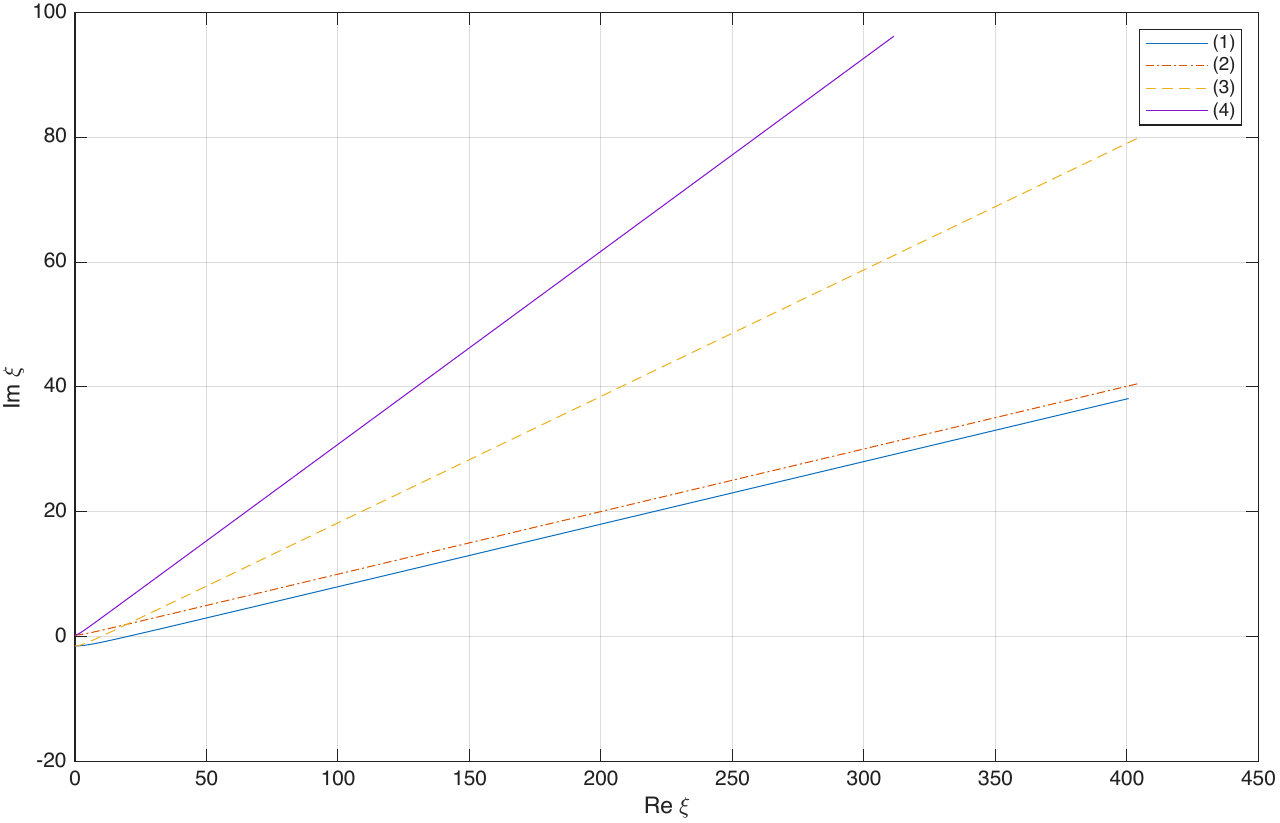}
    \caption{}\label{contours}
\end{subfigure}
&
\begin{subfigure}[h]{0.45\textwidth}
\centering
    \includegraphics[width=0.9\textwidth,height=0.3\textheight]{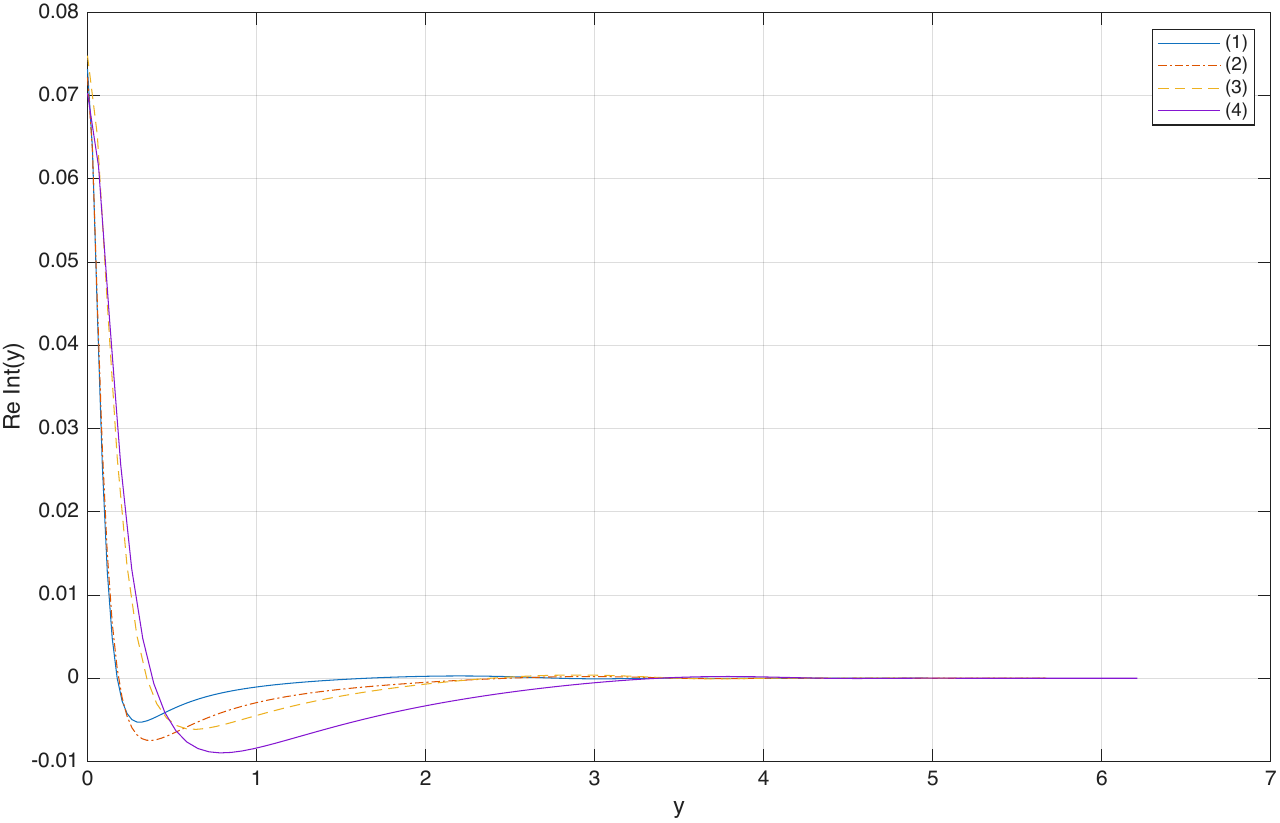}\caption{} 
    \label{logAp}
\end{subfigure}
\end{tabular}
\label{contours-integrands}
\end{figure}   
However, if each term in the chosen quadrature is an analytic function, then the difference between the \emph{true price} (price evaluated using a perfect pricer) and the sum of terms in the chosen quadrature is an analytic function,
and the CB principle will not notice the error. We observed this effect using Modification II of the Adams method in \cite{RoughHestonWe2024}: according to the CB principle, the error was of the order of E-13 in some cases; when we switched to Modification III (the BL modification), the error jumps up to E-09.
The same effect is observed if we use the rotated version of the Gauss-Laguerre quadrature and evaluate
the integrand using the SQ-modification. By design, the values at moderately large nodes become essentially zeros, and each term in the rotated Gauss-Laguerre procedure is an analytic function. We observe
 an extremely fast albeit fallacious convergence.

 In applications to option pricing, 
we use a pair of deformations for OTM puts, and a pair for OTM calls; for a given error tolerance,  
the step $\ze$ is calculated using a universal prescription. If two prices for calls do not agree well for a chosen $N$, a small number of additional terms can be easily added, which is a serious advantage as compared with adaptive quadratures. A larger  than necessary $N^*$ can be used, and prices calculated with $N^*$ and smaller $N$ compared to make sure that the truncation error is small. The resulting program is only a few lines long, and the block for the evaluation of the parameters $\om_1,b,\om, \ze$ (and $N$ if the rate of decay of $\Phi(\xi,T)$ is known) is likewise only a few lines long. Note that SINH-CB method 
can be used in all situations where  numerical Fourier inversion is applied,
and to numerical evaluation of  complicated integrals of a different nature.  
If the domain of analyticity is unknown, we 
use 2-3  deformations and compare the results as in the case when the domain is known, and,
to ensure that in the process of deformation, no pole or singularity has been crossed,
calculate the integral using $\om=0$.

Table \ref{Table_CB} illustrates how the SINH-CB works. One takes several sets of
the parameters of the SINH deformation, and calculates the option prices. If the agreement is insufficiently
good, the step $\ze$ and number of terms $N$ are decreased. 
Following the suggestion by the referee,
we included the sets of the parameters of deformations and prices obtained. The parameters are calculated
using the general prescription in the paper for $\om=\pm 0.1$ and $\om = \pm 0.3$ (``+" for puts, and ``-" for calls), in two versions: 1) strip $S_{(0,1)}$ is used, puts are calculated, and then calls using the put-call parity;
2) strip $S_{(-1,0)}$ is used to evaluate the covered call, and then OTM and ATM puts and calls are calculated. The reader observes that even for options of short maturity and moderately large number of time steps, the relative errors  of prices in (A) - (B) are smaller than $E-05$, hence, either of these prices
can be used as a benchmark to evaluate the errors of the Gauss-Laguerre and SINH quadratures with smaller number of terms. The  errors in (E)-(G) are sizably larger than the relative errors of (A)-(D). The errors of  the Gauss-Laguerre quadrature are especially large farther in the tails, and the SQ modification increases errors further still.

\begin{table}
\caption{\small Prices (rounded) of OTM and ATM options of short maturity $T=1/252$, spot $S_0=1$,
and relative errors of several numerical schemes. Parameters of the rough Heston model $(\al,\ga,\rho,\nu,\theta,v_0)=
(0.62,0.1,-0.681,0.3156,0.331,0.0392)$. (A)-(E): SINH-CB. (F), (G): the Gauss-Laguerre quadrature
with $125$ terms, BL- and SQ-modifications, respectively, are used. In cases (A)-(E), the characteristic function is evaluated using the BL modification with $10000$ time steps and different parameters of SINH-acceleration.  Cases (F2), (G2) are with $M=1000$. Comparison of (E) with (A)-(B) illustrates the error of SINH with rougher and shorter grid,
comparison of (F)-(G) with (F2)-(G2) - the error caused by modifications of the Adams method with smaller number of time intervals, and comparison of (F), (F2) with (G), (G2) - the relative inefficiency of the SQ modification.}
{\tiny \begin{tabular}{c|ccccc}
\hline\hline
$K$ & 	0.95	& 0.975 &	1	&1.025 &	1.05	\\\hline
(A) & 
 2.45438E-07 &1.291108123E-04 
& 5.011158079064E-03 
& 9.162353308E-05
 & 3.3080E-08 \\
 (B) & 2.45431E-07 & 1.291107993E-04 &   5.011158100825E-03 &
 9.162333101E-05 & 3.3042E-08\\
 (C) & 2.45439E-07 & 1.291108120E-04 & 5.011158067024E-03 & 
 9.162347644E-05 & 3.3083E-08\\
 (D) & 2.45433E-07&  1.291107987E-04 & 5.011158091131E-03 &
 9.162333785E-05 & 3.3041E-08\\\hline
 (E) & 2.49019E-07 &   1.290969205E-04 & 5.011200898524E-03 &
 9.161466866E-05 & 3.6990E-08\\\hline\hline
 (F) & 2.45154E-07 &   1.291105240E-04 &  5.011157708667E-03 & 
 9.162296909E-05 & 3.2791E-08\\
 (G)   &    2.43087E-07 &  1.288850170E-04 & 5.010932731825E-03 & 9.176001781E-05
 & 3.3146E-08\\\hline
 (F2) &  2.44753E-07 &  1.290926508E-04 & 5.011196601749E-03 & 9.161027741E-05 &
    3.2708E-08
   \\
 (G2) & 2.36502E-07	& 1.281845611E-04 & 5.010290744068E-03&	9.216234972E-05 &	3.4157E-08
  \end{tabular}}
  {\tiny \begin{tabular}{c|ccccc}
\hline\hline
 & $\om_1$ & $b$ & $\om $ & $\ze$ & $N$ \\\hline
 (A), puts & -2.258176465 & 2.618405789 & 0.3 & 0.029365304& 195\\ 
 (A), calls & -0.741823535 & 2.618405789 & -0.3 & 0.029365304& 195\\
 (B), puts & -2.055909011 & 5.590994333 &0.1 &	0.013133034 &378\\
 (B), calls &-0.944090989 & 5.590994333 &-0.1 &0.013133034 &378\\
 (C), puts & -0.129088232 &	1.309202894 &	0.3 &	0.029365304 & 219\\
 (C), calls & 0.629088232 & 1.309202894 &	-0.3 &	0.029365304 & 219\\
 (D), puts & -0.027954506 & 2.795497166	 &0.1 &0.013133034 &431\\
 (D), calls & 0.527954506 & 2.795497166	 &-0.1 &0.013133034 &431\\
 (E), puts & -2.258176465& 2.618405789&	0.3& 0.06537276&	88 \\
 (E), calls & -0.741823535& 2.618405789&	-0.3& 0.06537276&	88
 \end{tabular}}
 \label{Table_CB}
 \end{table}
  In  (A) - (D), the number of steps $M=10000$ in the BL modification is used.
 Parameters of SINH schemes are chosen using the universal procedure in the paper
 for the error tolerance of the FT inversion $\eps=10^{-15}$, with $\ze$ 10\% smaller and $\La$ 20\% larger,
 and for the strips of analyticity $S_{(0,1)}$ and $S_{(-2,-1)}$ for puts and calls, respectively. 
 The sets of prices can be used to constitute a benchmark. The reader can either select the first digits that appear in all four cases or take the average. Differences of prices in (E) (SINH with a smaller number of terms $N=88$)
  and (F), (G) (Gauss-Laguerre with 100 and 125 terms) and prices in   (A)-(D) are significantly larger  than
  the differences between prices (A)-(D), therefore, the differences between prices (E)-(F) and the average
  of prices in (A)-(D) can be used as proxies of the errors of (E)-(F).
 In (E),  $M=1000$, SINH is used with $\eps=10^{-10}$, and $\ze= 70\%$ and $\La=120\%$ of the recommended. The number of terms is 88. The absolute error is smaller than E-08 for all $K$ but the relative error for OTM options is sizable. 
 
 We formalize the Conformal Bootstrap principle in two forms. 
  \vskip0.1cm
\noindent
{\sc Conformal bootstrap principle I.} {\em Let a union $\cU$ of a strip and cone of analyticity
of the characteristic function $\Phi(\xi)$ be known, and  $\Phi(\xi)$ can be calculated with an (almost) machine precision.
Construct  at least two admissible conformal deformations of
$\chi_j$, $j=1,\ldots, n$, of
the line of integration $\cL^0$ such that  the contours $\cL_j=\chi_j(\cL^0), j=1\ldots, n,$ are not close
and diverge at infinity, and calculate the approximations $V^j$ to the price  
using   
the corresponding changes of variables and simplified trapezoid rule
with several dozens of terms and more.

If  $|V^j-V^k|<10^{-m}$ for $j,k\in 1,\ldots, n$,
where $m$ is not too small, e.g., $m\ge 5$, 
then , \emph{as a practical heuristic}, the common value can be adopted with a conservative tolerance of order $10^{-m+2}$.}

 Lack of agreement across either contours or procedures should be interpreted as evidence of unresolved numerical/analyticity issues (e.g., contour crossing a latent singularity, insufficient decay, or bias in $\Phi_{\mathrm{ap},r}$), in which case one should refine the discretization, modify the deformations, or increase precision.
 This is an a posteriori consistency check across independent admissible contours. Agreement across such deformations is taken as evidence that quadrature, tail truncation, and roundoff errors are collectively small; disagreement flags the need to refine the quadrature, enlarge $\cU$, or adjust the deformation.

It is possible that a region of analyticity $\cU$ 
and the rate of decay of $\Phi(\xi)$ as $\xi\to \infty$ remaining in $\cU$ are unknown as well (this is the case for the rough Heston model).
Then we use
\vskip0.1cm
\noindent
{\sc Conformal bootstrap principle II.}   {\emph Assume that we have two or more numerical procedures for
evaluation of $\Phi(\xi)$ for $\xi$ in a union $\cU$ of a strip and cone. Let $\Phi_{ap,j}(\xi), j=1,2, $ be the approximations.
At least one of the functions $\Phi_{ap,j}$ may not be an analytic function.

Then, if we use different  $\Phi_{ap,j}$ to evaluate the integrals over different contours $\cL_j$, and after   
the corresponding changes of variables and application of the simplified trapezoid rule
with several dozen of terms and more, 
the results agree with the accuracy $10^{-m}$ where $m$ is not small, e.g., $m\ge 7$, 
then, \emph{as a practical heuristic}, we  
\begin{enumerate}[(1)]
\item
accept that $\Phi$ is analytic in a simply connected region $\cU_0\subset \cU$ containing the chosen contours;
\item
the common value can be adopted with a conservative tolerance of order $10^{-m+2}$.
\end{enumerate} }
 Lack of agreement across either contours or procedures should be interpreted as evidence that either one of the deformed contours or both are either outside the domain of analyticity or too close to the boundary, or unresolved numerical/analyticity issues (e.g., contour crossing a latent singularity, insufficient decay, or bias in $\Phi_{\mathrm{ap},r}$), in which case one should modify the deformations. If there is no improvement, the deformations must be changed. When a moderately good agreement is reached, the discretization must be refined and number of nodes increased. If the improvement is observed, the deformed contours are within a domain of analyticity, and we refine the discretization, increase the number of nodes  or increase precision to verify that the agreement improves.
 
 \section{Several popular methods for Fourier inversion}\label{s:FT}
\subsection{Carr-Madan method}\label{ss:CM method}
The implementation of Flat iFT is very simple, and can be easily parallelized if the option prices for  several dozens, hundreds of strikes or even thousands of pairs $(K,T)$ need to be calculated. Furthermore, if the strip of analyticity is not too narrow and the characteristic function
decays not too slowly, which is the case for the Heston model and options of not very short maturity,
then $N$ of the order of 2-3 hundreds suffices to satisfy the error tolerance of the order of E-07 (assuming that the strike $S_0=1$). 

Nevertheless, in the noughties, the unnecessarily complicated (and slower and less accurate)  CM method
 became popular, and it is still used in the quantitative finance literature as {\em one of the standard methods}. 
 Hence, the accurate analysis of the drawbacks of the CM method seems to be useful. The main idea of the method is to use the Fast Fourier transform (FFT) to evaluate
the option prices at several strikes. However, FFT produces the results at points of uniformly spaced 
grids in the $\ln(K)$-space. Therefore, in order to evaluate the option prices for given strikes,
an interpolation procedure needs to be employed. To satisfy even a moderate error tolerance,
a fine grid $x_j=x_0+j\De$, $j=1,2,\ldots, M=2^m$, with $\De\ll 1$ is necessary; to make an accurate Fourier inversion,
a small step $\ze$ in the dual space must be used (in \cite{carr-madan-FFT}, $\ze=0.25$ or $\ze=0.125$ are recommended). The Nyquist  relation $\De\ze=2\pi/M$ requires $M$ to be of the order of several thousand. In \cite{carr-madan-FFT},
the basic recommendation is $M=4,096$ and it is mentioned that larger $M=8,192$ or $M=16,384$ may be needed. Hence, the calculations become computationally many times costlier as compared to
Flat iFT, and an unnecessary interpolation error is introduced. Table \ref{table: iFT-FFT}
illustrates the adverse impact of the interpolation errors on the quality of calibration: the number of
strikes for which the calculated prices are outside the no-arbitrage bounds  increases because of the interpolation.
In the case of the rough Heston model, accurate evaluation of $\Phi(\xi,T)$ for $\xi$ large in absolute value
is especially difficult and time consuming.
The implied volatility surface produced by CM method 
can significantly differ from the correct one 
(see Fig.~ \ref{Set1ImpVolsurfacesXiT152}).
In particular, essentially flat volatility curves
can become nice volatility smiles, and changing the dampening factor (the line of integration) in the CM method, while keeping the same step size
 and the grid size recommended in the CM method one can significantly change the smiles and surface. The  implied volatility surface can significantly change as one changes the step and/or grid size. Furthermore, the errors are systematic,
 and, in many cases, prices of deep OTM options produced by CM method are ``prices" of the systematic errors of the method, which can be ``useful" to produce the implied volatility curves and surfaces one wants to produce.

\subsection{Gaussian quadratures}\label{ss:Lewis}
The specific choice $\om_1=-1/2$ was suggested  by A.~Lewis and A.~Lipton 
\cite{Lewisbook,liptonFX}, and the formula for the covered call was rewritten in the  form
\bbe\label{EuroPriceLL}
V(S_0, K; T) = -\frac{(K/S_0)^{1/2}}{\pi}\Re\int_0^{+\infty}
\frac{e^{iy\ln(S_0/K)}\Phi(T,-i/2+y)}{y^2+0.25}dy.
\ee
In the Lewis method \cite{lewisFT}, it is recommended to change the variable in order to reduce to the integral over $(0,1)$, and then apply the Gauss-Legendre quadrature.
Numerical examples 
show that, in applications to the Heston model and rough Heston model, for the same number of nodes,
the errors of the Gauss-Legendre quadrature are larger than the errors of the Gauss-Laguerre quadrature.
Finally, note that the performance of the Lewis method strongly depends on the choice of the change of variables.
In our numerical experiments, the errors of Gauss-Legendre quadrature increased greatly when we used $y=-\ln u$ instead of $y=u/(1-u)$.
We also observe that 1) given the error tolerance, the SINH-CB method requires the number of nodes 2-5 times smaller than the Gauss-Laguerre quadrature; 2) Gauss-Kronrod method is significantly slower even for the built-in error tolerance,
and, typically, produces errors larger than E-08 whereas SINH-CB satisfies this error tolerance with 20-40 terms, depending on the maturity. 

We finish the discussion about the performance of Gaussian quadratures with the following general observations.
If the integrand is sufficiently regular, then the convergence of Gaussian quadratures are the best ones.
However, the general error bounds are in terms of derivatives of high order, hence,
sizable errors are possible. The error of a  Gaussian quadrature for analytic functions can be expressed as a contour integral in the complex plane. This representation allows for the analysis of the error in terms of the behavior of the integrand in the complex domain. The decay of the error is then related to the distance of the contour from the interval of integration and the analytic properties of the integrand. For instance, a
review paper  \cite{DjukicDjukicPejcevSpalevic2020} starts with ``Let $\Ga$ be a simple closed curve in the complex plane encompassing the interval $[-1,1]$
and let $\cD$ be its interior. Suppose $f$ is a function that is analytic in $\cD$ and continuous on $\cD$."
 However, after the reduction to a finite interval as the Lewis method recommends,
the derivatives become highly irregular and, apparently, very large. At the same time, the integrand does not admit
analytic continuation to a domain containing $[0,1]$, hence, there is no theoretical reason to expect
that the Gauss-Legendre and Gauss-Kronrod quadratures should perform well. In the examples that we consider
both perform moderately well although the latter is too slow and the former insufficiently accurate 
close to maturity and far in the tails. In the same examples, the Gauss-Laguerre quadrature performs much better
although the number of terms needed to satisfy moderately small tolerance  is 2-5 times larger than
the number of terms that SINH-CB method required; even Flat iFT-BM method required smaller number of terms.

\subsection{The Gauss-Laguerre quadrature}\label{ss:GLag}
The situation with the Gauss-Laguerre quadrature is rather peculiar. The theoretical error bound
gives infinity when applied to the same examples; for pricing in the Heston model with small
volatility of variance and far from maturity, in the NIG model far from maturity and the KoBoL model of order $\nu>1$, the same theoretical bound guarantees the excellent convergence
if high precision arithmetic is used and the integrand can be evaluated very accurately. In more detail, 
for the integral
\bbe\label{LagInt}
I(f_0)=\int_0^{+\infty} f_0(y)dy,
\ee 
the error admits a representation in terms of the function $f(y):=e^yf_0(y)$:
\bbe\label{G-Lag-err}
Err_{GL}(I(f_0);n) = \frac{n!}{2(2n)!}f^{(2n)}(y),
\ee
for some $y>0$ (see \cite[25.4.45]{handbook}).
Hence, the general error bound is 
\bbe\label{G-Lag-err-bound}
|Err_{GL}(I(f_0);n)| \le \frac{n!}{2(2n)!}\sup_{y>0}|f^{(2n)}(y)|,
\ee
therefore, the bound \eq{G-Lag-err-bound} is applicable only if $f^{(2n)}$ is uniformly bounded. Furthermore, the very
proof of the convergence of the quadrature is valid in this case only. 
In the case of the Heston model with the following SDE for the variance process, 
\bbe\label{volHeston}
dV_t = \ka(m-V_t)dt + \sg_0\sqrt{V_t}dW_t,\quad V_0=v0,
\ee
the logarithm of $\Phi$ on the RHS of \eq{EuroPriceLL} obeys the asymptotics
\bbe\label{asHeston}
\Re\Phi(T,-i/2+y)= -\frac{(\ka mT+v_0)\sqrt{1-\rho^2}}{\sg_0} y(1+O(y^{-1})).
\ee
Therefore, if $a:=\frac{(\ka mT+v_0)\sqrt{1-\rho^2}}{\sg_0}<1$, there is no reason to expect that
the Gauss-Laguerre quadrature should work, as it seen from Table ~\ref{ErrLaguerre}, where we show
the errors of the  Gauss-Laguerre quadrature applied to $f_0(y)=e^{-ay}$ for various $a$
and various number of terms. However, if $a\ge 1$, and $f(y)=e^y f_0(y)$ admits analytic continuation to a strip $S_{(-d,d)}$ around the real axis, and it is uniformly bounded in the strip, then, applying the Cauchy theorem, one easily proves that there exists $H>0$ such that
$
\sup_{y>0}|f^{2n}(y)|\le Hd^{2n},
$
and the bound \eq{G-Lag-err} can be simplified
\bbe\label{G-Lag-err_anal}
|Err_{GL}(I(f_0);n)| \le H\frac{n!}{2(2n)!}d^{2n}.
\ee
Hence, the rate of convergence of the quadrature is excellent.

Miraculously, if $a<1$ is not very  small, the observed
errors are small; but for very small $a$, the errors become really huge. The rate of decay of the characteristic function very close to maturity is very small, and this is the region where the differences in the performance of
different models are most clearly seen. Hence, the comparative results of the performance of different models using
the Gauss-Laguerre quadrature are unreliable, at the short end especially. \footnote{ For KoBoL processes (a.k.a CGMY model)  of order $\nu<1$ and Variance Gamma model, the performance of the Gauss-Laguerre quadrature is much worse; for KoBoL processes of order $\nu>1$,
the performance is excellent. See Tables \ref{table:relerrKBLBad} and \ref{table:relerrKBLGood}.}
The unexpected good performance can be explained as follows. Assume that $f_0$ admits an upper bound 
\bbe\label{eq:boundf_0}
|f_0(y)|\le Ce^{-a y}
\ee
 where $C>0$ and $a\in (0,1)$ are independent of $y$.  We replace $f_0(y)$
with $f_\eps(y)=f_0(y)e^{-\eps y^2}$, where $\eps>0$ is small. As $\eps\to 0$, the error of the replacement
tends to zero  but the RHS in the error bound  \eq{G-Lag-err-bound} for
$f_\eps$ tends to $+\infty$. An attempt to minimize the sum of the two errors gives an error bound
which is significantly larger than empirically observed errors.

\begin{table}
	\caption{\small Evaluation of $\int_0^\infty e^{-ay}dy$. Relative errors of the Gauss-Laguerre quadrature with $N$ terms. }
	\begin{tabular}{c|ccccc}
		\hline\hline
		$a$ &0.001 &	0.005 &	0.01 &	0.02\\\hline
		$N=100$ & -0.6798	& -0.1440 &	-0.0207 &	-0.00043\\
		$N=125$ & -0.6149 & -0.0878& -0.0077 &	-5.88E-05\\
		$N=150$ & -0.5569 &	-0.0535 &	-0.0029 &	-8.10E-06\\
		$N=175$ & -0.5044 & -0.0326 &	-0.0011&	-1.11E-06\\\hline
	\end{tabular}
	\label{ErrLaguerre}
\end{table}
In a market regime characterized by small $v_0$ and large 
$\sigma_0$, e.g. on a post-earnings day in a calm bull market, the Gauss-Laguerre approach to Fourier inversion for Heston, and especially rough Heston, can become numerically fragile, especially at short expiries and for far-OTM strikes. The combination of very small  initial variance and large volatility of variance causes the characteristic function to oscillate rapidly and decay slowly along the integration contour, especially for short maturities, where the effective spectral parameter is large, and if the sinh contour deformation is not used. In addition, since without the sinh deformation the integrand is highly oscillatory and only weakly damped, the fractional Adams method will typically require many more steps for the same level of accuracy. Since the Adams method is the main numerical bottleneck in the rough Heston price calculation, this will typically result in longer computation times. Even in the example of the calibration to TSLA implied vols, described in Sect.~ \ref{s:model_calib}, where $v_0$ is not small,  one needs at least $N = 125$ Gauss-Laguerre nodes and  $M = 500$ in order to obtain a relative error lower than 0.1\% at 2D expiry ($T=2/365$), for prices higher than $10^{-4}$, whereas SINH needs on average $N=67$ and $M = 180$, respectively, for the calibration to 1W and 2W described in  Sect.~ \ref{ss:calib_sinh}.

\begin{rem}\label{rem:Laguerre_large_xi}
{\em If the integrand has to be evaluated numerically, and errors of 
$\xi$ large in absolute value are sizable, the SINH-acceleration has an additional advantage as compared to the Gauss-Laguerre quadrature. If the SINH-acceleration is used, the average density of nodes decreases
exponentially with $|\xi|$, hence, only a small number of terms in the simplified trapezoid are expected
to make a significant contribution to the total error. If the Gauss-Laguerre quadrature is used,
the density of points in the region of large $|\xi|$ is significantly higher, hence, the total error
is larger.}
\end{rem}

\subsection{Rotated Gauss-Laguerre quadrature and the CB principle}\label{ss:rotGLag}
Assume that $f_0$ admits analytic continuation to a cone $\cC_{-\ga,\ga}$, where $\ga\in (0,\pi)$, and admits a bound
\bbe\label{eq:boundf_0_om}
|f_0(e^{i\om}y)|\le Ce^{-a y},
\ee
where $C>0$ and $a>0$ are independent of $\om\in (-\ga,\ga)$ and $y>0$. Then we can take any
$\om\in (-\ga,\ga)$ and rotate the ray of integration in \eq{LagInt}:
\bbe\label{LagInt_om}
I(f_0)=e^{i\om}\int_0^{+\infty} f_0(e^{i\om}y)dy.
\ee 
Choosing different $\om$, we can apply the CB principle replacing the SINH-acceleration with the rotated version of the Gauss-Laguerre quadrature. If $a\in (0,1)$, the reliability
of this version of the CB principle is questionable.

\subsection{COS method  \cite{COS}}\label{ss:COS}
Both  the pricing density and  payoff are truncated and approximated by linear combinations
of cosines. Thus, two unnecessary truncation errors are introduced. The
error control becomes very difficult: the numerical scheme is characterized by 3 parameters.
The truncation errors are controlled by the choice of two parameters, and, assuming that these  
are chosen sufficiently accurately, the geometric convergence of the method
is illustrated by increasing the third parameter,  the number of terms. The recommendations (in the literature, one can find 
several versions) are formulated in terms of the first 4 or 6 moments of the cumulant.
In view of the exponential growth of the payoff of the call, any recommendation
for the choice of the truncation parameters in terms of the moments cannot be reliable, and the authors of the COS method
explicitly and strongly recommend to apply the method to price puts but not calls (see Remark 5.2 in \cite{COS}).
As numerical examples in  \cite{iFT0,iFT} demonstrate, typically, given the error tolerance, Flat iFT requires a smaller number of terms than COS; in addition, Flat iFT is free from the unnecessary errors and restrictions of COS.

\subsection{SINC method}\label{SINC}
The SINC method \cite{Baschetti2022} is a modification of COS, with a more accurate approximation
of the integrand after the truncation. The authors claim that 1) the new truncation recommendation is more efficient
than the one in the COS method; 
2) the truncation being made,
approximation using sinc-functions is superior to the approximation used in the COS method.
They also state that the number of terms required by the SINC method is 4 times smaller, at best, and, in some cases,
COS is more accurate for the same number of terms.
The theoretical error bounds in \cite{Baschetti2022} are rather complicated, not explicit and essentially impossible  to apply in practice. The numerical example in \cite[Table 2]{Baschetti2022} (pricing put
in the Heston model) demonstrates the resulting errors. Naturally, to hide the errors, \cite[Table 2]{Baschetti2022} shows relative errors
of put options in both  OTM and ITM regions, the errors for ITM puts being excellent. However, when we use the numbers shown
in \cite[Table 2]{Baschetti2022} to calculate the relative errors of the corresponding OTM calls, the errors become quite substantial.
Next, the excellent performance of the infinite trapezoid rule is also explained in \cite{stenger-book} using
the approximation by linear combinations of SINC functions. As a  result, the method
 in \cite{Baschetti2022} uses approximately the same number of terms or even larger than even Flat iFT, to say nothing
 of the Flat iFT-BM and SINH-CB methods constructed below.
Finally, note that in the example shown in  \cite[Table 2]{Baschetti2022}, the strip of analyticity of $\Phi(\xi,T)$ is very wide,
as in the example for the KoBoL (a.k.a. CGMY) model in the same paper. In COS, SINC and  Flat iFT methods,
the number of terms is approximately inversely proportional to the width of the strip of analyticity.
If the strip were narrow (as it is the case for the Heston model for $T$ close to the moment explosion),
both COS and SINC methods would have required several times more terms; very close to the explosion, thousands times more.
The SINH-CB method is much less sensitive to the width of the strip of analyticity and requires small or moderate numbers of terms in all cases.

\subsection{Flat iFT-BM and Flat iFT-NIG methods}\label{ss:Flat iFT-BM} 
The additional errors of  COS are partially compensated by the increase of the width of the strip of analyticity
around the line of integration: instead of one of the three strips $S_{(\mum(T),-1)}, S_{(-1,0)}, S_{(0,\mup(T))}$,
the strip $S_{(\mum(T),\mup(T))}$ can be used. In this section, we demonstrate that the same effect is achievable without introducing additional errors. We use the same straightforward idea as in \cite{ConfAccelerationStable}, where
we eliminated the zero at $\xi=0$ of the integrand in the formula for the cumulative probability distribution function of a stable L\'evy process. In the current setting, we eliminate two zeros, at $\xi=0$ and $\xi=-i$.
Let $\Phi_{ad}(\xi,T)$ be the characteristic function in
a model, where vanilla prices can be calculated faster than in the initial model. Denote by $V_{call}(\Phi; S_0,K;T)$
the call price in the model with the characteristic function $\Phi$; as above, the asset pays no dividends
and interest rate $r$ is constant. 
\begin{prop}\label{prop: Flat iFT-BM}
Let $\Phi(\xi,T)$ and $\Phi_{ad}(\xi,T)$ admit analytic continuation to a strip $S_{(\mum(T),\mup(T))}$,
where $\mum(T)<-1<0<\mup(T)$,
and let $\Phi(-i,T)=\Phi_{ad}(-i,T)=e^{rT}$. 

Then, for any $\om_1\in (\mum(T),\mup(T))$,
\bbe\label{eq:Flat iFT-add}
V_{call}(\Phi; S_0,K;T)=V_{call}(\Phi_{ad}; S_0,K;T)-\frac{Ke^{-rT}}{2\pi}\int_{\Im\xi=\om_1}\frac{e^{i\xi\ln (S_0/K)}
(\Phi(\xi,T)-\Phi_{ad}(\xi,T))}{\xi(\xi+i)}d\xi.
\ee
The equality \eq{eq:Flat iFT-add} is valid for put and covered call as well.
\end{prop}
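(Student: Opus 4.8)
The plan is to reduce \eq{eq:Flat iFT-add} to the integral representation of vanilla prices established in Section~\ref{ss:Flat iFT and simpl. trap}, combined with one application of Cauchy's theorem; the whole point is that the two simple poles of $1/(\xi(\xi+i))$ at $\xi=0$ and $\xi=-i$ disappear once we subtract the two characteristic functions. First I would record the non-symmetrized form of the pricing formula --- the one obtained by substituting \eq{hG} into $V_{call}=e^{-rT}\bE[(S_0e^{X_T}-K)_+]$ before passing to the $\Re$-form \eq{EuroPrice}-\eq{EuroCallSym}: for any characteristic function $\Psi(\xi,T)$ (so $\Psi(0,T)=1$) that is analytic in $S_{(\mum(T),\mup(T))}$ and satisfies $\Psi(-i,T)=e^{rT}$,
\[
V_{call}(\Psi;S_0,K;T)=-\frac{Ke^{-rT}}{2\pi}\int_{\Im\xi=\om_1}\frac{e^{i\xi\ln(S_0/K)}\Psi(\xi,T)}{\xi(\xi+i)}\,d\xi,\qquad \om_1\in(\mum(T),-1),
\]
and, with the same right-hand side but a shifted admissible range, the analogous identities for the put ($\om_1\in(0,\mup(T))$) and the covered call ($\om_1\in(-1,0)$), exactly as explained in Section~\ref{ss:Flat iFT and simpl. trap}. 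Applying this to $\Psi=\Phi$ and to $\Psi=\Phi_{ad}$ and subtracting gives, for the call,
\[
V_{call}(\Phi;S_0,K;T)-V_{call}(\Phi_{ad};S_0,K;T)=-\frac{Ke^{-rT}}{2\pi}\int_{\Im\xi=\om_1}\frac{e^{i\xi\ln(S_0/K)}(\Phi(\xi,T)-\Phi_{ad}(\xi,T))}{\xi(\xi+i)}\,d\xi,\qquad \om_1\in(\mum(T),-1),
\]
and likewise for the put (with $\om_1\in(0,\mup(T))$) and the covered call (with $\om_1\in(-1,0)$), the right-hand side having in all three cases the same form.

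Second I would observe that the numerator $\Phi(\xi,T)-\Phi_{ad}(\xi,T)$ vanishes at $\xi=0$ (both values equal $1$) and at $\xi=-i$ (both equal $e^{rT}$ by hypothesis), so the factor $1/(\xi(\xi+i))$ introduces no singularity, and the integrand of the difference extends to a function analytic in the entire strip $S_{(\mum(T),\mup(T))}$. A standard contour shift then completes the proof: given $\om_1,\om_1'\in(\mum(T),\mup(T))$, apply Cauchy's theorem to the rectangle with horizontal sides $\Im\xi=\om_1$, $\Im\xi=\om_1'$ and vertical sides $\Re\xi=\pm R$. On the vertical sides $|e^{i\xi\ln(S_0/K)}|=e^{-(\Im\xi)\ln(S_0/K)}$ is bounded because $\Im\xi$ runs over a compact interval, $|\Phi(\xi,T)-\Phi_{ad}(\xi,T)|$ is bounded on the strip (the standing regularity assumption underlying \eq{EuroPrice}, equivalently membership of the integrand in the Hardy-type space of \eq{Hardynorm}), and $|\xi(\xi+i)|^{-1}=O(R^{-2})$ while the side length stays bounded by $\mup(T)-\mum(T)$; hence the vertical contributions are $O(R^{-2})\to 0$ as $R\to\infty$. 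Consequently the right-hand side of the difference identity is independent of $\om_1\in(\mum(T),\mup(T))$, and since the three native ranges $(\mum(T),-1)$, $(-1,0)$, $(0,\mup(T))$ together cover the whole strip, the single formula \eq{eq:Flat iFT-add} holds simultaneously for calls, puts and covered calls and for every $\om_1\in(\mum(T),\mup(T))$.

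The only delicate step is the justification of the contour shift, i.e. the decay (or at least uniform boundedness) of $\Phi(\xi,T)-\Phi_{ad}(\xi,T)$ along horizontal lines as $\Re\xi\to\pm\infty$. This is precisely the regularity that is already assumed for the representation \eq{EuroPrice} to be valid, and which holds in all the models of interest --- including the rough Heston model, by the asymptotic analysis of Section~\ref{AsymRough} --- so no genuinely new estimate is needed. Everything else is the cancellation of the poles at $\xi\in\{0,-i\}$ and a single application of Cauchy's theorem, both elementary.
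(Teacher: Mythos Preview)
Your proof is correct and follows essentially the same route as the paper: subtract the two pricing integrals on a line where the call representation is valid, observe that the zeros of $\Phi-\Phi_{ad}$ at $\xi=0,-i$ remove the poles so the integrand is analytic in the full strip, and then shift the contour by Cauchy's theorem. The paper states this in a more compressed form, but the argument is the same.
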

\begin{proof} Let $\om_1\in (\mum(T),-1)$. Then \eq{eq:Flat iFT-add} is valid. The apparent singularities of the integrand are removable because $\Phi(\xi,T)-\Phi_{ad}(\xi,T)$ is analytic in the strip $S_{(\mum(T),\mup(T))}$
and $\Phi(\xi,T)-\Phi_{ad}(\xi,T)=0$ at $\xi=0,-i$. Hence, the integrand on the RHS of \eq{eq:Flat iFT-add}
is analytic in the strip, and one may move the line of integration to any line 
$\{\xi\ |\ \Im\xi=\om_1\}, \om_1\in (\mum(T),\mup(T))$.
The proof for puts and covered calls is essentially the same. 
\end{proof}
The integral on the RHS of \eq{eq:Flat iFT-add} is calculated using Flat iFT.
If $\mup(T)-\mum(T)\gg 1$, and $\om_1(T)=(\mup(T)+\mum(T))/2$ is chosen, the half-width
of the strip of analyticity used to derive the recommendation for the choice of the step $\ze$ and $N$
increase significantly, and the number of terms of the simplified trapezoid rule and CPU time decrease,
also significantly.

Natural choices for $\Phi_{ad}$ are the characteristic functions in the following models: 
\begin{enumerate}[(1)]
\item
 the BM with the characteristic exponent $\psi(\xi)=\sg^2\xi^2/2-i\mu\xi$; $\sg>0$,
 $\mu=r-\sg^2/2$;
 \item
 Normal Inverse Gaussian process  (NIG) \cite{B-N} or the generalization of NIG 
 (tempered stable L\'evy processes (NTS) constructed in \cite{B-N-L}), with the same or wider strip of analyticity;
 \item
 in applications to rough Heston model, it is feasible that the use of $\Phi_{ad}$
 in the Heston model with the same parameters $\ga, \theta, \nu, \rho$ can be advantageous.
 \end{enumerate}
 We call the resulting method with the choices (1) and (2) {\em Flat iFT-BM} and {\em Flat iFT-NIG} 
 (more generally, {\em Flat iFT-NTS}) methods.
 In the numerical examples in the paper, we use the simplest variant: Flat iFT-BM. 
 In the numerical examples that we considered, the analogs: the Legendre-BM, Laguerre-BM and SINH-BM methods do not bring advantages as
 compared with the Lewis, Gauss-Laguerre and SINH-methods.
 We leave to the future the study of possible advantages of choices (2) and (3).

\subsection{Summation by parts in the infinite trapezoid rule}\label{ss: summation by parts}
For the explicit formulas, see \cite{Contrarian}. The summation by parts significantly decreases
the product $\ze N$ necessary to satisfy the given error tolerance if the strike is not close to
the spot.  Hence, it is natural to separate the region of strikes into two regions: close to the spot,
where Flat iFT-BM (or Flat iFT-NIG) is used, and the region farther from the spot, where, in addition, the
summation by parts is used.

  \section{Numerical examples}\label{s:numer}

The calculations in this section were performed in MATLAB 2024b-academic use, on
a MacBook Pro with an Apple M1 Max chip, 10-core CPU, 24-core GPU, 16-core Neural Engine, 32GB unified memory, 1TB SSD storage.
 \subsection{SINH-acceleration vs Gauss-Laguerre quadrature} The first example is  Table
 \ref{Table_CB} in Sect.~\ref{ss:CP principle}.
 In Fig. \ref{ErrorsSINHvsSQ},
    we show pricing errors of the SINH-acceleration and the Gauss-Laguerre quadrature for options of short maturity
   $T=1/52$ with the number of terms $N$ and number of time steps $M$ chosen as approximately the smallest ones that ensure the accuracy better than E-03.
    In this example, $M=1000$ is sufficient to achieve a good accuracy because $\Re\xi\le 60$.
If the Gauss-Laguerre quadrature of high order, e.g., 200, is used, then $\Re\xi$'s close to 800
appear, and then the errors become very large even if significantly larger $M$ are used. 
The curves obtained with the Gauss-Laguerre quadrature
    with $N=125$ and $N=200$ (in both cases, $M=2000$ and the BL-modification is used)
    are indistinguishable. 
If $M$ is not very large, e.g., $M=1000$, the sequence may oscillate as we observe in \cite[Fig.~2]{RoughHestonWeMarco2025}.

\begin{figure}
\centering
    \includegraphics[width=0.9\textwidth,keepaspectratio]{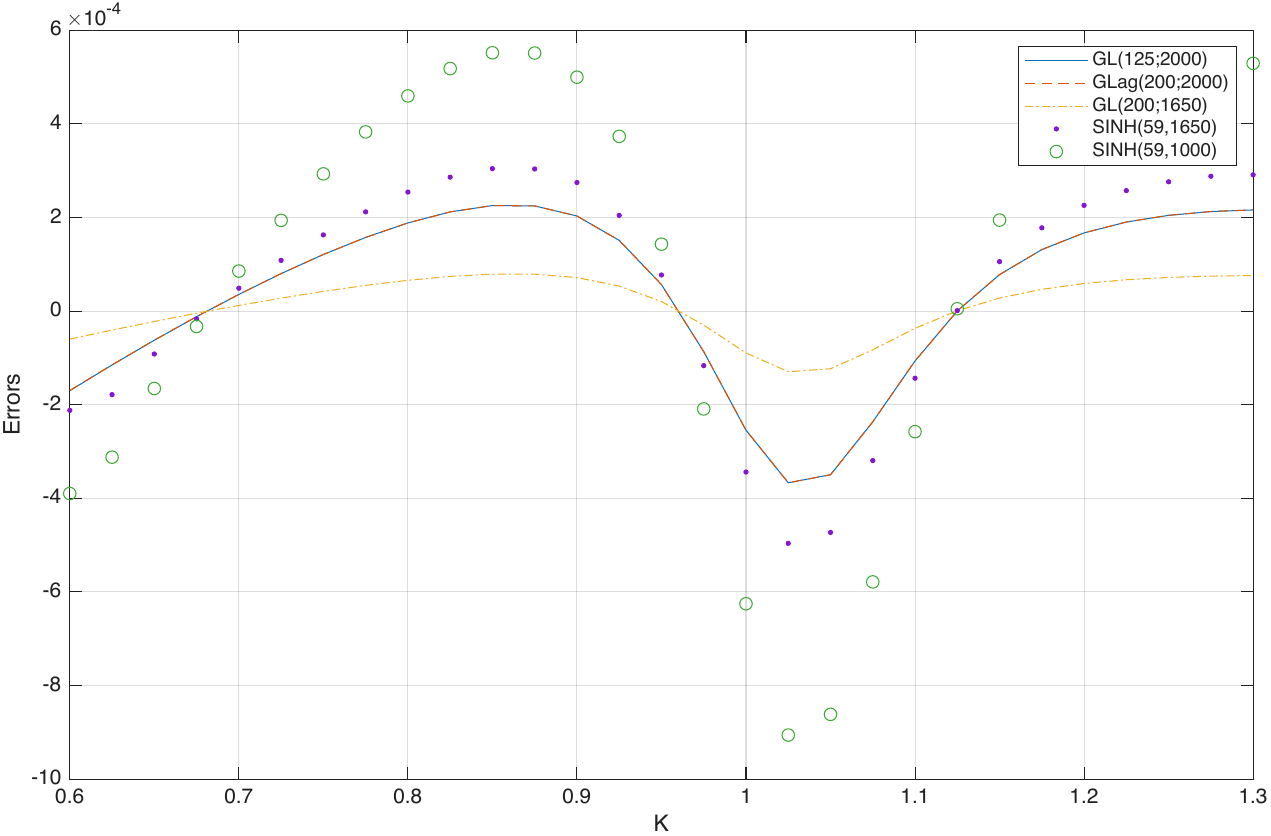}
    \caption{\small Errors of $SINH(N,M)$ and Gauss-Laguerre quadratures $GL(N,M)$. $N$ is the number of nodes, $M$ is the number of time steps in the BL-modification. Step $\ze$ is chosen
    so that the discretization error is smaller than E-05. Time to maturity $\tau=1/52$, parameters of the rough Heston are $(\alpha,\gamma,\theta,\sigma,\rho,v_0)=
(0.587271,\,3.22767,\,0.219608,\,1.49494,\,-0.310089,\,0.552303)$; $\nu=\sg/\ga$.
     \label{ErrorsSINHvsSQ}
    }
    \end{figure}

\subsection{Comparison of the efficiency of different methods for pricing in the Heston model}
\label{ss:comparison_Heston}
To avoid the analysis of the potential impact of errors  of the Adams method and its modification
on the final results, 
we start with the Heston model, where the integrand can be calculated with machine precision, and, therefore,
the errors shown are the errors of the Fourier inversion method used. In the tables, we show the parameters of
each numerical scheme used so that the reader can check that the errors are as shown in the table. 

We use   \cite[Table 2]{Baschetti2022} and prices and relative errors of SINC method shown in 
\cite[Table 2]{Baschetti2022}. Typically, authors tend to select examples to show advantages of their method,
hence, we may presume that the authors of \cite{Baschetti2022}  believe that the results shown are good.
However, we  use small relative errors
of ITM puts shown in \cite[Table 2]{Baschetti2022} to calculate the relative errors of the corresponding OTM call options, which are needed for calibration. The errors of OTM calls turn out to be quite significant, which makes SINC method unsuitable
for calibration purposes. In  \cite[Table 2]{Baschetti2022}, the maturity is $T=0.1$.
The results shown in  Table~\ref{table:SINC_T=0.1} clearly demonstrate that, in addition to the fact that the SINC method
is unnecessarily complicated and uses non-explicit and unreliable recommendations for the choice of the truncation parameters, it is also significantly slower than
Flat iFT with BS correction, the Gauss-Laguerre quadrature and SINH-acceleration. We also observe that  the Gauss-Legendre 
quadrature is significantly
less accurate than Flat iFT with BS correction and SINH-acceleration bar in a small vicinity of the spot. Therefore, in the following two tables 
Table~\ref{table:SINC_T=0.5} and \ref{table:SINC_T=2} for $T=0.5$ and $T=2$,
we do not calculate the errors of SINC method. We also do not show the errors of the Gauss-Legendre quadrature because the errors are systematically and significantly larger than the errors of  the Gauss-Laguerre quadrature.
The parameters in Tables~\ref{table:SINC_T=0.1} - \ref{table:SINC_T=2}  are 
the ones in \cite[Table 2]{Baschetti2022},
and the riskless and dividend rates $r=0$, $q=0$.
\begin{table}
\caption{\small Benchmark prices \cite[Table 2]{Baschetti2022} of OTM and ATM put (panel A) and call  (panel B) options
calculated using  the SINC method 
  in the Heston model with parameters 
 $\ka = 1.5768,
m = 0.0398,
\sg = 0.5751,
\rho = -0.5711,
v_0 = 0.0175$,
 and relative errors of several methods w.r.t. to the BB prices calculated using a more accurate
 SINH-CB method (shown separately). 
 Maturity $T=0.1$, $N$ is the number of terms
of the quadrature.
 }
{\tiny \begin{tabular}{c|ccccc|c} 
 \hline\hline
A &  \\
$K$ & 	0.6	& 0.7 &	0.8	&0.9  &	1	& N\\\hline
$BB_{SINC}$ & 1.1E-09 &	2.363E-07	 &1.98699E-05	& 8.057899E-4 &	0.0163700005 & 1024\\\hline
SINH rel.err. &   -4.77E-03 &	8.73E-06 &	-9.47E-09 &	-7.72E-11	& -1.65E-11
         &  27-44  \\ 
Flat iFT-BS & -2.79E-04 & 	4.17E-07	& -5.13E-09 &	1.07E-09	& 1.08E-10 & 60 \\
Gauss-Laguerre & 8.15E-03 &	4.33E-05 &	5.48E-07 &	1.43E-08	 & 7.39E-10    & 175\\
Flat iFT & 2.52E-04 &	7.88E-07 &	5.82E-08 &	-3.34E-09 &	-3.51E-09 & 200  \\\hline
SINC & -6.16E-02 &	-2.32E-04	& -4.62E-06 &	-5.84E-08	& -2.02E-09 & 384 \\
 & 5.36E-01 &	1.88E-03&	5.07E-05	& -2.17E-06	& 8.96E-08 & 256\\\hline
Gauss-Kronrod & -1.42E-01 &	-4.97E-04 &	2.65E-05 &	4.57E-05 &	-4.17E-07 & \\
Gauss-Legendre & 4.09E+05 &	-4.04E+02 &	3.89E+00 &	1.01E-02	& -9.03E-08 & 200
\\\hline
B & \\
$K$ & 	1	& 1.1 &	1.2	& 1.3  &	1.4	& N\\\hline 
$BB_{SINC}$ & 0.0163700005 &  6.85530575637E-05 & 1.223E-07 & 2.0E-10 & 0 &1024    \\\hline
SINH rel.err. &  -1.65E-11 &  -2.10E-09	& 1.87E-05 &	1.17E-03	& 7.22E+00
 &  44 - 30 \\
Flat iFT-BS &  1.08E-10	& -3.05E-08	&7.74E-06 &	3.25E-02	&1.63E+02
 & 60\\
Gauss-Laguerre & 7.39E-10 &	1.85E-07 &	1.09E-04 &	5.48E-02	& 2.09E+01
 & 175\\
Flat iFT &  -3.51E-09	 & 7.68E-07 &	-7.79E-04	&-5.38E-01 &	-1.24E+02 & 200
  \\\hline
SINC & -2.02E-09 &	-8.40E-07	& -6.36E-04	& -2.12E-01 &	-1.00E+00 &384 \\
       &  8.96E-08 &	2.69E-05	&-1.62E-02	& -8.09E+00 &	2.73E+03   &256 \\\hline
Gauss-Kronrod &  -4.17E-07 &	-3.89E-05 &	-6.54E-01 &	3.81E-02 &	7.54E+04  
\\
Gauss-Legendre & -9.03E-08 & 2.48E-02 &	1.63E+02 &	3.43E+05 &	3.26E+08 & 200  \\\hline
       
\end{tabular}
}
\begin{flushleft}{\tiny 
SINH-CB benchmark prices for \(K=0.6{:}0.1{:}1.4\):
\(1.17218\times10^{-9}\);\allowbreak\ \(2.36354837\times10^{-7}\);\allowbreak\ \(1.9869991862\times10^{-5}\);\allowbreak\
\(8.057899470805\times10^{-4}\);\allowbreak\ \(1.63700005331343\times10^{-2}\);\allowbreak\ \(6.855305756\times10^{-5}\);\allowbreak\
\(1.22377846\times10^{-7}\);\allowbreak\ \(2.538235\times10^{-10}\);\allowbreak\ \(6.968\times10^{-13}\).
\\
Benchmark prices are
calculated using $N=70-110$ terms,
absolute errors  are smaller than E-15.\\
The prices and relative errors of ITM put options of SINC method presented in \cite[Table 2]{Baschetti2022}
are recalculated for the corresponding OTM call options. \\
Parameters of SINH are chosen using the universal scheme for the error tolerance
$E-10$ with 
$\gap=\pi/2, \gam =0, \mup =0, \mum=-1$ for puts and
$\gap=0, \gam =-\pi/2, \gam =0, \mup =0, \mum=-1$ for calls.\\
Flat iFT-BS prices are calculated using $\sg = 0.15, \om_1=-0.1, \ze = 6.7, N=60$\\
Flat iFT prices are calculated using $\om_1=9, \ze = 1, N=200$.
}
\end{flushleft}
\label{table:SINC_T=0.1}
\end{table}

\begin{table}
\caption{\small Benchmark prices of OTM and ATM put (panel A) and call  (panel B) options
calculated using  
 SINH-CB quadrature, in the Heston model,
 and relative errors of several methods.   Parameters are as in Table~\ref{table:SINC_T=0.1}, maturity $T=0.5$, $N$ is the number of terms
of the quadrature.
 }
{\tiny \begin{tabular}{c|cccc|c} 
 \hline\hline
A &  \\
$K$ & 	0.4	& 0.6 &	0.8	&1  	& N\\\hline
$BB_{SINH}$ & 6.867676571E-06 & 2.88352018707E-04 & 4.1468390508486E-03 & 0.0381474566373446
& 37-59\\\hline
$SINH$ & 2.39E-06	&	2.63E-07	&	-6.01E-09	&	-1.52E-11	& 23-37\\
Flat iFT-BS & 3.89E-07	&	1.79E-07	&	2.97E-09&	-1.37E-08 & 70\\
Gauss-Laguerre & 1.20E-06	&	3.37E-08	&	2.63E-09	&	3.17E-10 & 175 \\
Flat iFT & -3.75E-08	&	5.07E-09	&	-1.20E-09	&	-4.24E-09 & 200\\
Gauss-Kronrod & -1.01E-04	&	2.82E-07	&	-3.16E-08	&	-4.66E-07\\\hline
&B& \\
$K$ & 	1	& 1.2 &	1.4	&1.6  	& N\\\hline
$BB_{SINH}$ & 0.0381474566373446 & 8.340111339346E-04 & 3.28092085119E-05
& 2.04002697E-06 & 59-45\\\hline
$SINH$ &-1.52E-11&1.46E-09	&	-1.97E-08	&	1.10E-05 & 37-28\\
Flat iFT-BS & -1.37E-08	&	7.90E-07	&	1.02E-05	&	1.01E-03 & 70\\

Gauss-Laguerre & 3.17E-10	&	1.60E-08	&	4.42E-07	&	7.75E-06 & 175 \\
Flat iFT & -4.24E-09	&	-2.10E-07	&	1.62E-05	&	4.59E-04
 & 200\\
Gauss-Kronrod & -4.66E-07	&	2.26E-03	&	1.87E-03	&5	6.41E-04\\\hline
\end{tabular}
}
\begin{flushleft}{\tiny 
Benchmark prices are
calculated using $N=70-110$ terms,
absolute errors  are smaller than E-15.\\
Parameters of SINH are chosen using the universal scheme for the error tolerance
$E-10$ with 
$\gap=\pi/2, \gam =0, \mup =0, \mum=-1$ for puts and
$\gap=0, \gam =-\pi/2, \gam =0, \mup =0, \mum=-1$ for calls.\\
Flat iFT-BS prices are calculated using $\sg = 0.15, \om_1=-0.1, \ze = 2.5, N=70$\\
Flat iFT prices are calculated using $\om_1=5, \ze = 0.95, N=200$.
}
\end{flushleft}
\label{table:SINC_T=0.5}
\end{table}

\begin{table}
\caption{\small Benchmark prices of OTM and ATM put (panel A) and call  (panel B) options
calculated using  
 SINH-CB quadrature, in the Heston model,
 and relative errors of several methods.   Parameters are as in Table~\ref{table:SINC_T=0.1}, maturity  $T=2$, $N$ is the number of terms
of the quadrature.
 }
{\tiny \begin{tabular}{c|cccc|c} 
 \hline\hline
A &  \\
$K$ & 	0.4	& 0.6 &	0.8	&1  	& N\\\hline
$BB_{SINH}$ & 1.31922212162344E-03 & 7.65194031130601E-03 & 0.0290086131558373 &
0.0886812708686885 & 34-45 
\\\hline
$SINH$ & -1.24E-06	& 	4.46E-07 &		-2.32E-07	 & -6.36E-08 & 21-28\\
Flat iFT-BS & 2.56E-08	&	1.18E-08	&	1.59E-08	&	9.86E-09
 & 65\\
Gauss-Laguerre & 7.02E-09	&	1.19E-09	&	3.59E-10&	1.31E-10
 & 175 \\
Flat iFT & 6.56E-10 &		-2.95E-11	&	2.178E-11	&	2.61E-11
 & 200\\
Gauss-Kronrod & 3.64E-05&	-1.17E-06	&	-1.38E-08	&	-1.07E-09
\\\hline
$K$ & 	1	& 1.2 &	1.4	&1.6  	& N\\\hline
$BB_{SINH}$ & 0.0886812708686885 & 0.0198570250501392 & 
0.00387696016670591 & 9.364368682739E-04 & 65-61\\\hline
$SINH$ & -6.36E-08	&	-1.82E-07	&	3.06E-06	&	-2.30E-06
 & 28-26\\
Flat iFT-BS & 9.86E-09	&	9.883E-08	&	9.62E-07	&	7.43E-06 & 70\\
Gauss-Laguerre & 3.17E-10	&	1.60E-08	&	4.42E-07	&	7.75E-06
 & 175 \\
Flat iFT &2.61E-11 &		2.71E-10	&	2.31E-09	&	1.66E-08
 & 200\\
Gauss-Kronrod & -1.07E-09	&	-1.40E-08	&	-3.00E-07	 &	-6.27E-08\\hline
\end{tabular}
}
\begin{flushleft}{\tiny 
Absolute errors of the benchmark-sinh prices are smaller than E-15\\

Parameters of SINH are chosen using the universal scheme for the error tolerance
$E-10$ with 
$\gap=\pi/2, \gam =0, \mup =0, \mum=-1$ for puts and
$\gap=0, \gam =-\pi/2, \gam =0, \mup =0, \mum=-1$ for calls.\\
Flat iFT-BS prices are calculated using $\sg = 0.15, \om_1=-0.1, \ze = 1.1, N=65$\\
Flat iFT prices are calculated using $\om_1=3, \ze = 0.25, N=200$.
}
\end{flushleft}
\label{table:SINC_T=2}
\end{table}

\subsection{Performance in ``good regions" of the $(K,T)$-plane}
Tables~\ref{table:rel_errors_moderate}, \ref{table:rel_errors_short} and \ref{table:implvol_short} in Sect.~\ref{s:addition}
 demonstrate that even in a rather difficult for accurate pricing rough Heston model,
in regions not close to maturity and rather close to the spot, a moderately small error tolerance can be satisfied using
essentially any reasonable method with a small number of terms, hence, if the data set contains points in this region only, then, for practical purposes, the Gaussian quadratures, Flat iFT, Flat iFT-BM and SINH-CB are essentially equally good. However, since practically useful data sets do contain points in inconvenient regions,
significant calibration errors result if either an insufficiently accurate method is used or the parameter choice
is not good; if the same parameters are used to calculate option prices for all $(K,T)$ and all parameters of the model,
serious errors are inevitable. For similar examples in the context of pricing in KoBoL (a.k.a.) CGMY model,
see Tables \ref{table:relerrKBLBad} - \ref{table:relerrKBLGood}.

\subsection{Examples of incorrect shapes}
In Fig.~\ref{Set1Skews} we show the correct ATM skew 
for the model with parameters \bbe\label{parEuRos}
\al=0.62,\ \ga=	0.1, \	 \rho=-0.681,\  \theta=0.3156, \ \nu=0.331, \ v_0=0.0392,
\ee
calibrated to the S\&P implied volatility surface as of 7 January 2010 in\footnote{The parameters in \eqref{parEuRos} can be found in the published version of \cite{EuchRosenbaum2019}, but not in the preprint.} \cite[\S 5.2]{EuchRosenbaum2019}, and re-used in other studies, e.g. \cite{RoughNotTough, KamuranEmreErkan2020,WangCuiRHINAR2025}. It is clearly seen that
the skew is more than 2 times lower than the one shown on \cite[Fig.~5.1]{EuchRosenbaum2019}.
The correct implied volatility curves shown on Fig.~\ref{Set1Curves} are essentially straight lines,
the slope depending on the maturity, whereas the curves \cite[Fig.~5.2]{EuchRosenbaum2019} are not so flat, 
which is expected, and agree with the empirical data well. Recall that in \cite{EuchRosenbaum2019},
the Lewis method and standard fractional Adams method are used; both are inaccurate. 
We have an example of ghost calibration. For the same parameter set, playing with the parameters of the CM method and 
using interpolation into
the bargain, one can produce implied volatility surfaces of different shapes. See Figure ~\ref{Set1ImpVolsurfacesXiT152}.
In Fig.~\ref{ImperialCurves} and \ref{Set2Curves}, we show the correct implied volatility curves
for two sets of parameters calibrated to the real data in \cite{Imperial2020} and \cite{KamuranEmreErkan2020}.
The curves shown in \cite{KamuranEmreErkan2020} differ by several percent and more, hence, we have an additional pair of ghost calibration
examples.

\section{Fast pricing}\label{s:fast_pricing}
\subsection{Pricing algorithms}\label{s:pricing_algo}
We give a detailed description of the pricing algorithm based on the Conformal Bootstrap principle, the sinh-deformation of the contour, and the modified Adams method. There are two versions of the algorithm: one which is used to calculate the benchmarks, and a faster one to be used on the fly, e.g. during calibration or for live pricing. The detailed description of the benchmark pricing algorithm can be found in Appendix \ref{a:pricing_algo_bm}.

\subsection{Calibration pricer}\label{ss:calib_pricer}
The on-the-fly pricing algorithm used during the calibration is similar to the one outlined in Appendix \ref{a:pricing_algo_bm}, except that the time-consuming optimization is not used, and the flat contour price $V_{LL}(T,K)$ (cf. \eqref{EuroPriceLL}) is only calculated when necessary. We proceed as follows
\begin{enumerate}[1.]\item Take a strip of analyticity, e.g. $(0, \pi/4)$ (see Step II in section \ref{ss: SINH} for the choice of the strip). 
	\item In a loop, price all OTM puts or calls using $\omega = 0.1, 0.2, \ldots$, at each step, e.g. using an initial number of $M = 100$ time steps in the Adams method  for $T>1$, and $M = 300$ for $T < 1$. The  hybrid BL-Adams modification  is used.
	\item The  procedure described in Appendix \ref{sss:data_generation_V_calc} is used, which successively adjusts the numerical parameters (number of time steps $M$, truncation parameter $\La$, mesh $\ze$, number of iterations $n$ in the modified Adams method) by adjusting each until further refinement has negligible effect. 
		
	\item Exit the loop as soon as any two prices have relative difference under e.g. $2 \cdot 10^{-5}$, and return the last price.
	\item Otherwise, try a similar loop with a larger initial value of $M$, e.g. $M = 500$.
	\item If no convergence is observed, then calculate  the price along a flat contour with $\Im \xi = -0.5,$  large initial value of $M$, e.g. $M = 1000$, and no sinh deformation (i.e. using the Lewis-Lipton formula), and check the relative differences between this and any of the previous set of prices. 
\end{enumerate}

\subsection{Performance times}\label{ss:perf-times}
\subsubsection{Hardware and software environment}
All benchmarks were executed on a dual-socket AMD EPYC 7H12 server  
(2 × 64 physical cores, 256 hardware threads, max boost 2.60 GHz) running  
Ubuntu 22.04.5 LTS with the Linux 5.15.0-130-generic kernel.  
The machine was equipped with 256 GB DDR4-3200 RAM. Primary storage comprised a 447 GB RAID-1 Intel SSD system volume. 
No GPU or other hardware accelerators were employed; all timings reported in this paper refer to this CPU-only configuration.
\subsubsection{Implementation details}
The rough-Heston pricer uses a fully vectorised implementation of the Adams method that is \emph{just-in-time} compiled with \texttt{numba}.  
The vanilla-Heston benchmark, by contrast, is a pure Python/NumPy implementation of algorithm in \cite{HestonCalibMarcoMeRisk}, does not employ \texttt{numba}, and sets the roughness parameter to the classical value \(\alpha = 1\) (i.e.\ \(H = 1/2\)). 
\subsubsection{Measured timings}
Table~\ref{tab:perf-times} reports the mean wall-clock time required to price a
single ATM European put option with expiry\footnote{In this section and in the following one, we use calendar days (e.g. $T=2/365$), as commonly done by practitioners, since expiry is a calendar date, carry (rates/dividends/borrow) accrues in calendar time, and weekend theta/P\&L is realized. For short maturities the difference can be material. The academic literature often uses 252 trading days.}
 $T = 2/365$ under each parameter set.  
We used $\om = 0.1$. 
Times were obtained with
Python's \texttt{cProfile}, using the high-resolution \texttt{perf\_counter}
timer.  For each parameter set we performed one warm-up call, to trigger
\texttt{numba} compilation where applicable, followed by 1000 pricing calls;
the value shown is the profiler's cumulative time divided by the number of
calls.  The results indicate near-parity between the two models for the
El Euch-Rosenbaum (EuRos) and SPY sets (24 ms vs.\ 23 ms and 30 ms vs.\ 19 ms,
respectively), whereas for the higher-volatility TSLA and MSTR sets the rough
model is roughly 2–5 times slower.
\subsubsection{Expected performance in vectorised \textsc{C++}}

If both pricers were re-implemented in high-performance, vectorized \textsc{C++} with identical numerical tolerances, the Python overhead would disappear and both methods would be expected to complete in a few milliseconds per price evaluation.  
Since the rough-Heston characteristic function involves the additional application of the Adams method, we would expect the rough model to  run approximately \(1\text{–}5\) times slower than the vanilla Heston pricer.

\subsubsection{Comparison with Markovian approximation}\label{sss:markov_times} 
Table~\ref{tab:markov-times} reports the total pricing times for the same parameter sets, for an ATM put option and for maturities of 2~days and 1~week. 
Prices were obtained using the BL2 method, identified in \cite{MarkovianGG} as the most efficient and accurate one among the proposed approaches. 
In \eqref{e:markov-expansion}, 
the number of nodes $n$ was selected so that the ATM volatility error remained under 1\%.
The timings were obtained by running the Python implementation made available by the authors of \cite{MarkovianGG} on GitHub \cite{breneis2025}. Less than $2\%$ of the total runtime is spent on node computation, leaving little room for acceleration through pre-caching. A full description of the algorithm can be found in Appendix \ref{a:BL2}. We used a tolerance of $\varepsilon = 10^{-3}$ for the relative error.

{\small
\begin{table}[htbp]
	\centering
	\caption{Average wall-clock time per contract for the rough and vanilla Heston pricers on the hardware described above.\protect\footnotemark}
	\label{tab:perf-times}
	\begin{tabular}{lcc}
		\toprule
		Set & Rough Heston (ms) & Vanilla Heston (ms) \\
		\midrule
		EuRos & 24 & 23 \\
		TSLA  & 50 & 19 \\
		MSTR  & 91 & 20 \\
		SPY   & 30 & 19 \\
		\bottomrule
	\end{tabular}
\end{table}
}
\footnotetext{
	Vanilla Heston prices were computed with the method of \cite{HestonCalibMarcoMeRisk} (no \texttt{numba} acceleration) and use \(\alpha = 1\;(H = 1/2)\). The TSLA set was calibrated to options on this name as of 2 May 2025 (cf. Section \ref{s:model_calib}), while MSTR and SPY were calibrated to the corresponding names as of  2 June and 31 March, 2025, respectively. 
	The parameter sets are as follows, with $\sg = \ga \nu$; \((\alpha,\gamma,\theta,\sigma,\rho,v_0)\):  
	EuRos \((0.62, 0.10, 0.3156, 0.0331, -0.681, 0.0392)\) from \cite[\S 5.2]{EuchRosenbaum2019};  
	TSLA \((0.5119, 2.3661, 0.4249, 1.3684, -0.1785, 0.5275)\);  
	MSTR \((0.6254, 2.2046, 1.1908, 3.9948, -0.4078, 0.3458)\);  
	SPY  \((0.7151, 1.8967, 0.03848, 1.1654, -0.6704, 0.06246)\).}

{\small

\begin{table}[htbp]
	\centering
	\caption{Calculation times for the BL2 Markovian approximation method \cite{MarkovianGG}, based on the minimal number of nodes $n$ in \eqref{e:markov-expansion} needed to keep the ATM volatility error below 1\%}
	\label{tab:markov-times}
	\begin{tabular}{lccc}
		\toprule
		Set & Expiry & Nodes & Time (sec.) \\
		\midrule
		TSLA & 2D & 2 & 51.89 \\
		TSLA & 1W & 2 & 16.77 \\
		MSTR & 2D & 1 & 3.90 \\
		MSTR & 1W & 1 & 1.58 \\
		EuRos & 2D & 1 & 4.11 \\
		EuRos & 1W & 1 & 2.79 \\
		SPY & 2D & 1 & 1.57 \\
		SPY & 1W & 1 & 0.83 \\
		\bottomrule
	\end{tabular}
\end{table}
}

\section{Calibration results}\label{s:model_calib}

\subsection{Calibration using SINH-CB}\label{ss:calib_sinh}
This section includes an example of how our new pricing method can be applied to calibrate the rough Heston model, using Tesla (TSLA) option data from Bloomberg. We perform the calibration on TSLA implied volatility smiles as of 2 May 2025, fitting on short-dated maturities (1-week and 2-week expiries), by minimising the sum of squared differences between model and market implied volatilities. Figure \ref{fig:insample} below shows the in-sample fit of the model to market implied volatilities for these maturities. The rough Heston model is able to closely reproduce the observed smiles at 1-week and 2-week expiries.
\sbr 

To verify the reliability of the fast pricer used in our calibration procedure, we conducted a ``reverse calibration'' test. In this test, we used the benchmark pricer described in Appendix \ref{a:pricing_algo_bm} and the set of calibrated parameters, i.e.
\begin{equation}\label{params:rough_short}
(\alpha,\gamma,\theta,\sg,\rho,v_0)_{rough}^{short} = (0.511913,\,2.36609,\,0.424949,\,1.36839,\,-0.178493,\,0.527527)\, ,
\end{equation}
where $\sigma = \gamma \nu$, to generate option prices, and hence implied vols, at expiries corresponding to 4, 11, 17 and 25 days, respectively, and moneyness levels between 0.6 and 1.6 for the first two expiries, and between 0.4 and 1.75 for the others. Treating these as ``market'' quotes, we then recalibrated the model with our fast pricer (described in section~\ref{ss:calib_pricer}). The latter recovered virtually identical parameters:
\[
(\alpha,\gamma,\theta,\sigma,\rho,v_0)_{rough;fast}^{short} = (0.512399,\,2.38011,\,0.425275,\,1.37226,\,-0.178501,\,0.527526)\,.
\] 
These match the benchmark values within about $0.2\%$ on every parameter. The maximum absolute deviation in any parameter is only $1.4\times10^{-3}$ (occurring in the mean-reversion rate $\gamma$, which is the hardest to calibrate), and the average relative error is approximately $0.08\%$. These results confirm that  our fast pricer, based on conformal bootstrapping with sinh-deformation, is sufficiently accurate and robust for calibration, essentially reproducing the original model parameters.

\sbr 
Aside from the pricer's performance, we also find that the model calibration extrapolates well across time. Using the parameters calibrated to the 1W--2W expiries, we priced options at longer maturities that were not included in the calibration (3W and 7W expiries). The resulting implied volatility smiles, shown in Figure~\ref{fig:outsample}, indicate that the model's predictions remain close to the actual market smiles for these longer expiries. In other words, the rough Heston model calibrated to short-term options is able to capture the term structure of volatility out to about one month without any re-calibration. This is especially useful for applications to market making, since broker dealers or market makers often need to provide quotes for illiquid expiries.

\begin{figure}[t]
	\centering
	\includegraphics[width=0.48\textwidth]{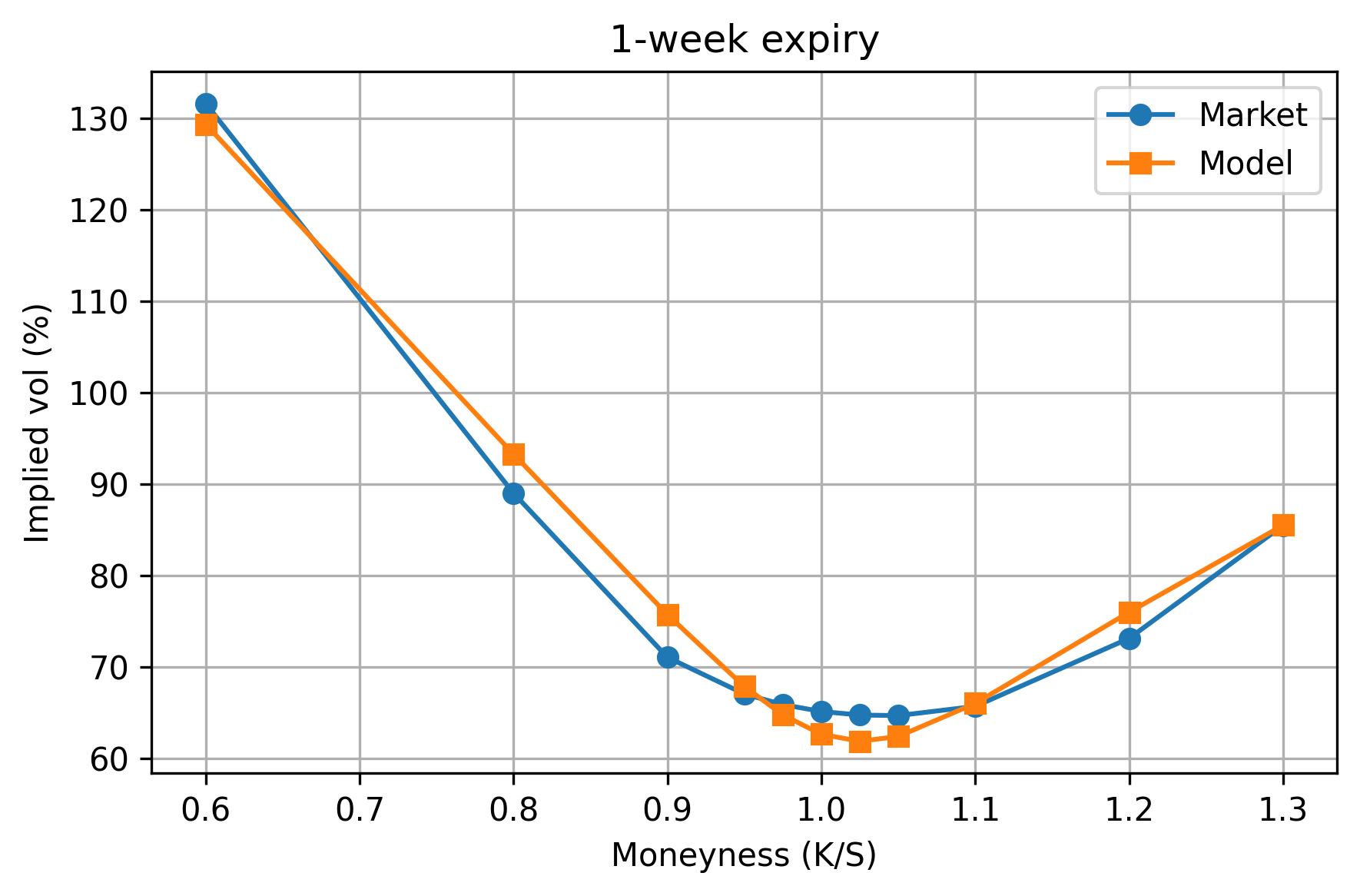}\hfill
	\includegraphics[width=0.48\textwidth]{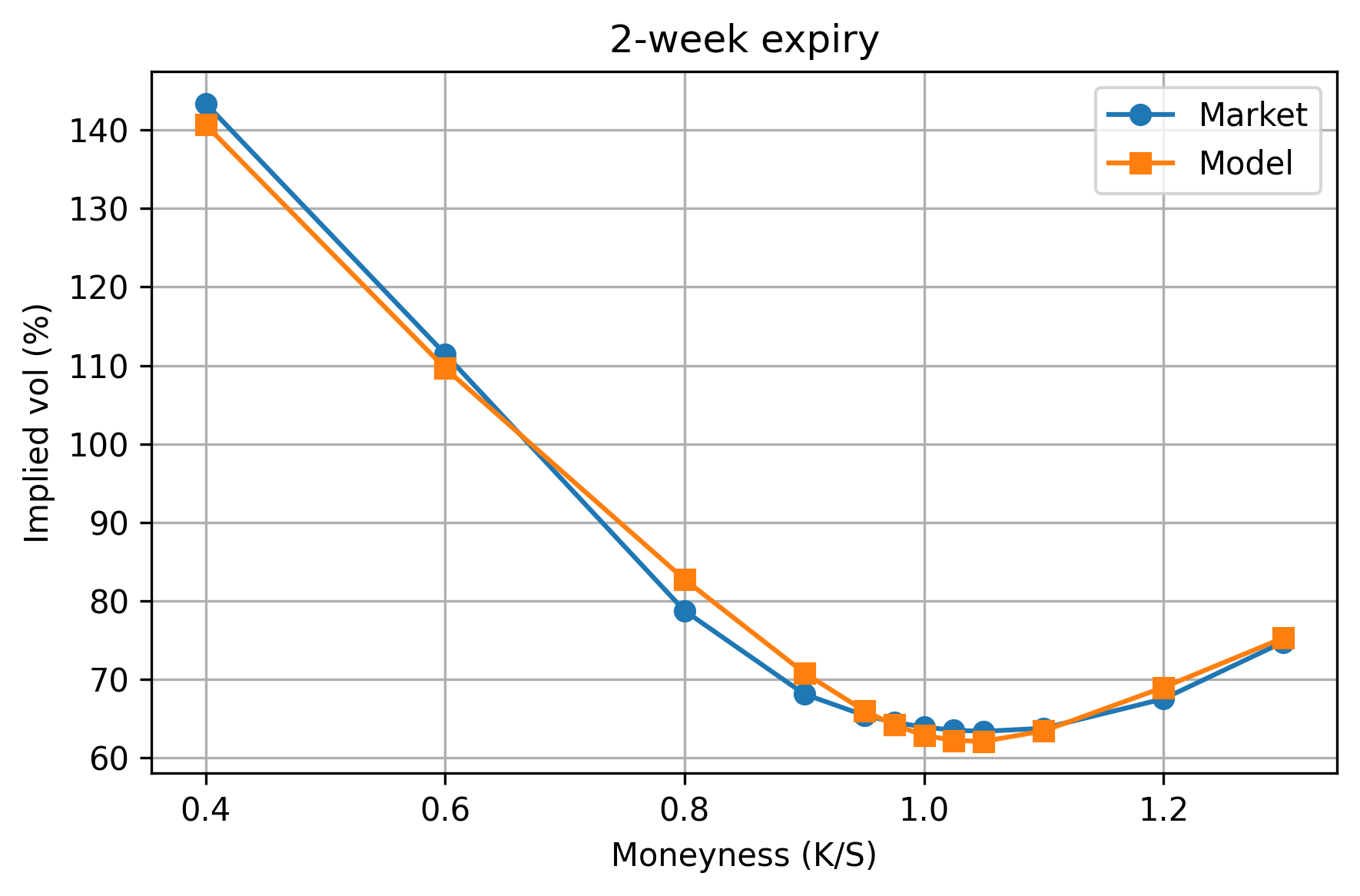}
	\caption{In-sample calibration of the rough Heston model to TSLA option smiles on 2~May~2025. Implied volatility (IV) smiles for the 1-week expiry (left panel) and 2-week expiry (right panel) are shown. Market IVs (blue) are closely fitted by the model IVs (orange). Model parameters are in equation \eqref{params:rough_short}}
	\label{fig:insample}
\end{figure}

\begin{figure}[h!]
	\centering
	\includegraphics[width=0.48\textwidth]{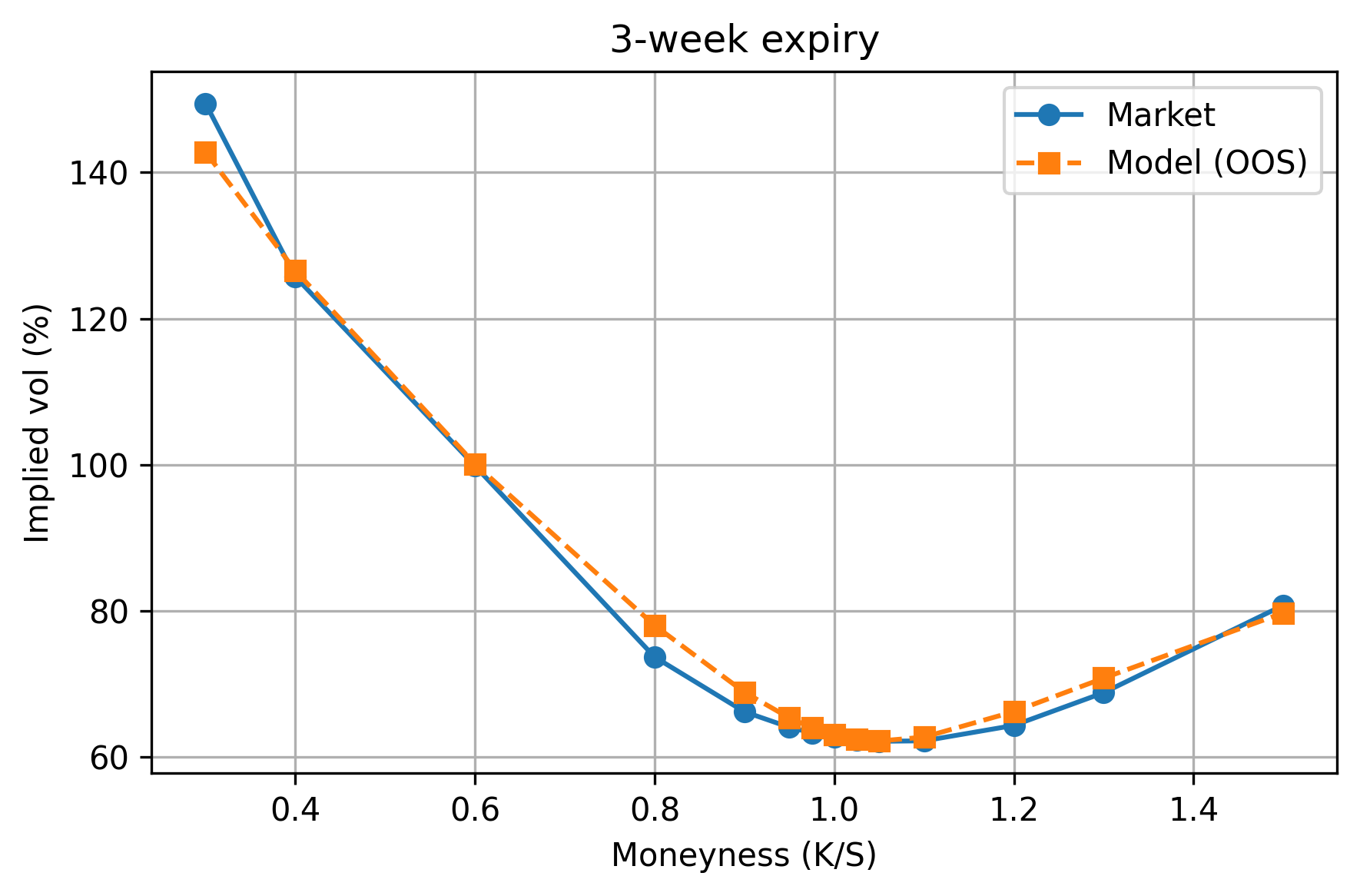}\hfill
	\includegraphics[width=0.48\textwidth]{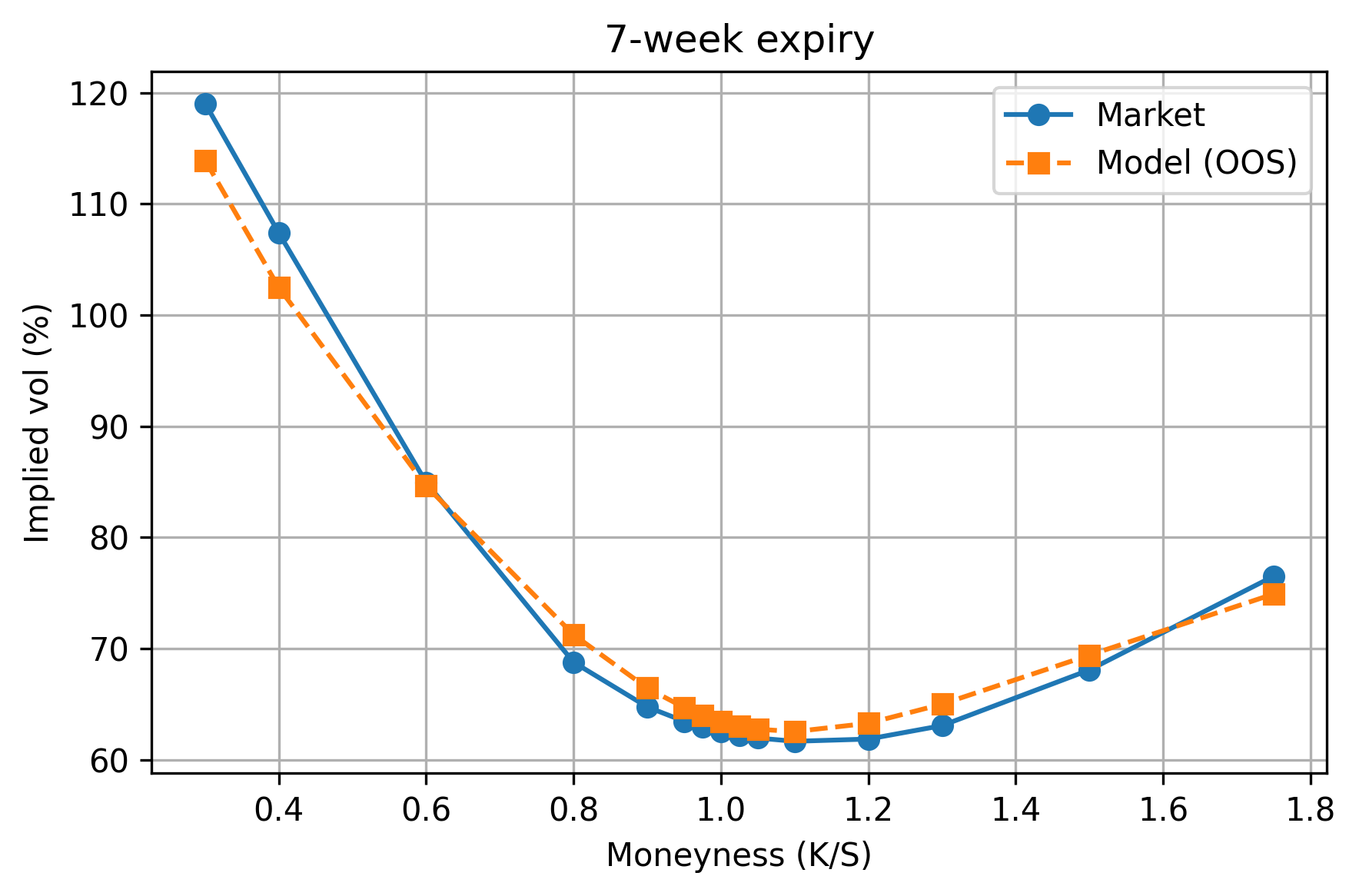}
	\caption{\small Out-of-sample implied volatility smiles at longer expiries on 2~May~2025, using the rough Heston parameters calibrated only on 1--2 week maturities. The model (orange lines) extrapolates the smile well for both the 3-week expiry (left) and 7-week expiry (right), staying in line with the market implied volatilities (blue lines). This demonstrates the model's robust extrapolation in the near-term maturity range. Model parameters are in equation \eqref{params:rough_short}.}
	\label{fig:outsample}
\end{figure}
\begin{figure}[t]
	\centering
	\includegraphics[width=0.48\textwidth]{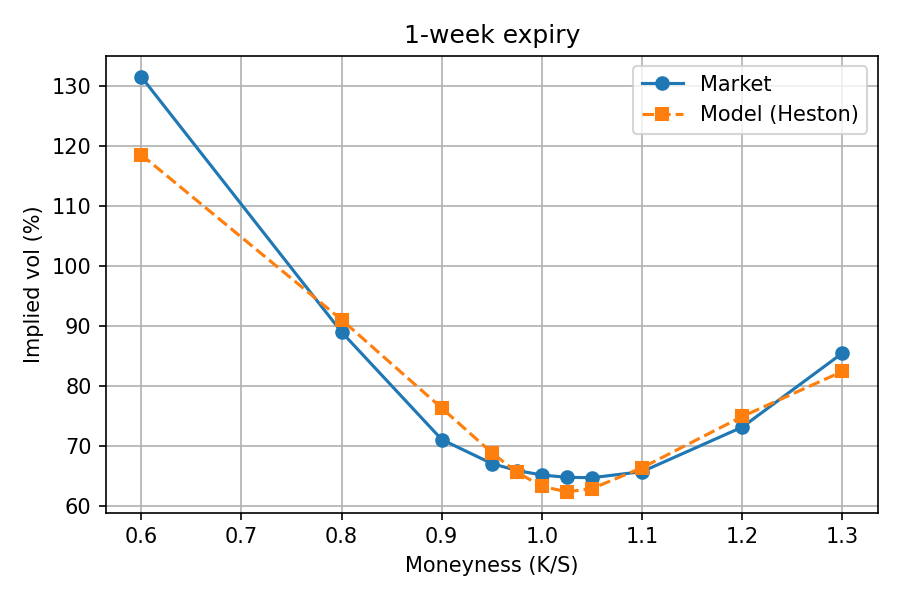}\hfill
	\includegraphics[width=0.48\textwidth]{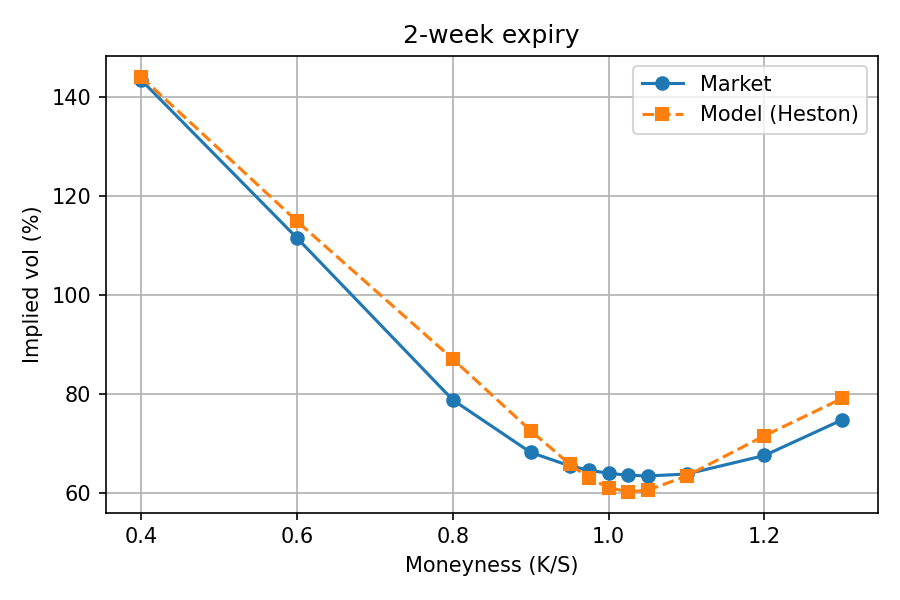}
	\caption{In-sample calibration of the \emph{regular} Heston model to TSLA option smiles on 2~May~2025. Implied volatility (IV) smiles for the 1-week expiry (left panel) and 2-week expiry (right panel). Market IVs (blue) versus model IVs (orange). Model parameters are in equation \eqref{params:heston_short}.}
	\label{fig:insample_heston}
\end{figure}

\begin{figure}[h!]
	\centering
	\includegraphics[width=0.48\textwidth]{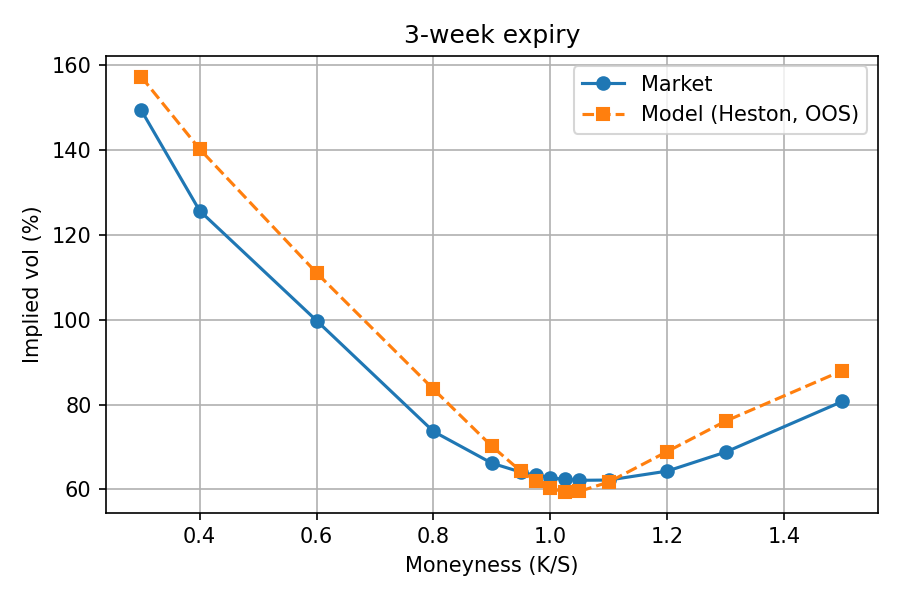}\hfill
	\includegraphics[width=0.48\textwidth]{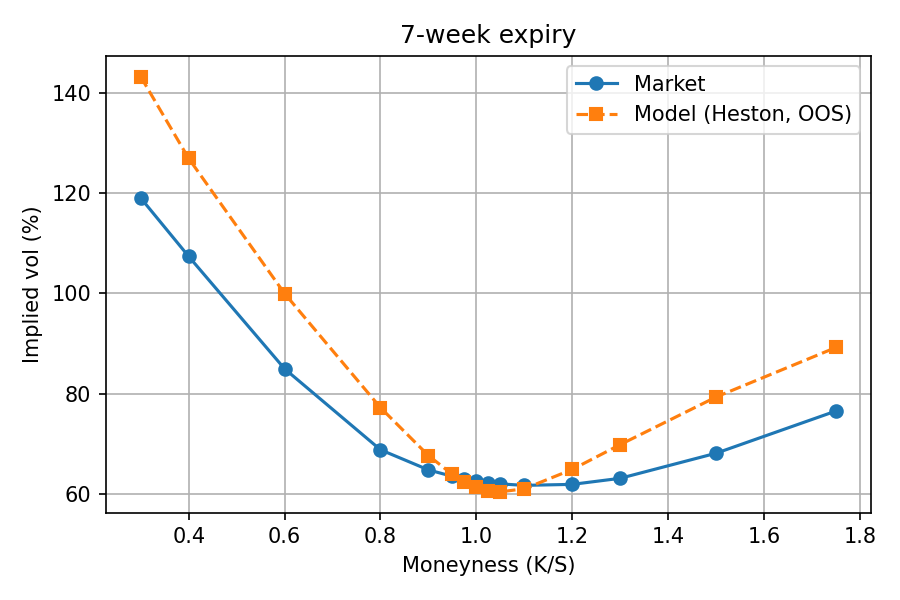}
	\caption{Out-of-sample implied volatility smiles at longer expiries on 2~May~2025, using the \emph{regular} Heston parameters calibrated only on 1--2 week maturities. The model (orange) extrapolates poorly relative to the rough Heston case, with the gap widening for the 3-week (left) and 7-week (right) expiries. Model parameters are in equation \eqref{params:heston_short}.}
	\label{fig:outsample_heston}
\end{figure}

\begin{figure}[t]
	\centering
	\includegraphics[width=0.48\textwidth]{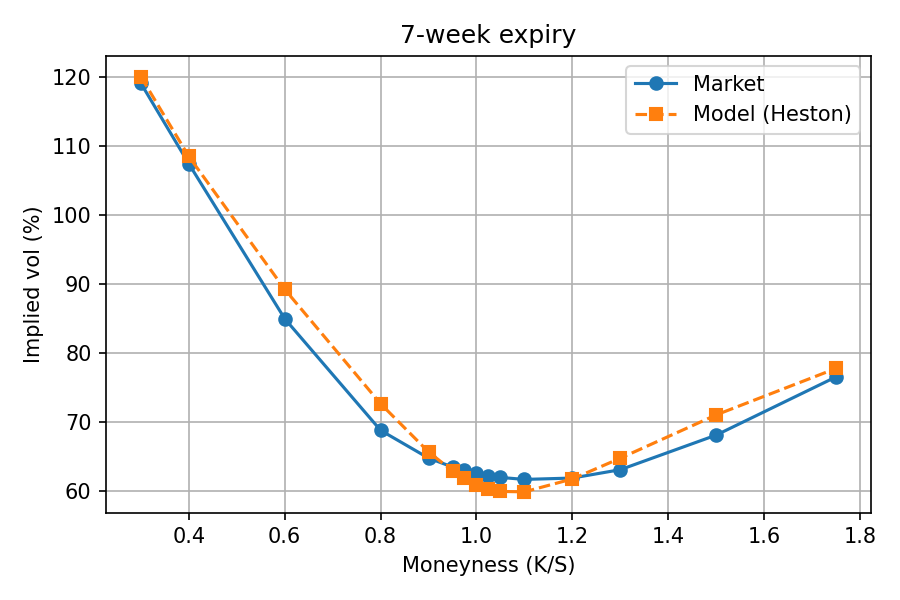}\hfill
	\includegraphics[width=0.48\textwidth]{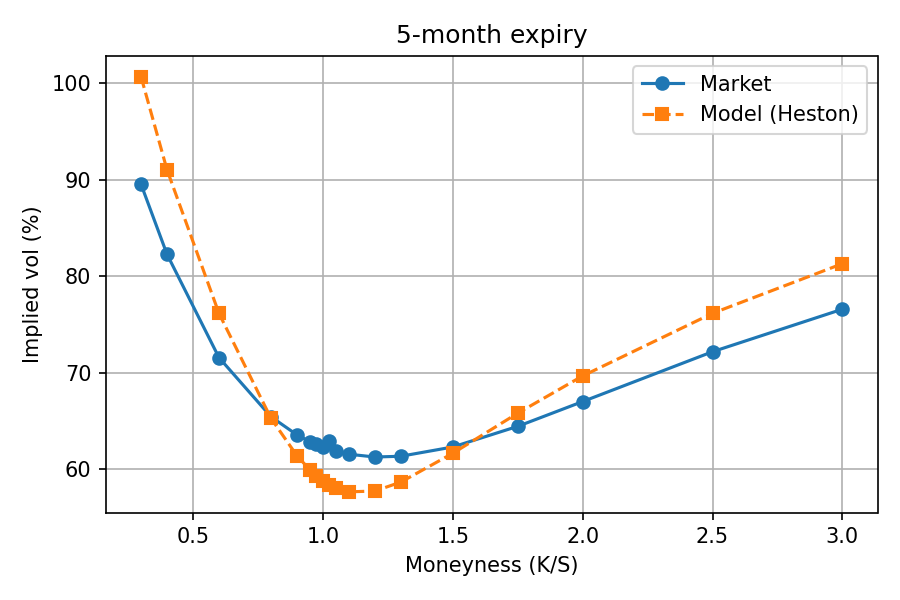}
	\caption{In-sample calibration of the \emph{regular} Heston model to TSLA option smiles on 2~May~2025 across six expiries between 3 weeks and 1.7 years. Implied volatility (IV) smiles for the 7-week expiry (left) and 5-month expiry (right). Model parameters are in equation \eqref{params:heston_long}.}
	\label{fig:insample_long_heston}
\end{figure}

In order to assess the calibration quality in a way familiar to practitioners, we follow the ``average percentage error'' (APE) definition of \cite{perfect-cal} and apply the same statistic directly to implied volatilities.\footnote{The original paper defines APE for option prices.  }  The resulting \emph{average volatility error} (AVE) at a given expiry is

\begin{equation}
	\mathrm{AVE}
	\;=\;
	\frac{\dfrac{1}{N}\displaystyle\sum_{i=1}^{N}
		\bigl| \sigma^{\text{mkt}}_i
		-\sigma^{\text{model}}_i \bigr|}
	{\overline{\sigma^{\text{mkt}}}}
	\times 100\%,
	\label{eq:ave_def}
\end{equation}
where $N$ is the number of strikes in the plot and
$\overline{\sigma^{\text{mkt}}}$ is the strike-averaged market IV.

Table~\ref{tab:ave} below lists the AVE values corresponding to the smiles
shown in Figures~\ref{fig:insample} - \ref{fig:outsample}.

\begin{table}[htb]
	\centering
	\caption{Average Volatility Error (AVE) by maturity, computed with Eq.~\eqref{eq:ave_def}, for the calibration to 1W and 2W. The rough Heston model (SINH-CB pricer) is compared against the regular Heston model. Rough Heston shows good out-of-sample performance at longer expiries, whereas regular Heston degrades markedly. Model parameters are in equations \eqref{params:rough_short} and \eqref{params:heston_short}.}
	\label{tab:ave}
	\begin{tabular}{@{}lccc@{}}
		\toprule
		Expiry & Strike range  & AVE (\%), rough Heston & AVE (\%), Heston \\ \midrule
		1~week  & $0.40\le K/S\le1.30$      & 2.85 & 4.03 \\
		2~weeks & $0.40\le K/S\le1.30$      & 1.95 & 3.93 \\
		3~weeks & $0.30\le K/S\le1.50$      & 2.04 & 6.96 \\
		7~weeks & $0.30\le K/S\le1.75$      & 2.43 & 10.08 \\
		5~months & $0.30\le K/S\le2.50$ & 1.96 & 16.99 \\ \bottomrule
	\end{tabular}
\end{table}

For the rough Heston calibration, the AVE stays well below 3\% even at five months' expiry, indicating robust extrapolation across maturities. By contrast, the regular Heston calibration (see Figures \ref{fig:insample_heston} and \ref{fig:outsample_heston}) is not only a poorer fit, but extrapolates much worse across time to maturity. The Heston parameters for the calibration to 1W and 2W expiries are
\begin{equation}\label{params:heston_short}
	(\gamma,\theta,\sg,\rho,v_0)_{Heston}^{short} = (5,\,1.07873,\,9.04923,\,-0.203652,\,0.474217).
\end{equation}

\sbr

To further validate the model's performance, we extend the calibration directly to these longer expiries, i.e. approximately 3 weeks, 7 weeks, 5 months, 7.5 months, 1.4 years, 1.7 years.  We focus specifically  on the 7-week and 5-month maturities. The parameters obtained using SINH-CB are:
\begin{equation}\label{params:rough_long}
(\alpha,\gamma,\theta,\sigma,\rho,v_0)_{rough}^{long} = (0.501043,\,1.37984,\,0.376052,\,1.26395,\,-0.17103,\,0.51981)\,.
\end{equation}
This calibration results in an average percentage error (APE) of $1.074\%$ and an average volatility error (AVE) of $1.104\%$. It is noteworthy that these errors are lower than those obtained for the calibration to only the first two expiries. This is to be expected, as implied volatility smiles at short expiries are more heavily dominated by jump dynamics, which are inherently more difficult to capture with pure diffusion models. Figure~\ref{fig:insample_long} illustrates the in-sample fit for these longer expiries. Furthermore, the in-sample errors for this longer expiry calibration are of the same order of magnitude as the out-of-sample errors discussed previously, reinforcing the consistency and stability of the method across the volatility surface, in this example. 
The regular Heston model behaves differently. When it is calibrated directly to the longer expiries (see Figure~\ref{fig:insample_long_heston}), it attains an APE of 3.262\%
 and an AVE 3.578\% in-sample. The corresponding parameters are
 \begin{equation}\label{params:heston_long}
 	(\gamma,\theta,\sigma,\rho,v_0)_{Heston}^{long} = (5,\,0.492407,\,4.85046,\,-0.217979,\,0.501616)\,.
 \end{equation}

 While this is a marked improvement on its poor out-of-sample performance at the same expiries (cf. Table~\ref{tab:ave}), it remains substantially worse than the rough Heston fit, whose in-sample errors stay near 1\%.

\begin{figure}[t]
	\centering
	\includegraphics[width=0.48\textwidth]{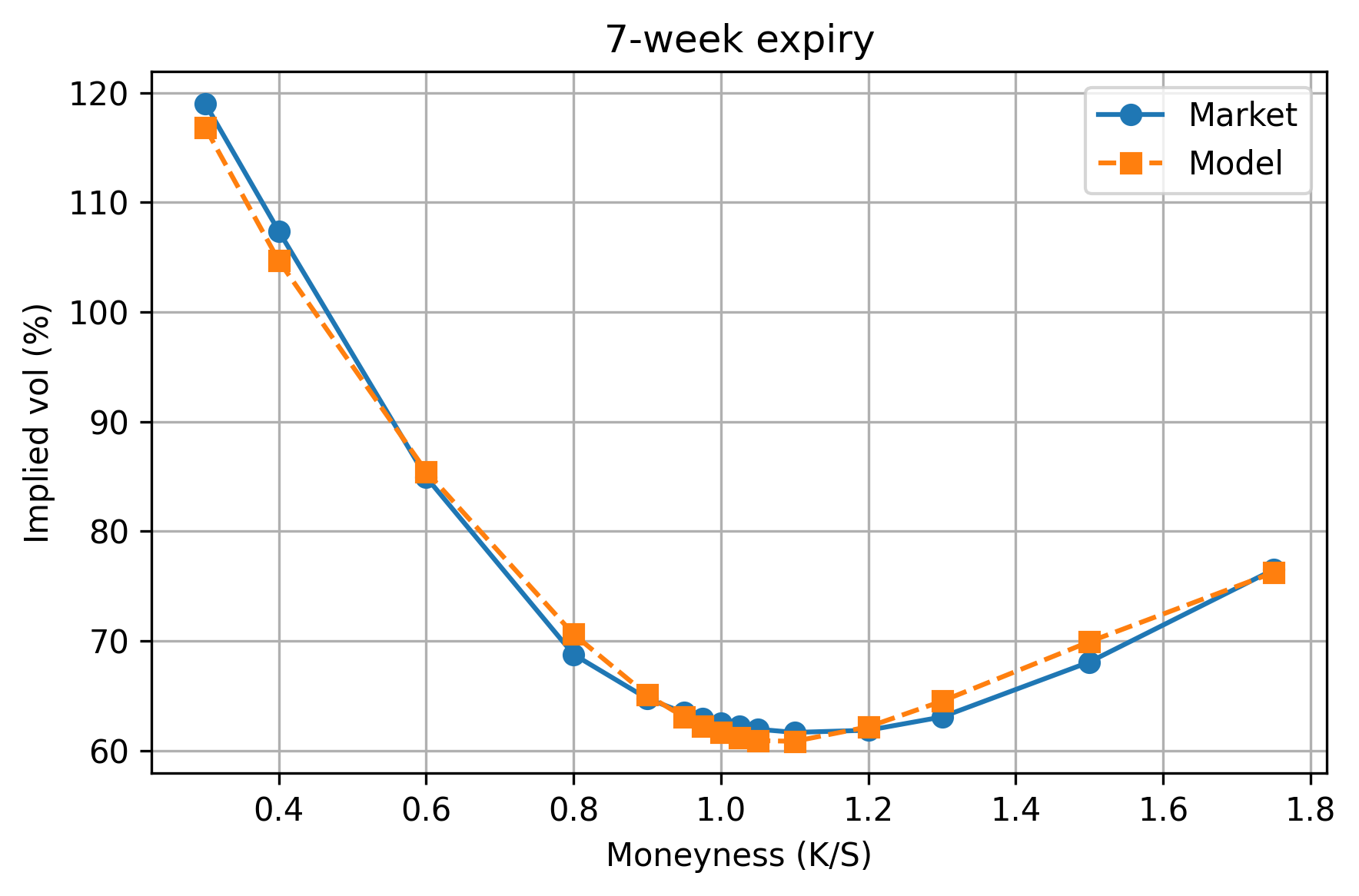}\hfill
	\includegraphics[width=0.48\textwidth]{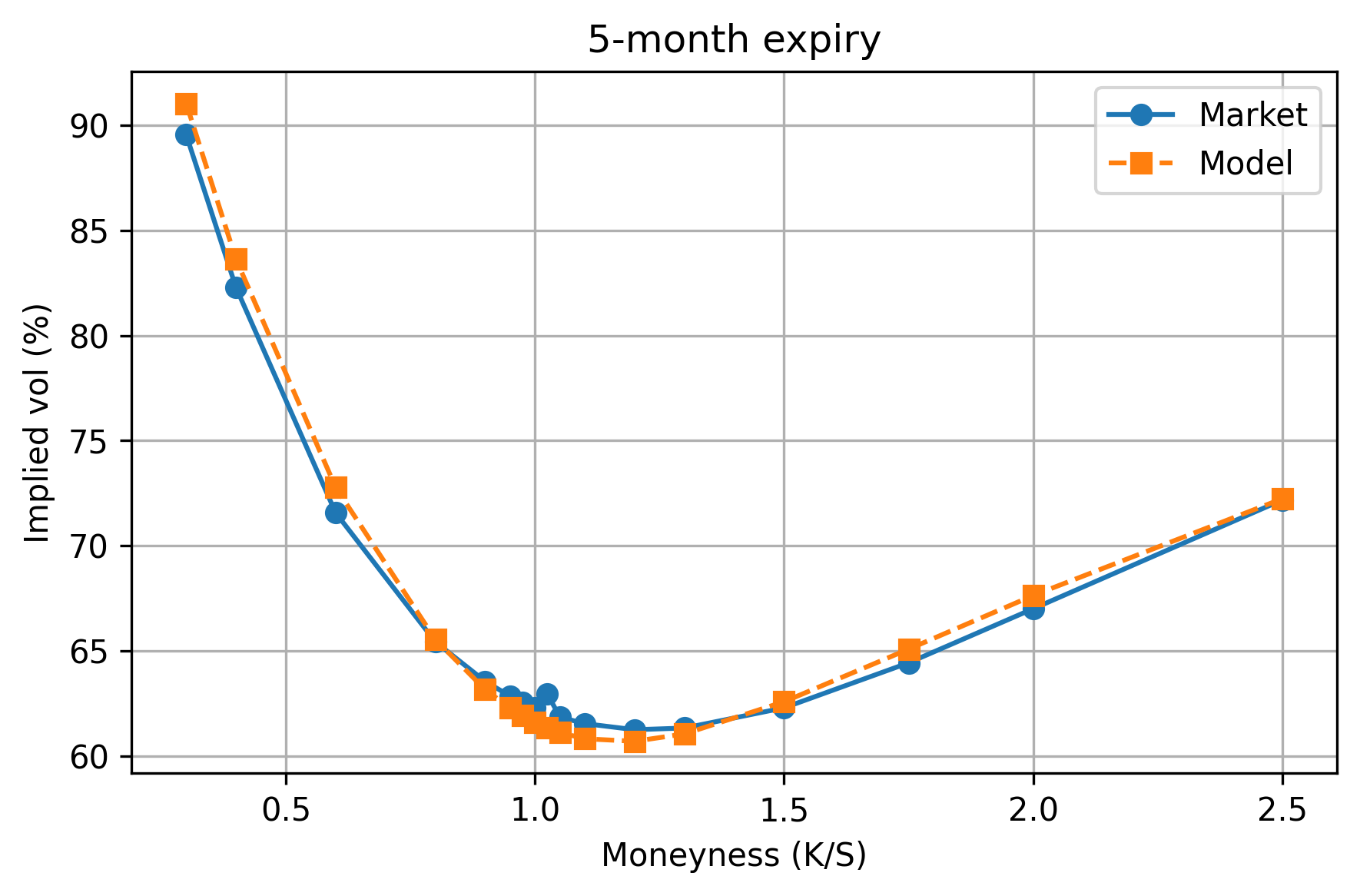}
	\caption{In-sample calibration of the rough Heston model to TSLA option smiles on 2~May~2025 to six expiries between 3 weeks and 1.7 years. Implied volatility (IV) smiles for the 7-week expiry (left panel) and 5-month expiry (right panel) are shown. Model parameters are in equation \eqref{params:rough_long}.}
	\label{fig:insample_long}
\end{figure}

\subsection{Calibration pitfalls}\label{ss:calib_laguerre}
We show the consequences on the calibration results of using fixed pricing settings, as often done by practitioners and 
academics alike, e.g. when using CM, COS, or even Gaussian quadratures. We compare the calibration results obtained in the previous
section with those obtained using Gauss-Laguerre quadrature with $N = 200$ nodes, and $M = 6000$ time steps in the Adams method, which is higher than the value of $M$ used for SINH-CB for any of the calibrations described in the previous section\footnote{We verified that increasing $M$ beyond this value left the results essentially unchanged, and $N=200$ is at the top of the range supported by double precision arithmetic for the Gauss-Laguerre nodes.} 

The same modification of the Adams method is used as for SINH-CB, and the model is calibrated to the same implied volatilities for 1W and 2W expiries, as in section \ref{ss:calib_sinh}. The resulting parameters are 
\begin{equation}\label{params:rough_short_laguerre}
(\alpha,\gamma,\theta,\sigma,\rho,v_0)^{short}_{rough;\text{GL}} = (0.587271,\,3.22767,\,0.219608,\,1.49494,\,-0.310089,\,0.552303)\,.
\end{equation}
The maximum absolute deviation in any parameter is  48.3\%, for $\theta$, and the average deviation is 31.2\%. 
Figure \ref{fig:ghost_sundial_calib} below shows the pitfalls of inaccurate pricing in model calibration. The left panel shows the ``ghost calibration'' effect: the solid red line is the 1W implied volatility calculated from the model parameters above, which appears to fit the market, however this is an illusion. When the model's true volatility curve is calculated using SINH-CB (dashed red line), it is revealed to be a poor match. The right panel shows the ``sundial effect'': our proposed SINH-CB method finds a superior fit to the market (solid blue line). If one tries to use the GL pricer on this superior model, it fails to reproduce the correct volatilities, especially in the right tail. This shows how a flawed tool can cause a good model to be incorrectly dismissed. 

\begin{figure}[t]
	\centering
	\includegraphics[width=0.5\textwidth]{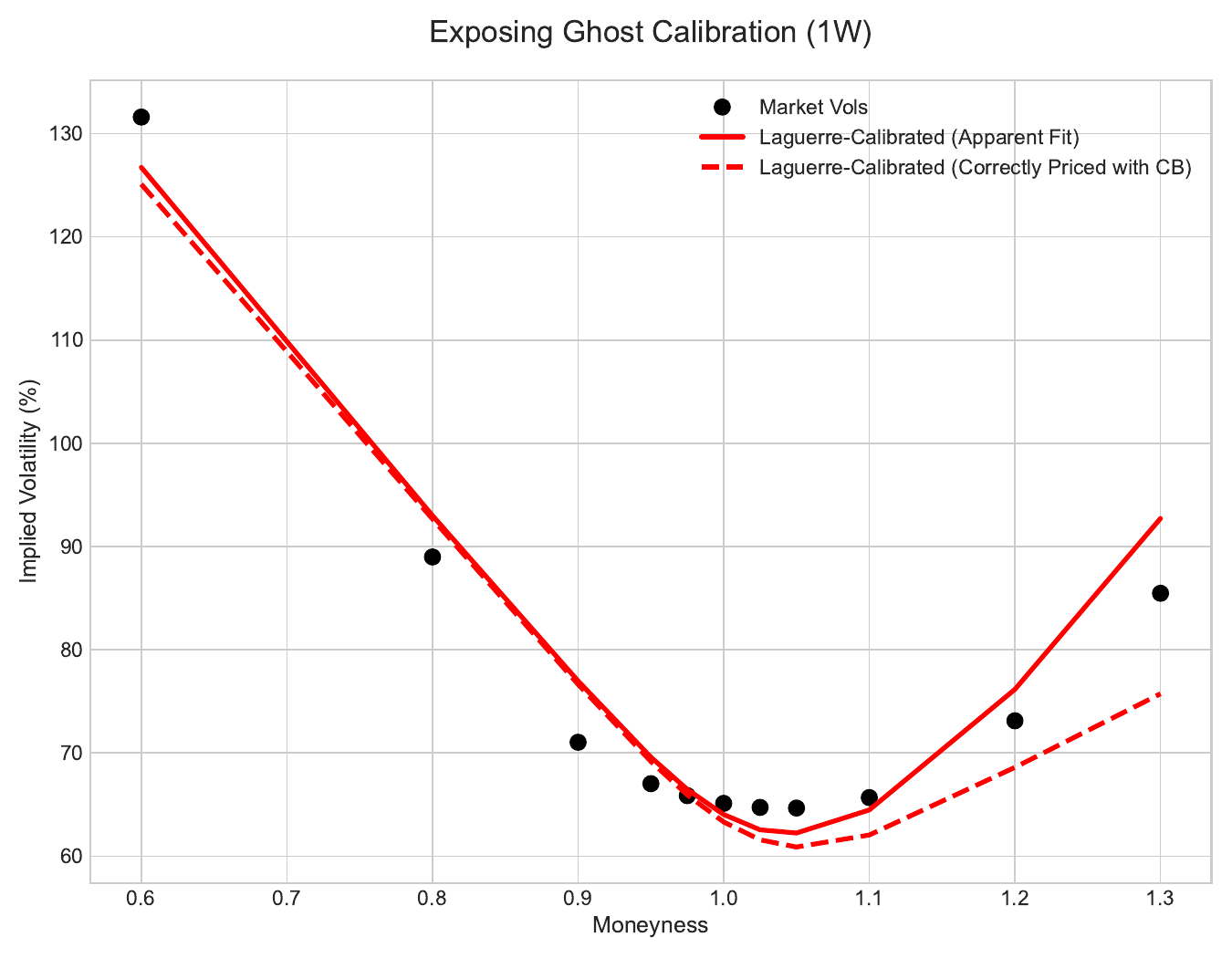}\hfill
	\includegraphics[width=0.5\textwidth]{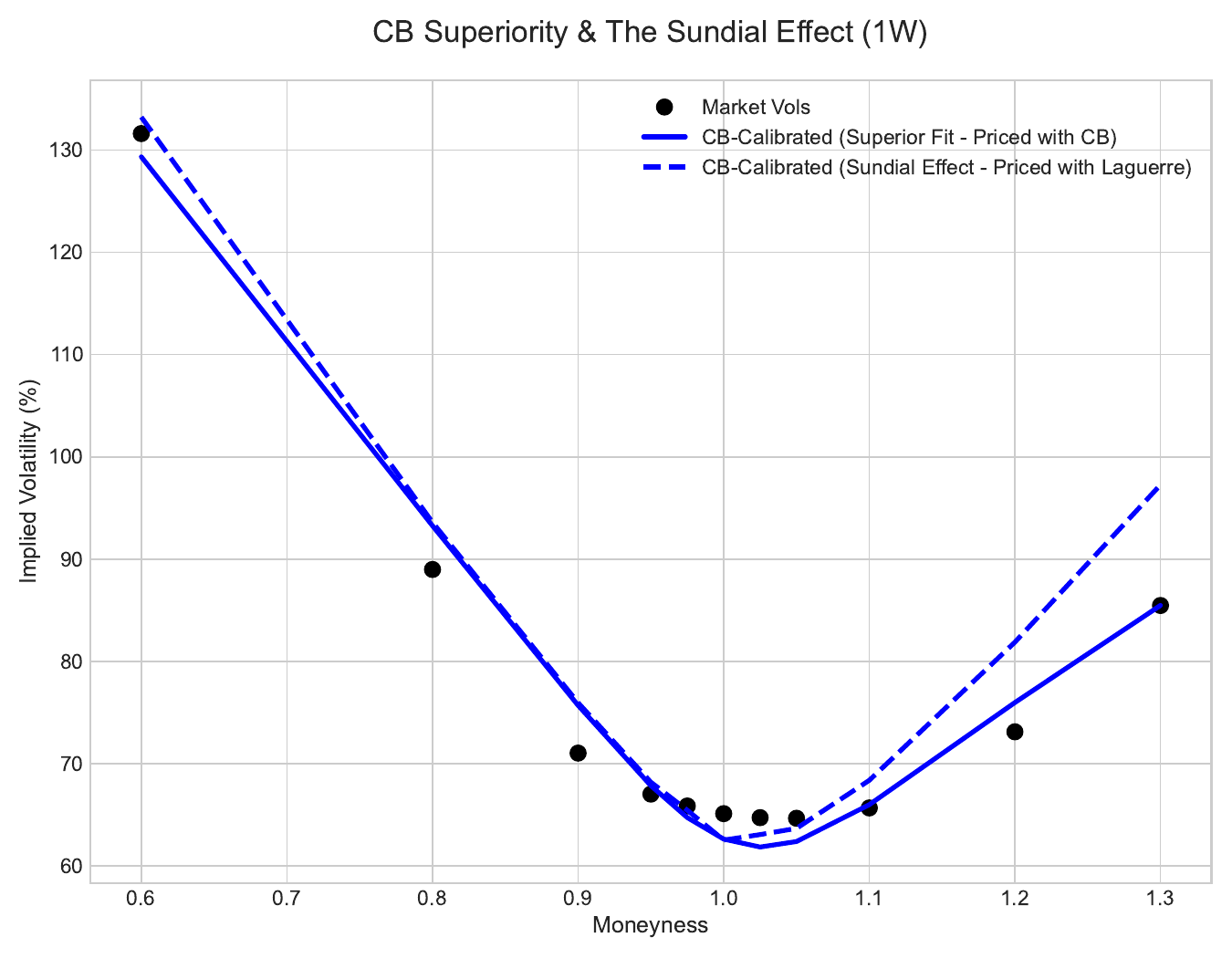}
	\caption{Calibrated 1W smiles with the rough Heston model to 1W and 2W TSLA implied vols  on 2~May~2025, using Gauss-Laguerre with 200 nodes and 1000 Adams time steps. Left panel: GL pricer produces a model that appears to fit the market (solid red line); the dashed red line shows the actual volatility curve calculated from the same parameters. Right panel: GL pricer fails to price accurately the superior fit to the market data. The CB-calibrated parameters are in \eqref{params:rough_short}, the GL-calibrated ones are in \eqref{params:rough_short_laguerre}.}
	\label{fig:ghost_sundial_calib}
\end{figure}

\sbr

To demonstrate that these numerical pitfalls are not unique to short-dated options, we replicate the experiment for the 7-week expiry. Figure~\ref{fig:ghost_sundial_calib_7W} illustrates the ghost calibration and sundial effects at this longer horizon.  
The Gauss-Laguerre-calibrated parameters for the longer expiries are 
\begin{equation}\label{params:rough_long_laguerre}
	(\alpha,\gamma,\theta,\sigma,\rho,v_0)^{long}_{rough;\text{GL}} = (0.530213,\,1.59172,\,0.216243,\,1.49494,\,,-0.519145,\,0.5342681)\,.
\end{equation}
Similar to the 1-week expiry, using the fixed Gauss-Laguerre setting produces an apparent fit (ghost calibration) that breaks down when the true volatility curve is evaluated with our precise SINH-CB method (left panel). Conversely, when the model is properly calibrated using SINH-CB, the GL pricer dramatically misprices the right tail, once again exhibiting the sundial effect (right panel). This confirms that accurate pricing mechanisms are strictly necessary across the entire term structure.

\begin{figure}[t]
	\centering
	\includegraphics[width=0.5\textwidth]{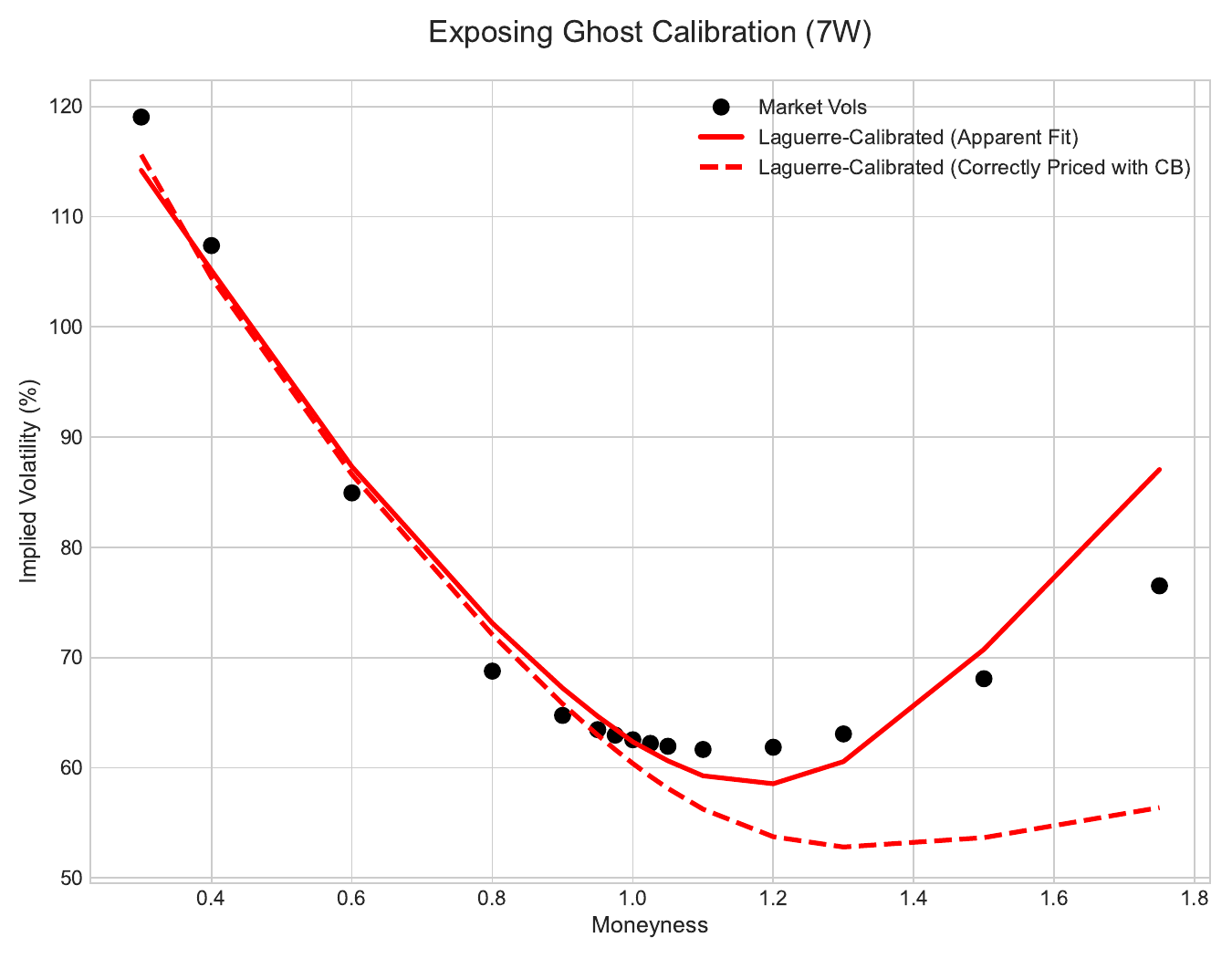}\hfill
	\includegraphics[width=0.5\textwidth]{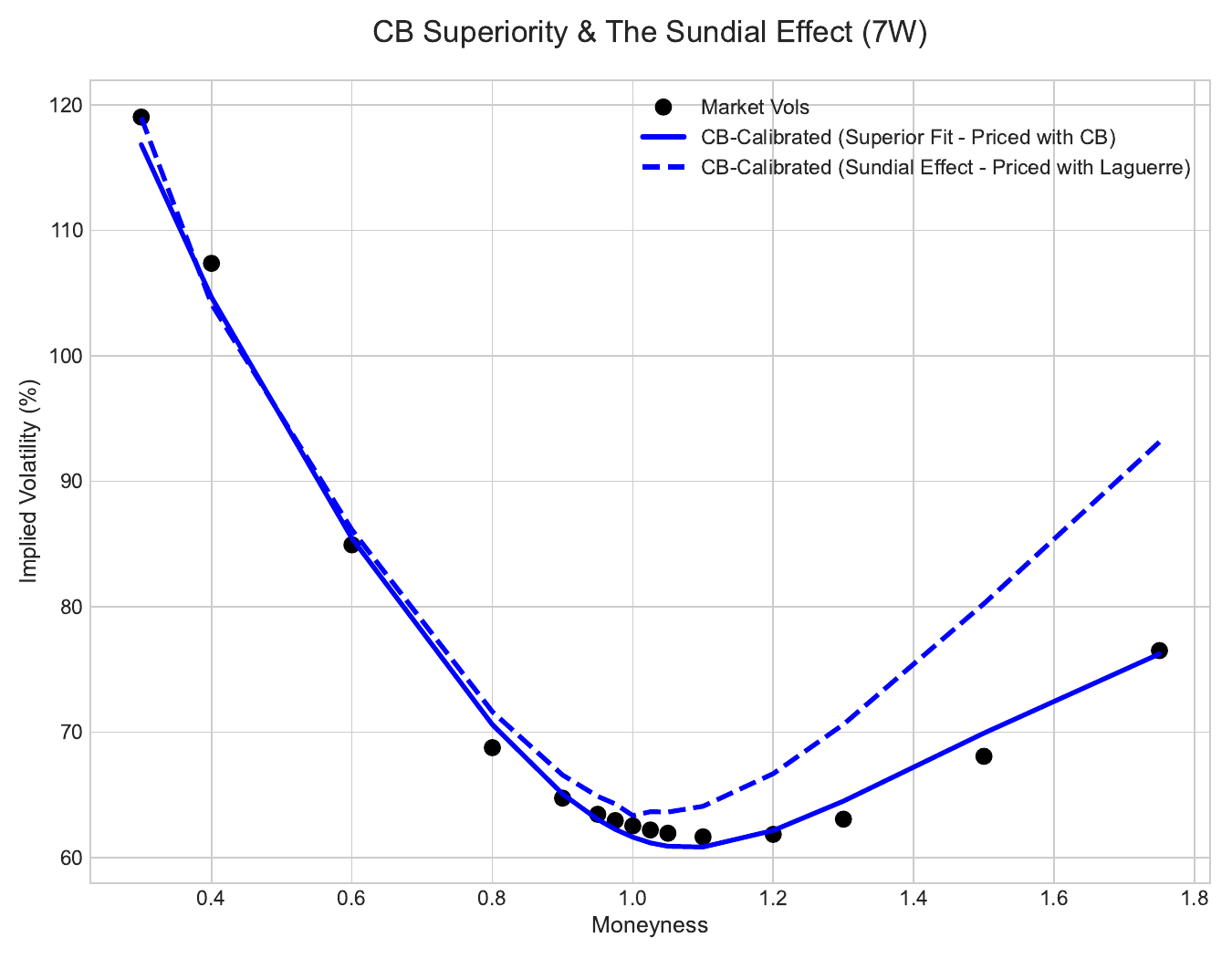}
	\caption{Calibration pitfalls at the 7-week expiry (cf. Figure \ref{fig:ghost_sundial_calib}). Left panel: The ghost calibration effect persists at longer maturities, where numerical errors in the GL pricer mask a poor fit. Right panel: The sundial effect demonstrates the GL pricer's failure to accurately evaluate the superior CB-calibrated model in the right tail. The CB-calibrated parameters are in \eqref{params:rough_long}, the GL-calibrated ones are in \eqref{params:rough_long_laguerre}.}
	\label{fig:ghost_sundial_calib_7W}
\end{figure}

\section{Conclusion}\label{s:concl}
In the paper, we analyzed sources of errors of several popular methods of Fourier inversion,
  and compare their performance in application to the Heston model and rough Heston model.
  The pricing method and analysis of other methods are quite general and applicable to wide classes of models where the characteristic functions can be calculated explicitly or numerically,
   to the Bates model and its rough analog in particular.  If the characteristic function can be calculated only numerically, then 
  an additional source of errors appears, and the errors increase as the maximal absolute value
  of the spectral parameter $\xi$ that is used in the quadrature for the Fourier inversion increases.
  This can be a serious problem in affine SV models. The difficulties are more serious in the case
  of the rough Heston model because the characteristic function is expressed in terms of
  the solution of fractional Volterra equation. We analyzed errors of solutions of various modifications of the Adams method, and constructed a new modification. We document an interesting effect. 
  Due to inherent instability of solutions of the fractional Volterra equation, if one variation of the Adams method is used, the numerically calculated characteristic function starts decreasing extremely fast as $\xi\to\infty$ along the contour of integration and one observes
  excellent albeit fallacious convergence. If another variation is used, then the absolute value
  of the characteristic exponent starts to increase and one is forced to either use a small number of terms $N$ in
  the quadrature for the evaluation of the characteristic exponent or significantly increase the number of terms
 $M$ in the Adams method or its variation. In the first case, the accuracy decreases, and in the second one - the CPU time increases. In practice, a time-extensive procedure is never used, hence, the final result depends on the choice of $N$ and $M$. We can say that for pricing in complicated SV models, the uncertainty principle holds.
 
 To control the errors in difficult situations where the theoretically sound error control is unavailable
or too complicated for practical purposes, we suggest the \emph{Conformal Bootstrap principle} (CB principle).
We use the principle in applications to the Heston and rough Heston models; the principle can be used
in applications to other models as well. Using the methods developed in the paper and the CB principle, we address the outstanding problem
\emph{Markovian or non Markovian} - which class of models fits the empirical data best. 
We produce examples that demonstrate that drawbacks of the rough Heston model documented in the literature can be artifacts of insufficiently accurate calculations. 

We calibrated the rough Heston model to the empirical data for TSLA on a hectic day in the markets.
The AVE pricing error is much lower across the board than for regular Heston, lying below 3\% in-sample and remaining around 2\%  out-of-sample even when extrapolated to five months, an order of magnitude smaller than the regular Heston model over these same expiries. The model demonstrates excellent and stable performance across the volatility surface, in our example.

In the paper, we constructed an accurate and fast SINH-CB method for the pricing of European options in
the rough Heston model, amenable to automatic and efficient error control. The main ingredients are:
1) an appropriate  conformal deformation of the contour of integration in the Fourier inversion formula,
followed by the corresponding conformal change of variables and application of the simplified trapezoid rule
(sinh-acceleration); 2) a modification of the fractional Adams method, which is a crucial improvement in the presence of a large spectral parameter; 3) an ad-hoc Conformal Bootstrap principle: if the prices obtained with two deformations
differ by less than $10^{-m}$, where $m\ge 5$, then the probability (understood in the colloquial sense) that either of the prices differs from the correct price (calculated using a perfect pricer)
by more than $10^{-m+2}$ is negligible.\footnote{Ingredients 1) and 3) can be used in calibration procedures of
any model where the Fourier transform technique is used} Using the pricer, we constructed a novel fast and accurate calibration scheme,
and applied the scheme to calibration of the rough Heston model to the real data. 
The calibration scheme satisfies all the requirements needed for practical applications.
  The pricing needs to take calculation times of the order of a  millisecond in a production environment, which usually corresponds to a few tens of a millisecond, when implemented in Python on a home PC, without compilation or optimisation. Moreover, as noted in \cite{HestonCalibMarcoMeRisk}, the calculation needs to be extremely accurate, in order to cope with both very long and very short maturities, and with options which are very far in and out of the money, for a wide range of model parameters that can result from calibration to market data. This is especially true in the context of regulatory counterparty credit risk, where exposure profiles are effectivised\footnote{For capital requirements, the ``Effective Expected Positive Exposure'' is set equal to $\text{EEPE} = \sup_{t \in [0, T]} \mathbb{E} \left[ \max(V_t, 0) \right]$, where $V_t$ is the time-$t$ portfolio value.} and therefore any large errors which occur during the Monte Carlo simulation can propagate across time.
 In recent years, a growing body of literature has explored the use of machine learning (ML) techniques, particularly deep neural networks, to accelerate pricing and calibration under rough volatility. Horvath, Muguruza, and Tomas~\cite{HorvathMuguruzaTomas2021}, as well as Bayer, Horvath, and Stemper~\cite{BayerHorvathStemper21}, have introduced deep learning  approaches to approximate the pricing map in rough Bergomi and rough Heston models, which are sufficiently fast for real-time calibration. However, such methods suffer from one common drawback: in addition to speed, banks must ensure accuracy, transparency, and stability for such methods to be used in pricing, hedging, and risk management, especially in models subject to regulatory approval. For example, the Federal Reserve's SR 11-7 regulatory standard for model validation, which is used in most sell-side institutions, asks to perform critical analyses to determine the model's assumptions and limitations, as well as to \emph{``establish the boundaries of model performance by identifying the acceptable range of inputs as well as conditions
	under which the model may become unstable or inaccurate''} \cite{SR11_7}. Clearly, this is extremely challenging for a black-box pricer. 
		
	The fast reliable 
 SINH-CB method (sinh-acceleration - conformal bootstrap) constructed in the paper is significantly faster than other methods and satisfies all the requirements above whereas popular methods are either too slow or inaccurate or unreliable.
  In our numerical experiments, the SINH-CB method demonstrated a sufficiently good accuracy for calibration purposes if the OTM option prices larger than $E-07$ of the spot price $S_0$ were used.
 Some of the popular methods work fairly well in a narrower region of the strike-maturity $(K,T)$ - plane \emph{
  provided the parameters of the scheme are chosen correctly}  but are slower and not so reliable;
 other methods are rather inaccurate in wide regions in the $(K,T)$ plane, which leads to very inaccurate
prices and implied volatility curves. Following Leo Tolstoy (``All happy families are alike; each unhappy family is unhappy in its own way"), we 
 can formulate the \emph{Anna Karenina principle for option pricing}: in a good region of $(K,T)$-space, all reasonable models and pricing methods are alike; close/far from maturity and far in the tails, models and pricing methods perform differently

Our numerical experiments demonstrate that SINH-CB method satisfies all the requirements better than
other methods, followed by a simple modification of the standard Fourier inversion method (Flat iFT with the BM correction); the Gauss-Laguerre method requires 1.3-5 times more terms than SINH-CB to achieve the accuracy sufficient for applications. If the number of nodes needs to be increased, then all values of the integrand needs to be recalculated whereas in the case SINH-CB, only a small number of additional terms need to be calculated.
In the case of the rough Heston model, the evaluation of the integrand at chosen nodes is very time consuming, hence,
SINH-CB has an additional advantage. We explained the sources of instability of the Gauss-Laguerre quadrature and
COS and SINC method. The Gauss-Laguerre quadrature is potentially unstable (and not mathematically justified)
if the rate of decay of the integrand is small, which may happen for short maturity options in the Heston model
(presumably, in the rough Heston model as well). COS and SINC method use unreliable recommendations for the choice of the truncation parameter and cannot be accurate for options of short maturity, especially the deep OTM options.

We also showed that even if the pricer is sound, one cannot hope to use the same parameters of the numerical scheme
for all $(K,T)$ in the data set and all parameters of the model; the number of terms in the Fourier inversion formula
and step in the modified Adams method needed to satisfy the desired error tolerance can significantly increase and decrease, respectively.
We demonstrated that an unstable and/or inaccurate pricer produces
spurious wings of volatility curves, and the 
 shape of the surface may strongly depend on the choice of the parameters of 
the numerical scheme.
We can formulate  {\sc The Uncertainty Principle of calibration}: using different parameters of the numerical scheme,
one can produce a host of different prices and volatility curves and surfaces, and choose shapes one likes better.

Finally, using the calibration method developed in the paper, we showed that the rough Heston model
with a very small Hurst index $H=0.012$ gives a very good fit (both in  and out of the sample) calibrated to options on Tesla stock of maturities 1W-2W; the performance remains equally good out of the sample (maturities 3W and 7W). 
We also demonstrated how pricing algorithms with fixed, seemingly conservative numerical settings can generate erroneous calibration parameters. These numerical errors can, in turn, lead to the incorrect dismissal of a valid model.
We further calibrated the model directly to a broad range of maturities, from three weeks to 1.7 years, and found that it continues to fit the market well, across the entire term structure, with in-sample errors of around 1\%. By contrast, the regular Heston model, calibrated to the same data, fits markedly worse and extrapolates poorly across maturities, in our example, degrading from roughly 4\% in-sample to 10\%--17\% out-of-sample, against the rough model's stable ~2\% fit. This indicates that the rough Heston model's good performance is not confined to the short end, but holds across the volatility surface, at least for this data set.


\begin{thebibliography}{10}

\bibitem{handbook}
M.~Abramowitz and I.~Stegun.
\newblock {\em Handbook of Mathematical Functions, with Formulas, Graphs and
  Mathematical Tables}.
\newblock Dover Publications, Mineola, NY, 1965.

\bibitem{AlfonsiKebaier2024}
A.~Alfonsi and A.~Kebaier.
\newblock Approximation of stochastic volterra equations with kernels of
  completely monotone type, 2024.
\newblock Available at https://arxiv.org/abs/2102.13505.

\bibitem{B-N}
O.E. Barndorff-Nielsen.
\newblock Processes of {N}ormal {I}nverse {G}aussian {T}ype.
\newblock {\em Finance and Stochastics}, 2:41--68, 1998.

\bibitem{B-N-L}
O.E. Barndorff-Nielsen and S.Z. Levendorski\v{i}.
\newblock Feller {P}rocesses of {N}ormal {I}nverse {G}aussian type.
\newblock {\em Quantitative Finance}, 1:318--331, 2001.

\bibitem{Baschetti2022}
F.~Baschetti, G.~Bormetti, S.~Romagnoli, and P.~Rossi.
\newblock The {S}{I}{N}{C} way: a fast and accurate approach to {F}ourier
  pricing.
\newblock {\em Quantitative Finance}, 22(3):426--446, 2022.

\bibitem{MarkovianBB2021}
C.~Bayer and S.~Breneis.
\newblock Markovian approximations of stochastic {V}olterra equations with the
  fractional kernel, 2023.
\newblock Available at https://arxiv.org/abs/2108.05048.

\bibitem{MarkovianGG}
C.~Bayer and S.~Breneis.
\newblock Weak markovian approximations of rough {H}eston, September 2023.
\newblock Available at https://arxiv.org/abs/2309.07023.

\bibitem{BayerFritzGatheral2015}
C.~Bayer, P.~Fritz, and J.~Gatheral.
\newblock Pricing under rough volatility.
\newblock {\em Quantitative Finance}, 15(1):1--18, 2015.

\bibitem{BayerFrizFukasawaGatheralJacquierRosenbaum2024}
C.~Bayer, P.K. Friz, M.~Fukasawa, J.~Gatheral, A.~Jacquier, and M.~Rosenbaum.
\newblock {\em Rough volatility}.
\newblock SIAM, Philadelphia, PA, 2024.

\bibitem{BayerHorvathStemper21}
C.~Bayer, B.~Horvath, and B.~Stemper.
\newblock On deep calibration of rough stochastic volatility models.
\newblock {\em Quantitative Finance}, 21(3):419--437, 2021.

\bibitem{one-sidedCDS}
M.~Boyarchenko and S.~Levendorski\u{i}.
\newblock Ghost {C}alibration and {P}ricing {B}arrier {O}ptions and {C}redit
  {D}efault {S}waps in spectrally one-sided {L}\'evy models: The {P}arabolic
  {L}aplace {I}nversion {M}ethod.
\newblock {\em Quantitative Finance}, 15(3):421--441, 2015.
\newblock Available at SSRN: http://ssrn.com/abstract=2445318.

\bibitem{RoughHestonWeMarco2025}
S.~Boyarchenko, M.~de~Innocentis, and S.~Levendorski\u{i}.
\newblock Fast reliable pricing and calibration of the rough heston model.
\newblock Working paper, August 2025.
\newblock Available at SSRN: http://ssrn.com/abstract=5399128 or
  http://arxiv.org/abs/2508.15080.

\bibitem{BL-FT}
S.~Boyarchenko and S.~Levendorski\u{i}.
\newblock On rational pricing of derivative securities for a family of
  non-{G}aussian processes.
\newblock Preprint 98/7, Institut f{\"u}r Mathematik, Universit{\"a}t Potsdam,
  January 1998.
\newblock Available at http://opus.kobv.de/ubp/volltexte/2008/2519.

\bibitem{genBS}
S.~Boyarchenko and S.~Levendorski\u{i}.
\newblock Generalizations of the {B}lack-{S}choles equation for truncated
  {L}\'evy processes.
\newblock Working Paper, University of Pennsylvania, April 1999.

\bibitem{KoBoL}
S.~Boyarchenko and S.~Levendorski\u{i}.
\newblock Option pricing for truncated {L}\'evy processes.
\newblock {\em International Journal of Theoretical and Applied Finance},
  3(3):549--552, July 2000.

\bibitem{iFT0}
S.~Boyarchenko and S.~Levendorski\u{i}.
\newblock New efficient versions of {F}ourier transform method in applications
  to option pricing.
\newblock Working paper, 2011.
\newblock Available at SSRN: http://ssrn.com/abstract=1846633.

\bibitem{iFT}
S.~Boyarchenko and S.~Levendorski\u{i}.
\newblock Efficient variations of {F}ourier transform in applications to option
  pricing.
\newblock {\em Journal of Computational Finance}, 18(2):57--90, 2014.
\newblock Available at http://ssrn.com/abstract=1673034.

\bibitem{SINHregular}
S.~Boyarchenko and S.~Levendorski\u{i}.
\newblock Sinh-acceleration: Efficient evaluation of probability distributions,
  option pricing, and {M}onte-{C}arlo simulations.
\newblock {\em International Journal of Theoretical and Applied Finance},
  22(3):1950--011, 2019.
\newblock DOI: 10.1142/S0219024919500110. Available at SSRN:
  https://ssrn.com/abstract=3129881 or http://dx.doi.org/10.2139/ssrn.3129881.

\bibitem{ConfAccelerationStable}
S.~Boyarchenko and S.~Levendorski\u{i}.
\newblock Conformal accelerations method and efficient evaluation of stable
  distributions.
\newblock {\em Acta Applicandae Mathematicae}, 169:711--765, 2020.
\newblock Available at SSRN: https://ssrn.com/abstract=3206696 or
  http://dx.doi.org/10.2139/ssrn.3206696.

\bibitem{Contrarian}
S.~Boyarchenko and S.~Levendorski\u{i}.
\newblock Static and semi-static hedging as contrarian or conformist bets.
\newblock {\em Mathematical Finance}, 3(30):921--960, 2020.
\newblock Available at SSRN: https://ssrn.com/abstract=3329694 or
  http://arXiv.org/abs/1902.02854.

\bibitem{AltFX2}
S.~Boyarchenko and S.~Levendorski\u{i}.
\newblock Alternative models for {F}{X}, arbitrage opportunities and efficient
  pricing of double barrier options in {L}\'evy models with memory.
\newblock Working paper, March 2024.
\newblock Available at https://ssrn.com/abstract= 46738420 or
  http://arxiv.org/abs/2402.16724.

\bibitem{RoughHestonWe2024}
S.~Boyarchenko and S.~Levendorski\u{i}.
\newblock Correct implied volatility shapes and reliable pricing in the rough
  {H}eston model.
\newblock Working paper, December 2024.
\newblock Available at SSRN: http://ssrn.com/abstract=5060018 or
  http://arxiv.org/abs/2412.16067.

\bibitem{EfficientDoubleBarrier2}
S.~Boyarchenko and S.~Levendorski\u{i}.
\newblock Efficient evaluation of double barrier options.
\newblock {\em International Journal of Theoretical and Applied Finance},
  27(2):2450007, 2024.
\newblock Available at SSRN: http://ssrn.com/abstract=4262396 or
  http://arxiv.org/abs/2211.07765.

\bibitem{EfficientLevyExtremum}
S.~Boyarchenko and S.~Levendorski\u{i}.
\newblock Efficient evaluation of expectations of functions of a {L}\'evy
  process and its extremum.
\newblock {\em Finance and Stochastics}, 29:443--468, 2025.
\newblock Working paper version is available at SSRN:
  https://ssrn.com/abstract=4140462 or http://arXiv.org/abs/2209.02793.

\bibitem{EfficientDiscExtremum2025}
S.~Boyarchenko and S.~Levendorski\u{i}.
\newblock Efficient inverse ${Z}$-transform and pricing barrier and lookback
  options with discrete monitoring.
\newblock {\em Frontiers of Mathematical Finance}, 5:140--141, 2025.

\bibitem{breneis2025}
S.~Breneis.
\newblock Approximations to {F}ractional {S}tochastic {V}olterra {E}quations.
\newblock
  \url{https://github.com/SimonBreneis/approximations_to_fractional_stochastic_volterra_equations}.

\bibitem{RoughNotTough}
G.~Callegaro, M.~Grasselli, and G.~Pag\`{e}s.
\newblock Fast {H}ybrid {S}chemes for {F}ractional {R}iccati {E}quations
  ({R}ough is not so {T}ough).
\newblock {\em Mathematics of Operations Research. Articles in advance},
  46:1--34, 2020.

\bibitem{carr-madan-FFT}
P.~Carr and D.B. Madan.
\newblock Option valuation using the {F}ast {F}ourier {T}ransform.
\newblock {\em Journal of Computational Finance}, 2(4):61--73, 1999.

\bibitem{Imperial2020}
P.-A. Coprechot.
\newblock Study of the joint {S}\&{P} 500/{V}{I}{X} smile calibration problem
  within rough volatility models.
\newblock MSc Thesis, Department of Mathematics, Imperial College, London,
  2020.

\bibitem{HestonCalibMarcoMeRisk}
M.~de~Innocentis and S.~Levendorski\u{i}.
\newblock Calibration {H}eston {M}odel for {C}redit {R}isk.
\newblock {\em Risk}, pages 90--95, September 2017.
\newblock Available at SSRN: http://ssrn.com/abstract=2757008.

\bibitem{Diamond1994}
J.~Diamond.
\newblock Zebras and the {A}nna {K}arenina principle.
\newblock {\em Natural History}, 103(9):4--10, 2014.

\bibitem{DjukicDjukicPejcevSpalevic2020}
D.LJ. Djuki\v{c}, R.~M.M. Djuki\v{c}, A.V. Pej\v{c}ev, and M.M. Spalevi\'{c}.
\newblock Error estimates of {G}aussian-type quadrature formulae for analytic
  functions on ellipses - a survey of recent results.
\newblock {\em Electronic Transactions on Numerical Analysis}, 53:352--382,
  2020.

\bibitem{DFS}
D.~Duffie, D.~Filipovi\'c, and K.~Singleton.
\newblock Affine processes and applications in finance.
\newblock {\em Annals of Applied Probability}, 13:984--1053, 2002.

\bibitem{KamuranEmreErkan2020}
Kamuran~Emre Erkan.
\newblock European option pricing under the rough {H}eston model using the
  {C}{O}{S} method.
\newblock Master thesis. Delft University of Technology, 2020.
\newblock Available at
  https://repository.tudelft.nl/record/uuid:721510e7-6393-4b5e-af07-75063ced210c.

\bibitem{EuchRosenbaum2017}
O.E. Euch and M.~Rosenbaum.
\newblock Perfect hedging in rough {H}eston models.
\newblock Working paper, 2017.
\newblock Available online at: https://arxiv.org/abs/1703.05049.

\bibitem{EuchRosenbaum2019}
O.E. Euch and M.~Rosenbaum.
\newblock The characteristic function of rough {H}eston models.
\newblock {\em Mathematical Finance}, 29:3--38, 2019.

\bibitem{COS}
F.~Fang and C.W. Oosterlee.
\newblock A novel pricing method for {E}uropean options based on
  {F}ourier-{C}osine series expansions.
\newblock {\em SIAM Journal on Scientific Computing}, 31(2):826--848, 2008.

\bibitem{feng-linetsky08}
L.~Feng and V.~Linetsky.
\newblock Pricing discretely monitored barrier options and defaultable bonds in
  {L}\'evy process models: a fast {H}ilbert transform approach.
\newblock {\em Mathematical Finance}, 18(3):337--384, July 2008.

\bibitem{FordeSmithViitasaary2021}
M.~Forde, B.~Smith, and L.~Viitasaari.
\newblock Rough volatility and {C}{G}{M}{Y} jumps with a finite history and the
  rough {H}eston model - {S}mall time asymptotics in the $k=\sqrt{t}$ regime.
\newblock {\em Quantitative Finance}, 21(4):541--563, 2021.

\bibitem{FordeZhang2017}
M.~Forde and H.~Zhang.
\newblock Asymptotics for rough stochastic volatility models.
\newblock {\em SIAM J. Financial Mathematics}, 8(1):114--145, 2017.

\bibitem{FrizGassiatPigato2021}
P.K. Friz, P.~Gassiat, and P.~Pigato.
\newblock Precise asymptotics: Robust stochastic volatility models.
\newblock {\em Ann. Appl. Probability}, 31(2):896--940, 2021.

\bibitem{FrizGassiatPigato2022}
P.K. Friz, P.~Gassiat, and P.~Pigato.
\newblock Short-dated smile under rough volatility: Asymptotics and numerics.
\newblock {\em Quantitative Finance}, 22(3):463--480, 2022.

\bibitem{Fukasawa2017}
M.~Fukasawa.
\newblock Short-time at-the-money skew and rough fractional volatility.
\newblock {\em Quantitative Finance}, 17(2):189--198, 2017.

\bibitem{GatheralJaissonRosenbaum2018}
J.~Gatheral, T.~Jaisson, and M.~Rosenbaum M.
\newblock Volatility is rough.
\newblock {\em Quantitative Finance}, 18(6):933--949, 2018.

\bibitem{GerholdGersteneckerPinter2019}
S.~Gerhold, C.~Gerstenecker, and A.~Pinter.
\newblock Moment explosions in the rough {H}eston model.
\newblock {\em Decisions in Economics and Finance}, 42(2):575--608, 2019.

\bibitem{GuyonAmrani}
J.~Guyon and M.~El Amrani.
\newblock Does the term-structure of equity at-the-money skew really follow a
  power law.
\newblock {\em Risk}, 28(1), 2023.
\newblock Available at SSRN: https://ssrn.com/abstract=4174538 or
  http://dx.doi.org/10.2139/ssrn.4174538.

\bibitem{heston-model}
S.~L. Heston.
\newblock A closed-form solution for options with stochastic volatility with
  applications to bond and currency options.
\newblock {\em The Review of Financial Studies}, 6(2):327--343, 1993.

\bibitem{HorvathMuguruzaTomas2021}
B.~Horvath, A.~Muguruza, M.~Roome, and F.~Shi.
\newblock Deep learning volatility.
\newblock {\em Quantitative Finance}, 21(1):11--21, 2021.

\bibitem{Delemotte}
S.~De~Marco J.~Delemotte and F.~Segonne.
\newblock Yet {A}nother {A}nalysis of the {S}{P}500 {A}t-the-{M}oney {S}kew:
  {C}rossover of {D}ifferent {P}ower {L}aw {B}ehaviours.
\newblock Working paper, May 2023.
\newblock Available at SSRN: https://ssrn.com/abstract=4428407.

\bibitem{AbiJaberEuch2019}
E.Abi. Jaber and O.El. Euch.
\newblock Multifactor approximation of rough volatility models.
\newblock {\em SIAM Journal on Financial Mathematics}, 10(2):309--349, 2019.

\bibitem{JacquierMartiniMuguruza2018}
A.~Jacquier, C.~Martini, and A.~Muguruza.
\newblock Pricing under rough volatility.
\newblock {\em Quantitative Finance}, 18(1):45--61, 2018.

\bibitem{Lee04}
R.~Lee.
\newblock Option pricing by {T}ransform methods: {E}xtensions, {U}nification,
  and {E}rror control.
\newblock {\em Journal of Computational Finance}, 7(3):51--86, 2004.

\bibitem{paraHeston}
S.~Levendorski\u{i}.
\newblock Efficient pricing and reliable calibration in the {H}eston model.
\newblock {\em International Journal of Theoretical and Applied Finance},
  15(7), 2012.
\newblock 125050 (44 pages).

\bibitem{pitfalls}
S.~Levendorski\u{i}.
\newblock Pitfalls of the {F}ourier {T}ransform method in {A}ffine {M}odels,
  and remedies.
\newblock {\em Applied Mathematical Finance}, 23(2):81--134, 2016.
\newblock Available at http://dx.doi.org/10.1080/1350486X.2016.1159918 or
  http://ssrn.com/abstract=2367547.

\bibitem{lewisFT}
A.~Lewis.
\newblock A simple option formula for general jump-diffusion and other
  exponential {L}\'evy processes, September 2001.
\newblock Available at: http://papers.ssrn.com/abstract=585451.

\bibitem{Lewisbook}
Alan~L. Lewis.
\newblock {\em Option valuation under stochastic volatility: with Mathematica
  code}.
\newblock Finance Press, Newport, 2000.

\bibitem{liptonFX}
A.~Lipton.
\newblock {\em Mathematical {M}ethods For {F}oreign {E}xchange: A Financial
  Engineer's Approach}.
\newblock World Scientific, Singapore, 2001.

\bibitem{Lucic}
V.~Lucic.
\newblock On singularities in the {H}eston model.
\newblock In J.~Gatheral, A.~Gulisashvili, J.~Teichman, P.~Friz, and
  A.~Jacquier, editors, {\em Large {D}eviations and {A}symptotic {M}ethods in
  {F}inance}, pages 439--448. Springer, Cham, Heidelberg, New York, Dordrecht,
  London, 2015.
\newblock Available at http://ssrn.com/abstract=1031222.

\bibitem{SR11_7}
Board of~Governors of~the Federal Reserve~System.
\newblock S{R} 11-7: Guidance on model risk management.
\newblock https://www.federalreserve.gov/supervisionreg/srletters/sr1107.htm,
  2011.
\newblock Supervisory Letter SR 11-7, April 4, 2011.

\bibitem{Radoicic-Gatheral}
R.~Radoi\v{c}i\'{c} and J.~Gatheral.
\newblock Rational approximation of the rough {H}eston solution.
\newblock {\em International Journal of Theoretical and Applied Finance},
  22(3):1950010, 2019.

\bibitem{perfect-cal}
W.~Schoutens, E.~Simons, and J.~Tistaert.
\newblock A perfect calibration! {N}ow what?
\newblock {\em Wilmott Magazine}, pages 66--78, March 2004.

\bibitem{stenger-book}
F.~Stenger.
\newblock {\em Numerical {M}ethods based on {S}inc and {A}nalytic functions}.
\newblock Springer-Verlag, New York, 1993.

\bibitem{TrefethenWeidmanTrapezoid14}
L.~N. Trefethen and J.~A.~C. Weideman.
\newblock The exponentially convergent trapezoidal rule.
\newblock {\em SIAM Review}, 56(3):385--458, 2014.

\bibitem{WangCuiRHINAR2025}
Y.~Wang and Z.~Cui.
\newblock Rough {H}eston model as the scaling limit of bivariate cumulative
  heavy-tailed {INAR}($\infty$) processes and applications.
\newblock Working paper, April 2025.
\newblock Available online at: http://arxiv.org/pdf/2503.18259.

\end{thebibliography}
\appendix

\section{Pseudo-code implementation of the BL modification of the Adams method }\label{a:mod_2}
 We use the following definitions
\begin{align*}
	\Delta &:= \frac{T}{M}, \qquad t_k := k\,\Delta,\quad k=0,\dots,M,\quad
	\mathrm{abs}\,\xi := 1+|\xi|,\\
	\tilde h(\xi,t) &:= \frac{h(\xi,t)}{1+|\xi|}, \qquad
	\tilde h_{\mathrm{as}}(\xi,t) := \frac{-\tfrac12(\xi^2+i\xi)}{1+|\xi|}\,\frac{t^{\alpha}}{\Gamma(\alpha+1)},\\
	F(\xi,h) &:= -\tfrac12(\xi^2+i\xi) + \gamma\,(i\xi\rho\nu - 1)\,h + \tfrac{(\gamma\nu)^2}{2}\,h^2,\\
	\tilde F_{\mathrm{as1}}(\xi,\tilde h_{\mathrm{as}},\tilde h_1)
	&:= \gamma\,(i\xi\rho\nu - 1)\,(\tilde h_{\mathrm{as}}+\tilde h_1)
	\;+\; (1+|\xi|)\,\frac{(\gamma\nu)^2}{2}\,(\tilde h_{\mathrm{as}}+\tilde h_1)^2.
\end{align*}
The description, which can be found in the panel  on the next page, is provided for illustration purposes only and does not include any optimizations or parallelizations. 


{\small 
\begin{algorithm}[H]
	\caption{Rough Heston CF via Fractional Adams -- BL modification}
	\begin{algorithmic}[1]
		\Require $\alpha\in(0,1)$, $\gamma>0$, $\theta>0$, $\rho\in(-1,1)$, $\nu>0$, $v_0>0$, maturity $T$, steps $M$, Picard iterations $n$, frequency grid $\{\xi_m\}_{m=1}^{N}$
		\State $\Delta \gets T/M$; \quad $t_k \gets k\,\Delta$ for $k=0,\dots,M$
		\State \textbf{function} $a_{{\rm unif}}(\alpha,\Delta,k)$ returns Adams weights $\{a_{j,k+1}\}_{j=0}^{k}$ for uniform grid
		\State $r \gets \Delta^\alpha/\Gamma(\alpha+2)$ \Comment{$r = a_{k+1,k+1}$ on a uniform grid}
		
		\State \textbf{for all} $\xi$ in grid \textbf{do} \label{line:init-loop}
		\State $\mathrm{abs}\,\xi \gets 1+|\xi|$
		\For{$k=0,\dots,M$}
		\State $\tilde h_{\mathrm{as}}(\xi,t_k) \gets \displaystyle \frac{-\tfrac12(\xi^2+i\xi)}{\mathrm{abs}\,\xi}\,\frac{t_k^{\alpha}}{\Gamma(\alpha+1)}$
		\EndFor
		\State $\tilde h_1(\xi,t_0) \gets 0$
		\State $FF(\xi,0) \gets \tilde F_{\mathrm{as1}}\!\big(\xi,\tilde h_{\mathrm{as}}(\xi,t_0),\tilde h_1(\xi,t_0)\big)$
		\State \textbf{end for}
		
		\For{$k=0,\dots,M-1$}
		\State $a_{0:k,k+1} \gets a_{{\rm unif}}(\alpha,\Delta,k)$ \Comment{vector of weights $[a_{0,k+1},\dots,a_{k,k+1}]^\top$}
		
		\State \textbf{for all} $\xi$ in grid \textbf{do} \Comment{predictor}
		\State $\tilde h_0(\xi) \gets \sum_{j=0}^{k} a_{j,k+1}\; FF(\xi,j)$
		\State $z \gets \tilde h_0(\xi) + r\;\tilde F_{\mathrm{as1}}\!\big(\xi,\tilde h_{\mathrm{as}}(\xi,t_{k+1}),\tilde h_1(\xi,t_k)\big)$ \Comment{initial guess}
		\For{$m=1,\dots,n$} \Comment{corrector}
		\State $z \gets \tilde h_0(\xi) + r\;\tilde F_{\mathrm{as1}}\!\big(\xi,\tilde h_{\mathrm{as}}(\xi,t_{k+1}), z\big)$
		\EndFor
		\State $\tilde h_1(\xi,t_{k+1}) \gets z$
		\State $FF(\xi,k{+}1) \gets \tilde F_{\mathrm{as1}}\!\big(\xi,\tilde h_{\mathrm{as}}(\xi,t_{k+1}),\tilde h_1(\xi,t_{k+1})\big)$ \Comment{cache for next step}
		\State \textbf{end for}
		\EndFor
		
		\State \textbf{for all} $\xi$ in grid \textbf{do} \Comment{recover unscaled $h$ on nodes}
		\For{$k=0,\dots,M$}
		\State $\hat h(\xi,t_k) \gets (1+|\xi|)\,\big(\tilde h_{\mathrm{as}}(\xi,t_k)+\tilde h_1(\xi,t_k)\big)$
		\State $G_k(\xi) \gets \gamma\theta\,\hat h(\xi,t_k) \;+\; v_0\,F\!\big(\xi,\hat h(\xi,t_k)\big)$
		\EndFor
		\State $I(\xi) \gets \Delta\Big(\tfrac12 G_0(\xi) + \sum_{k=1}^{M-1}G_k(\xi) + \tfrac12 G_M(\xi)\Big)$ \Comment{trapezoid rule}
		\State $\Phi(\xi,T) \gets \exp\big(I(\xi)\big)$
		\State \textbf{end for}
		\Ensure $\{\Phi(\xi_m,T)\}_{m=1}^{N}$
	\end{algorithmic}\label{algo_mod2}
\end{algorithm}
}

\section{Benchmark pricing algorithm}\label{a:pricing_algo_bm}
For a given pricing configuration $X$, i.e. a set of model parameters, option strike and expiry, and underlying spot level, the numerical parameters of the pricing algorithm are the $\om$ in the definition
of the sinh-acceleration (the slope of the asymptote of the deformed contour in the right half-plane), the number of discretization time steps $M$,  the number of iterations $n$ in the modified Adams method,  the mesh $\zeta$ and the number of terms $N$ in the simplified trapezoid rule (see Section \ref{ss:Flat iFT and simpl. trap}). Of all these, $\omega$ plays a special role since its choice determines the contour of integration along which the characteristic function is calculated, and hence determines the value of all other parameters. In addition, while it is clear that lower values of $M$, $N$ and $n$, as well as higher values of $\ze$, correspond to lower CPU times, and decreasing accuracy, there is no such monotonic dependency on $\om$ for either.  Therefore, for the calculation of the ``best'' $\omega$, an optimization needs to be used to maximise pricing accuracy. We will use an objective function of $\omega$, $F(\om; X)$, which decreases for increasing pricing accuracy. 
For each configuration \( X \), we evaluate the objective function \( F(\omega; X) \) over a discrete grid\footnote{Here $\om_1$ is not to be confused with the parameter denoted with the same notation in the sinh-conformal map (cf. eq. \eqref{eq:sinh}).} 
$\Omega = \{\omega_1, \omega_2, \ldots, \omega_m\}$. 
Let
\[
\omega^{(0)} := \arg\min_{\omega_j \in \Omega} F(\omega_j; X)
\]
\noindent denote the grid point that minimizes \( F(\cdot; X) \). This value \( \omega^{(0)} \) is used as the initial guess for a one-dimensional Nelder--Mead optimization, which yields a refined estimate \( \omega_{\mathrm{best}} \). We record both \( \omega_{\mathrm{best}} \) and the associated pricing results for subsequent analysis and for training of the machine learning models.
\subsection{Choice of $\omega$ grid}\label{sss:data_generation_om_grid}
We give preference to values of $\om$ that are small in absolute value, e.g., for the $\om > 0$ strip case, $\om \le 0.2$, since such choices of $\omega$ tend to result in smaller $M$. Therefore we take $m = 10$ and $\Omega = \{\omega_1, \omega_2, \ldots, \omega_{10}\}$, with\footnote{See Section \ref{sss:data_generation_om_calc} regarding the choice of $\om_{10}$.} $\om_1 = 0.002$, $\om_6 = 0.2$, $\om_{10} = \pi/4 - 0.05$, with the intermediate points $\om_2$ to $\om_5$, and $\om_7$ to $\om_9$, equally spaced between $\om_1$ and $\om_6$, and between $\om_7$ and $\om_{10}$, respectively. During the optimisation, we will let $\omega$ vary, in the $\om > 0$ strip case, between a small nonzero value $\om_0 = 0.0001$ and $\om_{11} = \pi/4 - 0.0001$.
\subsection{Calculation of $F(\omega; X)$}\label{sss:data_generation_F_calc}
For each   $\om \in \Om$, we calculate $F(\om; X)$ as follows. First, if it has not already been calculated, we calculate the benchmark price $V_{LL}(T,K)$ corresponding to $\om = 0$, i.e., after the sinh transformation, by integrating along the line $\Im \xi = -1/2$ in Fourier space, which is the analogue of the Lewis-Lipton formula. This is done using the procedure in section \ref{sss:data_generation_V_calc} below.

Then, for each $\om\in \Om$, we calculate $V(\om; T, K)$, by using the same procedure with this value of $\om$. The value of $F(\om)$ is calculated as
\begin{equation}\label{e:objfun}
	F(\om) = (V_{LL} - V(\om))^2,
\end{equation}

\subsection{Calculation of the benchmark prices $V(T, K; \om)$}\label{sss:data_generation_V_calc}
For each value of $\om$, including $\om = 0$, we use the following algorithm.
\begin{enumerate}[1.]
	\item 	We calculate the parameters $\omega_1$ and $b$ of the sinh-deformation and the mesh $\ze$ according to the recommendations in section \ref{ss: SINH},	where, for the put case, we take $\mup = 1$, $\mum = 0$, $\gamma_\pm = \pm \pi/4$, $d = 0.9\cdot\min(\gap  - \om, \om - \gam)$.  We set $\ze = 2\pi d / \log(100/\eps)$, where $\eps = 10^{-16}$ is our error tolerance. 
	\item We use the ad-hoc procedure described in Section \ref{ss:asymp_psi} to calculate the truncation parameter $\La = N\ze$. Set $M = 1000$. \label{LaChoice}
	\item Calculate an initial price $V_0$ using the procedure in section \ref{ss: SINH}, with $n = 2$. 
	\item In a loop $j =1, 2, \ldots, 10$, we check if the value of $V_0$ diverges, since the ad hoc recommendation at point \ref{LaChoice} often results in too large values of $\La$, which can cause division-by-zero errors in numerical calculations. In that case, we replace $\La \mapsto 0.8 \cdot \La$ and re-price.
	\item Fix a tolerance $\eps_V = V_0/10000$.
	\item Search for $n$.  In a similar loop, calculate a new price $V_1$ and, while $|V_1 - V_0| \ge \eps_V$, and $V_1$ does not diverge, increase $n\mapsto n+1$ and set $V_1$ equal to the old price $V_0$. If $\om = 0$, then we take the last $n$, since this value of $\om$ is used to calculate $V_{LL}$, otherwise we reduce $n$ by 1. In almost all cases, for $\om \neq 0$, $n = 2$ is used. 
	\item Search for $M$. We use a similar  loop to increase $M$, if needed, by a factor $\ka_M =  1.5$ each time, up to a maximum of 4000 for the calculation of $V_{LL}$ and 2500 otherwise\footnote{Larger values of $M$ than 2500 are sometimes needed, of course, however those would make a practical computation very slow.}.
	\item Choice of $\ze$ and $\La$. We use similar loops, first for the mesh $\ze$ (with factors $\ka_\ze= 0.5$ for the calculation of $V_{LL}$ and 0.8 otherwise) and the truncation parameter $\La$ (with factors $\ka_\La= 1.2$ for the calculation of $V_{LL}$ and 1.1 otherwise). Care must be taken to restore the previous value of $\La$, i.e. $\La \mapsto \La/\ka_\La$, if division by zero error occurs. 
	\item If the loop over $\La$ resulted in its value being increased, then we carry out a last loop over $\ze$. 
	\item Since the algorithm above can result in too large a value for $M$, if $\om \neq 0$ we use the following approach to check if $M$ can be reduced. 
	\begin{enumerate}[a.]
		
		\item In a loop, reduce $M$ by a factor of  $\kappa'_M = 0.8$, as long as $|V - V_\mathrm{previous}|/V_\mathrm{previous} < 10^{-5}$.
		\item Take the last working value of $M$ before the check failed.
	\end{enumerate}
\end{enumerate}
\subsection{Calculation of $\om(X)$}\label{sss:data_generation_om_calc}
We store the values of $F(\om_j)$, $\om_j \in \{\om_1, \om_2, \ldots, \om_{10}\}$ in a vector and pick 
\[
\omega^{(0)} := \arg\min_{\omega_j \in \Omega} F(\omega_j; X)
\]
to be the starting point of a 1D Nelder-Mead optimisation\footnote{Using {\tt scipy.optimize.minimize}.} with function tolerance $\eps_f = 0.001$, argument tolerance $\eps_\om = \min(8\cdot 10^{-4}, 2\omega^{(0)}/100 )$, maximum number of iterations set to 20, and maximum number of objective function evaluations set to 30. Call $\om_\mathrm{best}$ the result of the optimisation. 
\sbr

After the optimization completes, we check:

\begin{itemize}
	\item  If $|V(\om_\mathrm{best}) - V_{LL}|/V_{LL} < 10^{-4}$, then we store, for the configuration $X$, the value of $\om_\mathrm{best}$, the price $V(\om_\mathrm{best})$, and the other numerical settings.
	\item Otherwise, we take the following alternative value of $\om$
	\[
	k_{\mathrm{alt}} := \operatorname*{arg\,max}_{k \in \{0,1\}} 
	\left| \omega_{k} - \omega_{\mathrm{best}} \right|, \quad
	\omega^{(\mathrm{alt})} := \omega_{k_{\mathrm{alt}}},
	\]
	i.e., out of the two ``best omegas'' in $\Om$, we pick the one that differs the most from $\omega_{\mathrm{best}}$. If $V(\omega^{(\mathrm{alt})})$ passes a similar check w.r.t. $V_{LL}$, then we store $\omega^{(\mathrm{alt})}$, as well as the corresponding price and numerical settings.
	\item If neither of the previous checks works, then we compare $V(\om_\mathrm{best})$ and $V(\omega^{(\mathrm{alt})})$ against each other. This can be useful on rare occasions when $V_{LL}$ converges very slowly and is difficult to calculate with high precision.
	\item Finally, if none of the previous checks work, then we return an error\footnote{In our experiments, this has never been observed to happen.}. 
\end{itemize}

\pagebreak
\section{Additional tables and figures}\label{s:addition}

\begin{figure}[h!]
	\centering
	\begin{subfigure}{0.49\textwidth}
		\centering
		\includegraphics[width=\linewidth]{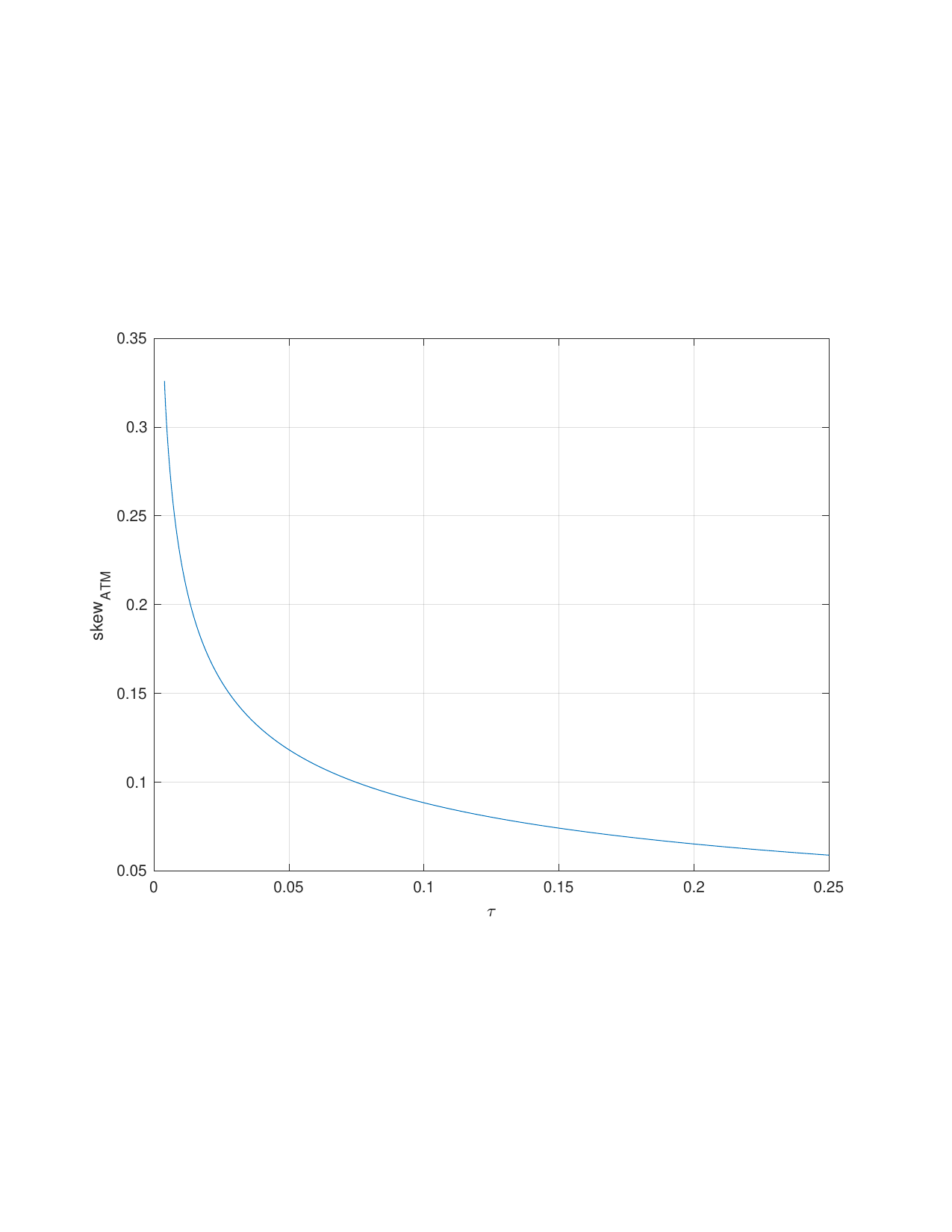}
		\caption{}\label{SkewSet1SinhXi}
	\end{subfigure}\hfill
	\begin{subfigure}{0.49\textwidth}
		\centering
		\includegraphics[width=\linewidth]{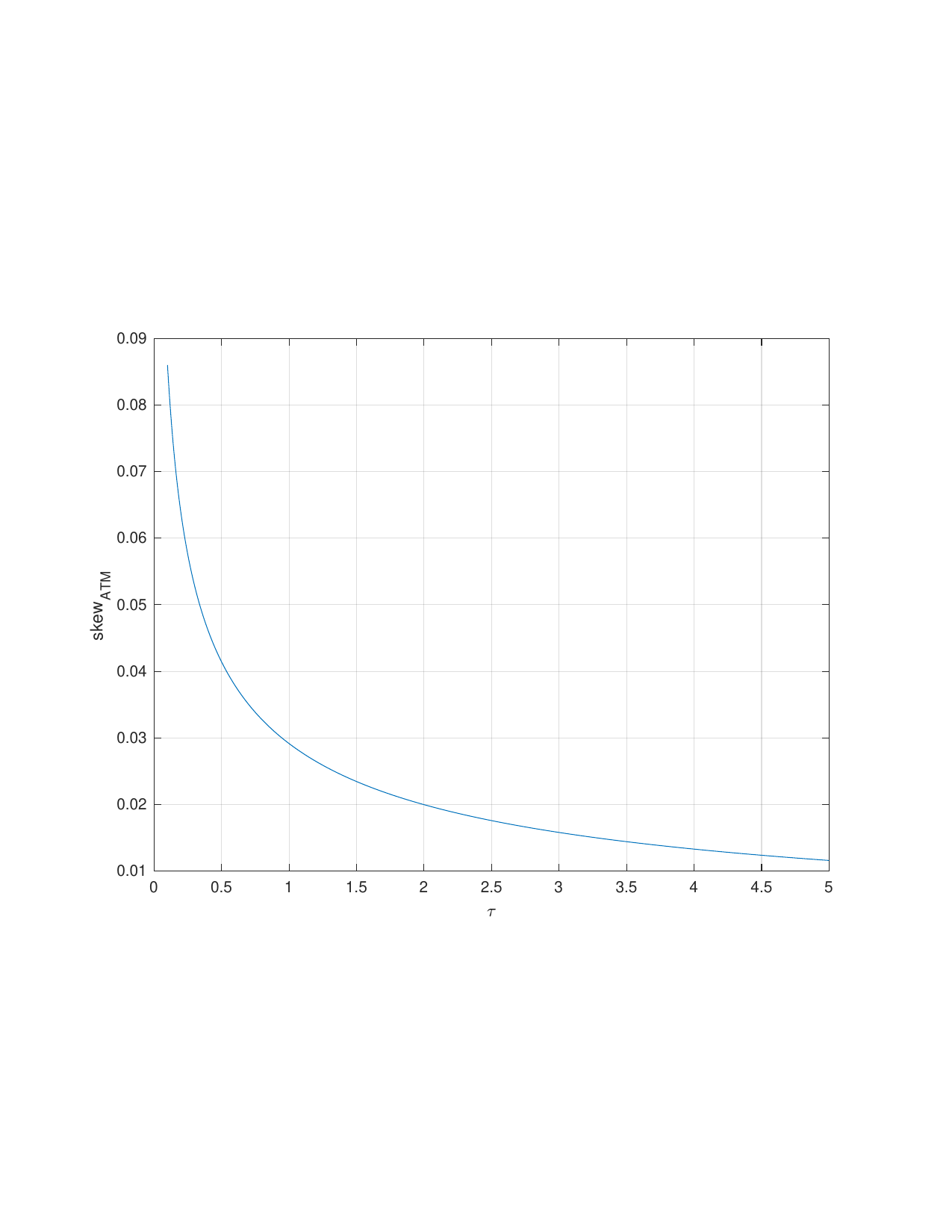}
		\caption{}\label{SkewSet1SinhXiLong}
	\end{subfigure}
	\caption{ATM skew; the parameters are in \eq{parEuRos}.}
	\label{Set1Skews}
\end{figure}

\begin{figure}
    \includegraphics[width=1\textwidth,height=0.8\textheight] {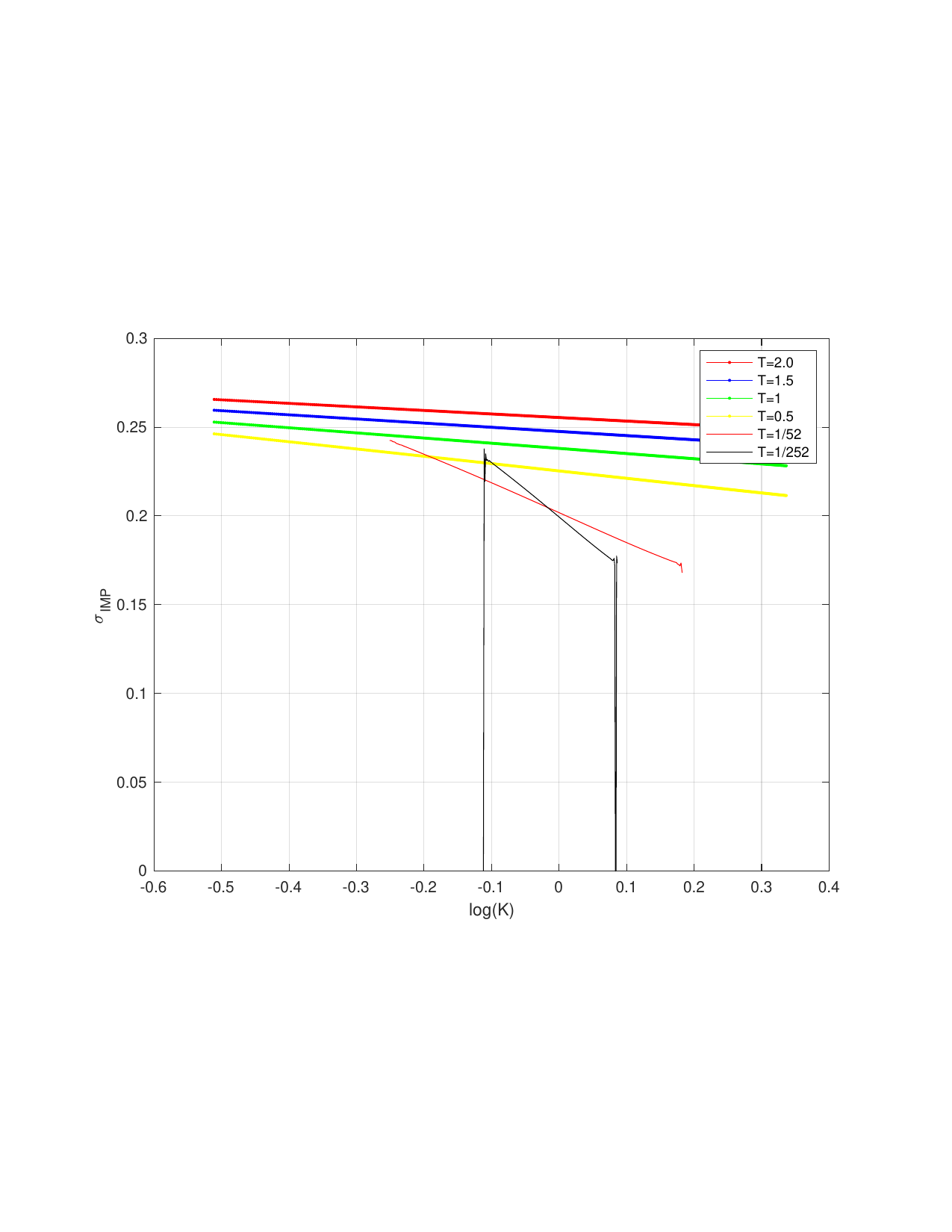}
    \caption{Implied volatility curves; the parameters are in \eq{parEuRos}. $\sg_{IMP}=0$ means that no-arbitrage condition is not satisfied.} \label{Set1Curves}
\end{figure}

\begin{figure}[htb]
	\captionsetup[subfigure]{skip=-40pt, belowskip=0pt} 
	\centering
	
	\begin{subfigure}{0.49\textwidth}
		\centering
		\includegraphics[width=\textwidth]{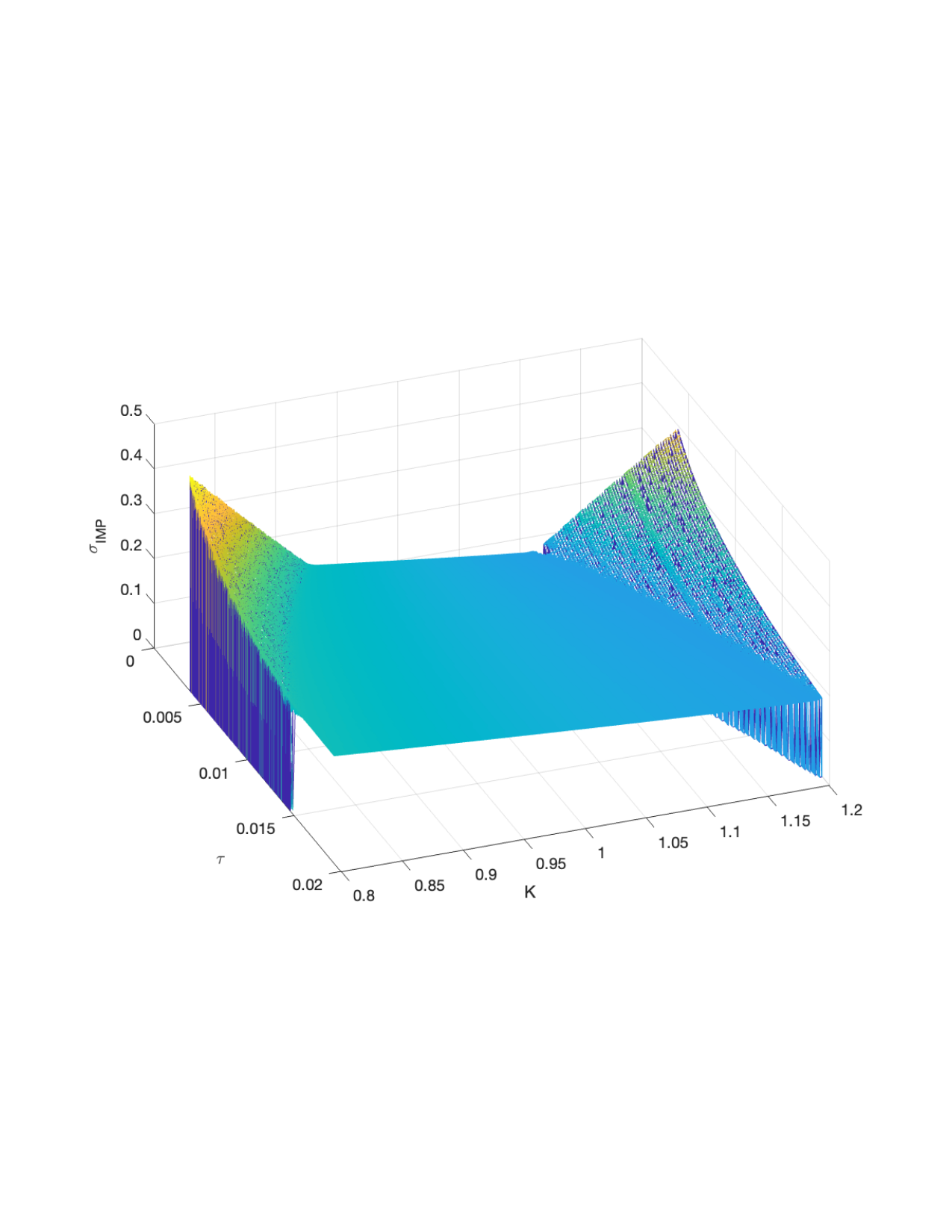}
		\caption{SINH}
		\label{Set1SurfaceB}
	\end{subfigure}
	\begin{subfigure}{0.49\textwidth}
		\centering
		\includegraphics[width=\textwidth]{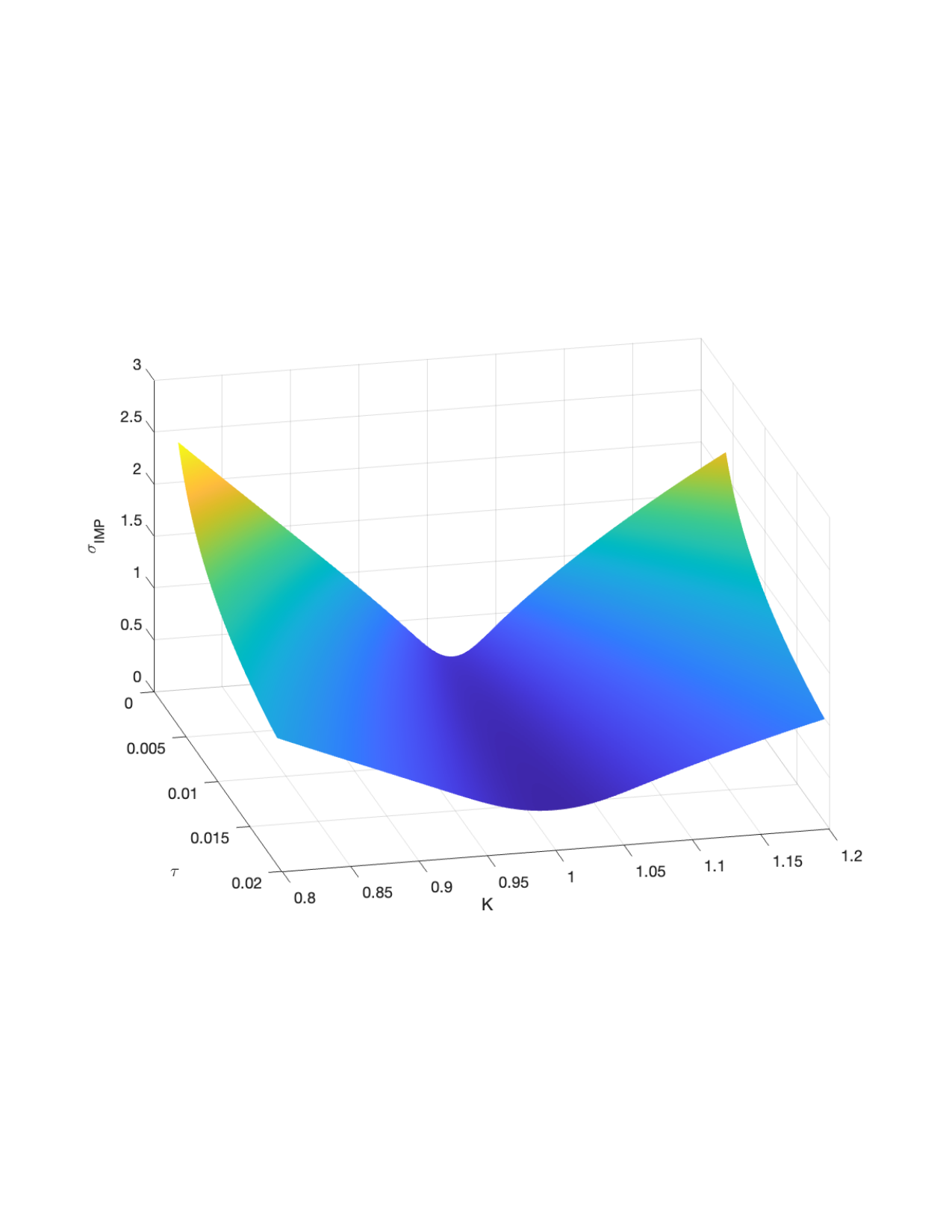}
		\caption{CM, $\omega_1=-1.1$}
		\label{Set1CMsurfaceT152om1m11}
	\end{subfigure}
	
	\begin{subfigure}{0.49\textwidth}
		\centering
		\includegraphics[width=\textwidth]{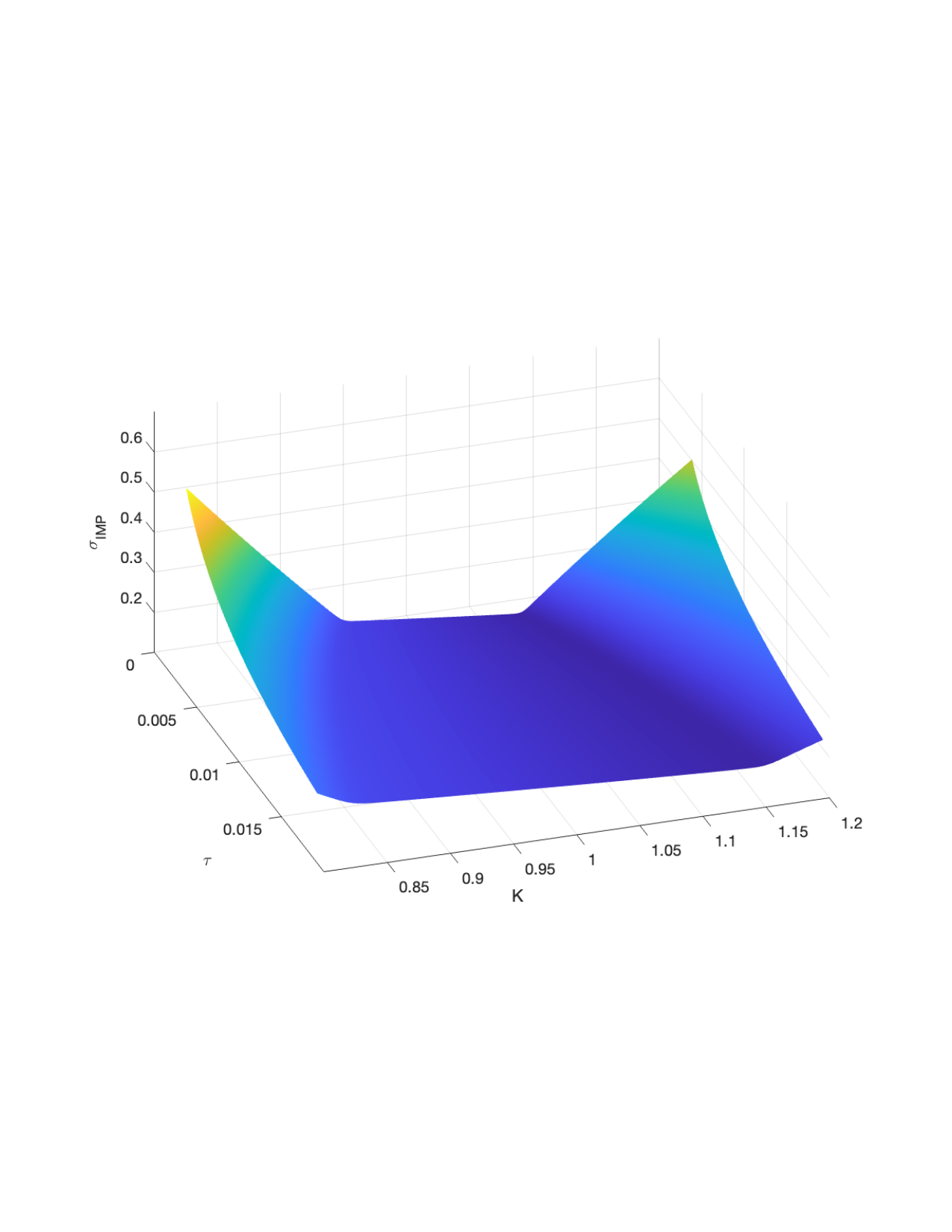}
		\caption{CM, $\omega_1=-1.5$}
		\label{Set1CMsurfaceT152om1m15}
	\end{subfigure}
	\begin{subfigure}{0.49\textwidth}
		\centering
		\includegraphics[width=\textwidth]{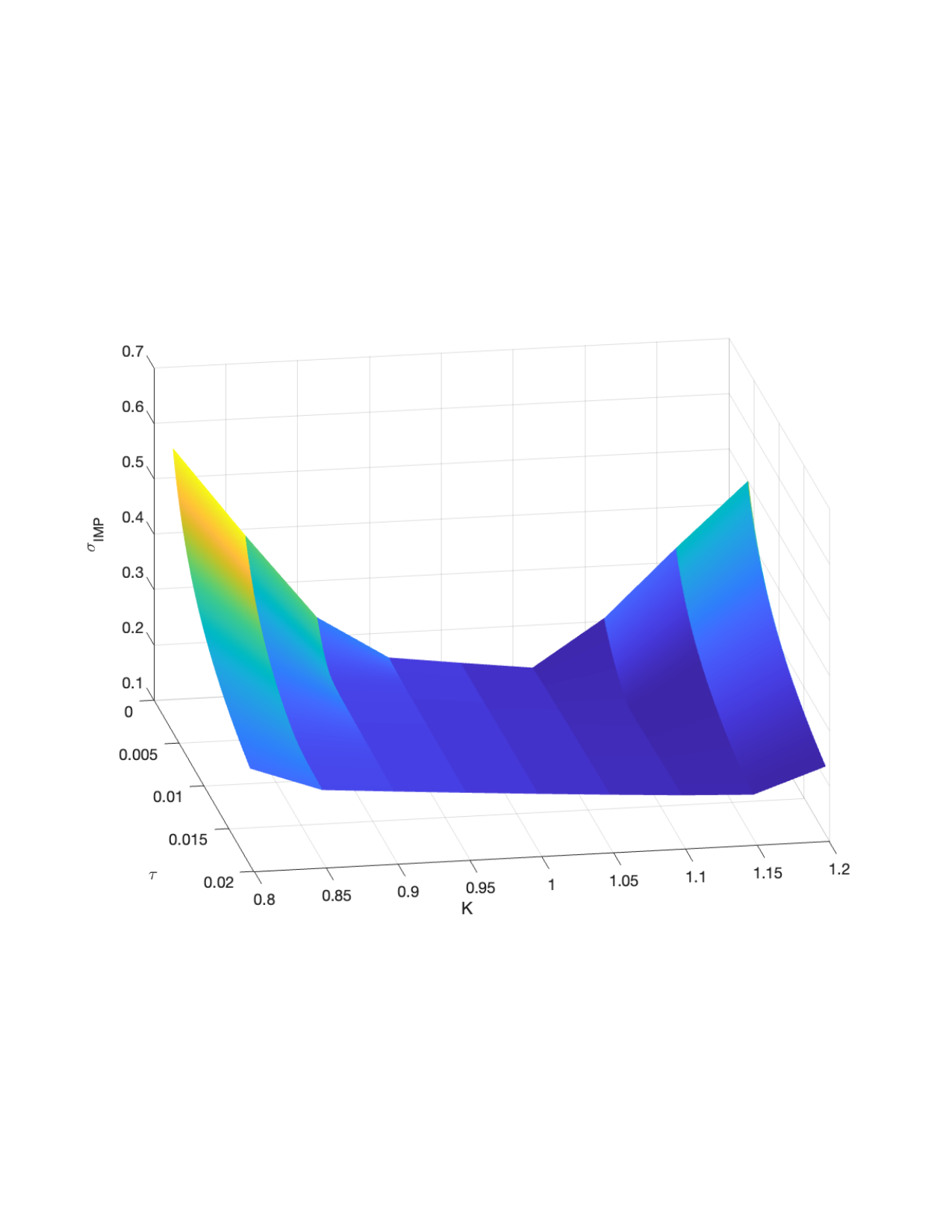}
		\caption{CM, $\omega_1=-1.5$}
		\label{Set1CMsurfaceT152om1m15R}
	\end{subfigure}

	\caption{\small
		Implied volatility surfaces in the rough Heston model \cite[Example 5.1]{EuchRosenbaum2019}, for time to maturity in the range (1 day, 1 week); spot $S_0=1$. If the price is outside the no-arbitrage bounds, $\sigma_{IMP}$ is set to 0. Panel (A): surface is calculated using the SINH-acceleration and the modified Adams method, the parameters are in \eq{parEuRos}. Irregular parts of the surface are where the OTM vanilla prices are smaller than E-10. Panels B-C: Flat iFT is used with $\zeta=0.125, N=8,192$ and $\omega_1=-1.1, -1.5, -1.5$, respectively, and the modified Adams method with $M=2000$. Irregular parts of the surface are where the OTM vanilla price is smaller than E-06. Panel (D) shows the effect of the interpolation: implied volatilities are calculated at points of a sparse grid, in the result, the interpolated surface is higher than the one on Panel (C), and the smiles are more regular.}
	\label{Set1ImpVolsurfacesXiT152}
\end{figure}

\begin{figure}
    \includegraphics[width=1\textwidth,height=1\textheight] {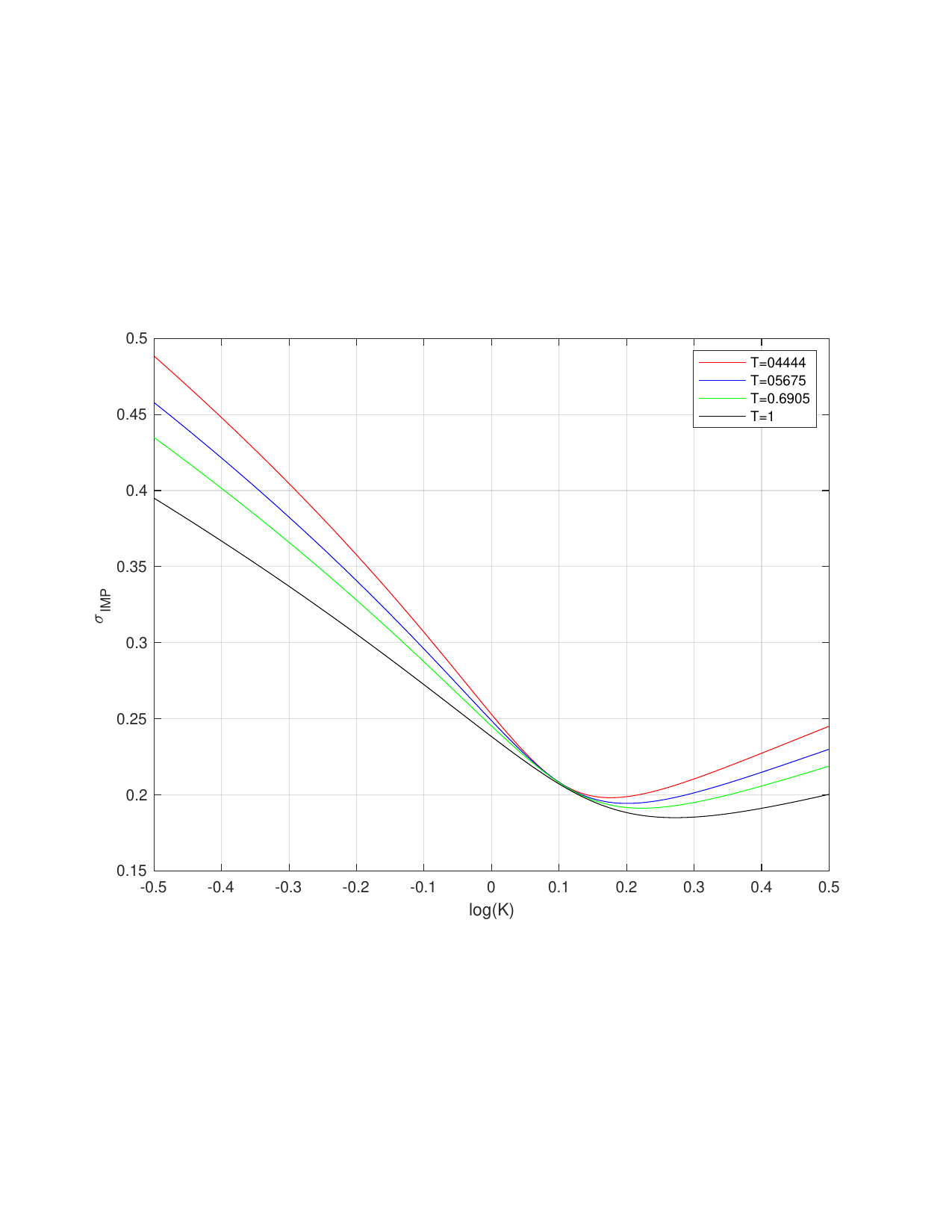}
    \vskip-4cm
    \caption{Correct implied volatility curves. The parameters  $\al=0.512,$
   $\ga=0.88,$
    $\rho=-0.7$,
    $\nu=0.96,$
    $\theta=0.016$,
    $v=0.148$, are the result of calibration to the real data in \cite[p.27]{Imperial2020}. The implied volatilities calculated using the Lewis and Adams methods and shown on Fig.~2.7 in \cite{Imperial2020} are somewhat different, in the tails especially. Note that on Fig.~2.7 in \cite{Imperial2020}, the range of log-strikes is asymmetric, and depends on maturity: $\ln K\in [-0.3, 0.35]$  for maturities $T=0.6905$ and $T=1$, and   $\ln K\in [-0.25, 0.35]$ for $T=0.4444$ and $T=0.5675$. A natural guess is that the results of calculations in the symmetric range $\ln K\in [-0.35, 0.35]$ are unsatisfactory. } \label{ImperialCurves}
\end{figure}

\begin{figure}
	\hspace*{-0.09\linewidth}
\begin{tabular}{cc}
\begin{subfigure}[h]{0.55\textwidth}

 \centering
    \includegraphics[width=1.1\textwidth]{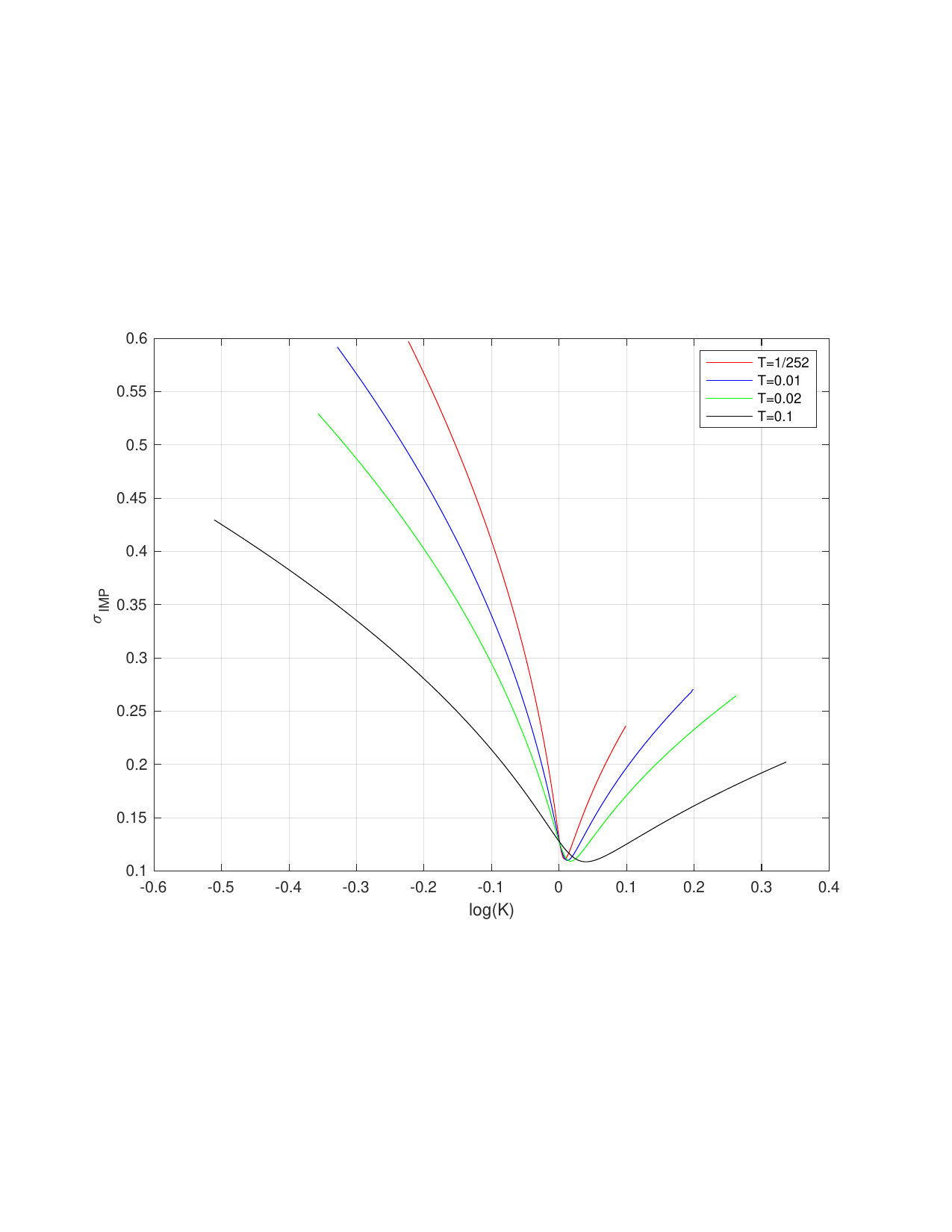}
    \caption{}\label{Set2Short}
\end{subfigure}
&
\begin{subfigure}[h]{0.55\textwidth}
\centering
    \includegraphics[width=1.1\textwidth]{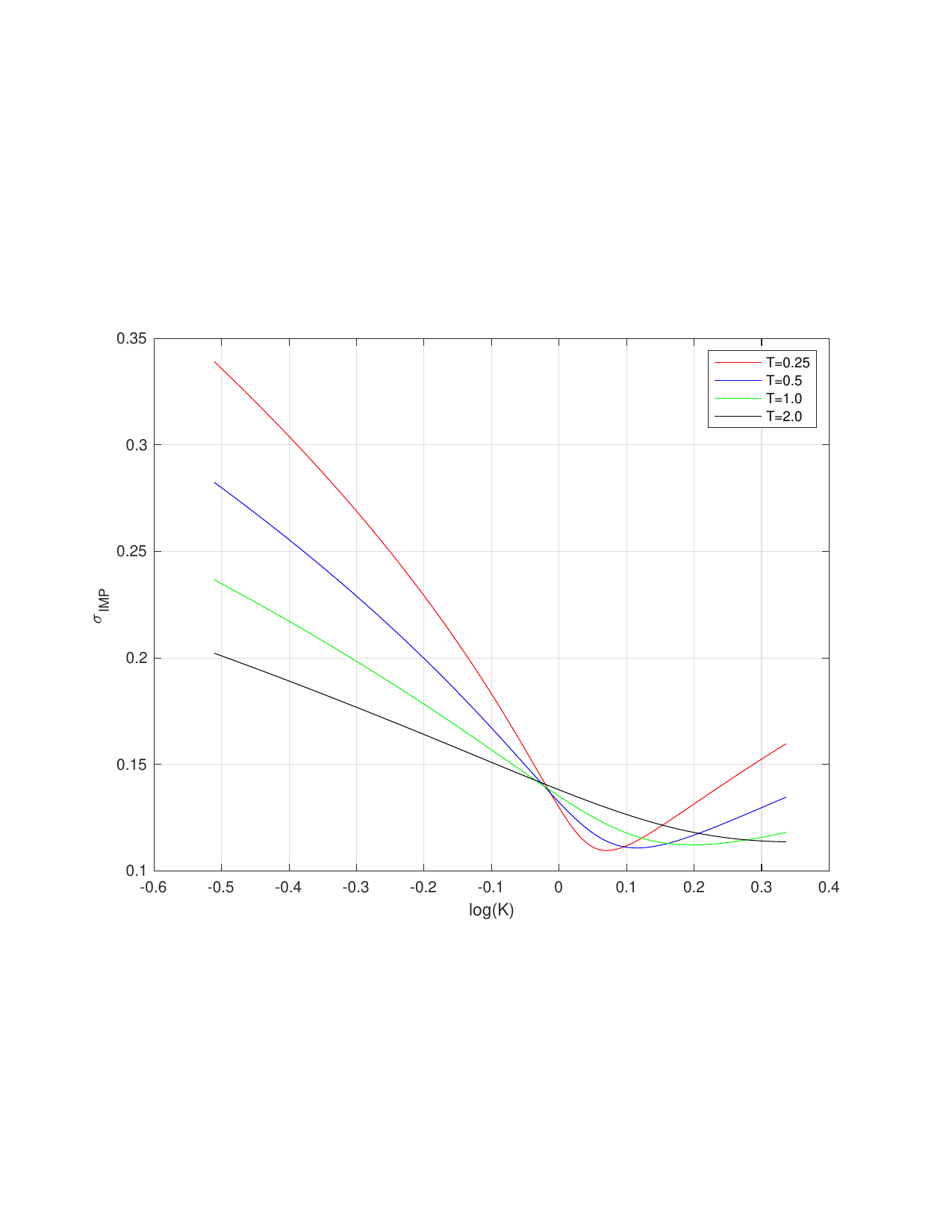}\caption{} \label{Set2CurvesLogKMod}
\end{subfigure}

\end{tabular}
\caption{Correct implied volatility curves in the rough Heston model  (Example in \cite[Sect. 6.2]{KamuranEmreErkan2020}); parameters $\al=0.6$, $\ga=2$, 	$\rho=-0.6$,
$\theta=	0.025$, $\nu=0.2$, $v_0=0.025$; $S_0=1$. 
 } 
\label{Set2Curves}
\end{figure}

\begin{table}
\caption{\small Dependence of implied volatilities (rounded) in the rough Heston model   on the numerical scheme.  
Example in \cite[Sect. 6.2]{KamuranEmreErkan2020}; parameters $\al=0.6$, $\ga=2$, $\rho=-0.6$,	
$\theta=	0.025$, $\nu=0.2$, $v_0=0.025$). Spot $S=1$, maturity $T=1/52$ years (1 week).}
{\tiny
\begin{tabular}{c|ccccccccc}
\hline\hline
$K$ & 0.8	& 0.85 &	0.9 &	0.95	& 1 &	1.05 &	1.1	& 1.15 &	1.2
\\
SINH & 0.4269 &	0.3686 &	0.3039&	0.2274 &	0.1280 &	0.1313 &	0.1687 &	0.2053 &	0.2053\\
iFT(0.25, 4096) & (*) & 0.3390 &	0.3009 &	0.2269 &	0.1280 &	0.1260 & (*) & (*) &(*) \\
FFT(0.25,4096) & (*) & (*) & 0.3000 &	0.2270 &	0.1279 &	0.1263 & (*) & (*) & (*) \\
iFT(0.125,9182) & 0.4273 &	0.3687 &	0.3039 &	0.2274 &	0.1280 &	0.1313 &	0.1687 &	0.2236 & (*)\\
FFT(0.125,9182) & (*) & 0.3539 &	0.3030 &	0.2274 &	0.1280 &	0.1315 &	0.1694 &	0.2175 &(*)\\
\end{tabular}
}
\begin{flushleft}{\tiny SINH - method of the present paper, $\om=0.2$ for puts, $\om=-0.2$ for calls.\\
iFT$(\ze,N)$: iFT with $\om_1=-0.5$ (Lewis-Lipton choice) and uniform grid, step $\ze$, $N$ terms.\\
FFT$(\ze,N)$: version of CM method based on FFT and interpolation, with $\om_1=-0.5$, step $\ze$, $N$ terms.\\
(*): price outside the no-arbitrage bounds.\\
\vskip-0.2cm
$\sg_{IMP}(1.2)$ in SINH-line is unreliable because the absolute value of the OTM option price is smaller than $10^{-12}$.
}
\end{flushleft}
\label{table: iFT-FFT}
\end{table}

\begin{table}
\caption{\small Moderate maturities, spot $S_0=1$. Relative errors (rounded) of calculations of OTM and ATM puts $(K\le 1$) and OTM calls $(K>1$)  in the rough Heston model with the parameters \eq{parEuRos} and CPU time (in msec., the average over 1000 runs)  for several numerical schemes.  
 }
{\tiny
\begin{tabular}{c|rrrrrrrrr|r}
\hline\hline
$T=2$ & & & & & & & & &  & Time\\\hline
$K$ & 0.80 & 0.85 & 0.90& 0.95 & 1.00 & 1.05 & 1.10 & 1.15 & 1.20 &\\\hline
SINH & -1.5E-05 &	-1.2E-05 &	-9.6E-06 &	-8.0E-06 &	-6.7E-06 &
-7.9E-06 &	-9.3E-06 &	-1.1E-05 &	-1.3E-05 & 169.6\\
$V_H$ & 3.9E-06 &	-2.6E-06	& 1.4E-06	& 3.8E-06	& 1.3E-06 & 1.1E-06	
&-1.8E-06	& 1.6E-06 &	-3.1E-06 &\\
Flat iFT-BM & -7.2E-06 &	-8.6E-06 &	-8.6E-06 &	-7.3E-06 &	-5.5E-06
& -5.2E-06 &	-4.46E-06	&-3.6E-06	 &-3.2E-06 & 97.7\\
Flat iFT & 2.7E-07 &	-1.6E-06 &	-2.4E-06	& -2.2E-06 &	-1.5E-06
& -1.0E-06 &	-4.9E-08 &	1.0E-06 &	1.8E-06 & 661.5\\
Lewis 30 & 4.4E-06	& -1.9E-07 &	-1.4E-06 &	-1.3E-06 &	-8.3E-07
& -5.7E-07 &	-2.4E-07 &	8.5E-08 &	3.7E-07 & 400.8
\\\hline
\end{tabular}
}

\begin{flushleft}{\tiny 
SINH: $\om_1=-0.5$, $b=0.769884522$, $\om=0$, 	$\ze=0.285754315$, $N=12$, BL Modification with $M=317$.
\\
$V_H$: hybrid method of \cite{RoughNotTough} \\
Flat iFT-BM: $\sg_0=1$, $\om_1=-0.5$, 	$\ze=0.717626524$, $N=	16$, BL Modification with $M=317$.
 \\
Flat FT: $\om_1=-0.5	$,	$\ze=0.109637386$, $N=110$ , BL Modification with $M=317$.
\\
Lewis 30: Lewis method and Gauss-Legendre quadrature with 30 terms, BL Modification with $M=317$.
\\
CPU time is for the evaluation of $\Phi(\xi_k, \tau_m)$, for $k=0,1,\ldots, N$, $m=1,\ldots, 317$.\\
Flat iFT-BM  is used with the parallelization w.r.t. $\xi$.\\
SINH, Flat iFT and Lewis method are used without the parallelization w.r.t. $\xi$. \\
For the Lewis method, the nodes and weights are precalculated.\\
\vskip-0.2cm
For $V_H$, the CPU time is in the range 593-667 msec. per strike.  
}
\end{flushleft}

{\tiny
\begin{tabular}{c|rrrrrrrrr|r}
\hline\hline
$T=1$ & & & & & & & & & & Time \\\hline
SINH & -2.1E-05 &	-1.5E-05 &	-1.1E-05 &	-8.0E-06 &	-6.1E-06	&
-7.8E-06 &	-1.0E-05 &	-1.3E-05 &	-1.7E-05 & 295.6\\
$V_H$ & -1.8E-05 &	9.9E-06 &	-4.5E-06 &	7.2E-06 &	-1.9E-06 &
-6.7E-07 &	-1.1E-05 &	-1.0E-05	& -1.9E-05 &\\

Flat iFT-BM & 9.3E-06 &	3.5E-06 &	-2.1E-06 &	-3.3E-06 &	-1.6E-06 &
7.3E-07 &	4.1E-06 &	6.0E-06 &	3.6E-06 & 102.9\\

Flat iFT & 1.1E-05 &	3.6E-06 &	-3.0E-06	& -4.0E-06 &	-1.8E-06 &
1.1E-06 &	5.1E-06 &	7.2E-06 &	4.14E-06 & 980.8\\

Lewis 30 & 1.2E-04 &	3.4E-05 &	-1.706 &	-7.7E-06 & -5.1E-06 & 
-3.4E-06 &	-7.4E-07 &	2.3E-06 &	5.8E-06 & 144.1
\\\hline
\end{tabular}
}
\begin{flushleft}{\tiny 
SINH: $\om_1=-0.5$, $b=0.769884522$, $\om=0$, 	$\ze=0.285754315$, $N=14$, BL Modification with $M=399$.
\\
$V_H$: hybrid method of \cite{RoughNotTough} \\
Flat iFT-BM: $\sg_0=0.5$, $\om_1=-0.5$, 	$\ze=0.717626524$, $N=	22$, BL Modification with $M=317$. \\
Flat FT: $\om_1=-0.5	$,	$\ze=0.0877$, $N=200$, BL Modification with $M=317$.\\
Lewis 30: Lewis method and Gauss-Legendre quadrature with 30 terms, BL Modification with $M=317$.\\
For $V_H$, the CPU time is in the range 548-582 msec. per strike. 
 
}
\end{flushleft}

{\tiny
\begin{tabular}{c|rrrrrrrrr|r}
\hline\hline
$T=0.5$ & & & & & & & & & & Time \\\hline
SINH & 4.4E-05 &	5.1E-05 &	-7.9E-06	& -1.9E-05 &	-2.7E-06 &
-3.5E-06 &	-37E-06	& -5.1E-06 &	-8.6E-06 & 329.6\\

$V_H$ & 3.3E-05 &	-2.0E-05 &	2.8E-06 &	-5.2E-06 &	-8.6E-06 &
-1.7E-06 &	-1.9E-05 &	7.7E-06 &	-4.3E-05 & \\

Flat iFT-BM & 4.4E-05 &	5.1E-05 &	-7.9E-06 &	-1.9E-05 &	-2.7E-06 &
1.8E-05 &	2.6E-05 &	-1.2E-05 &	-9.4E-05 & 107.3\\

Flat iFT & -1.4E-05 &	1.0E-05 &	2.1E-06 &	-4.4E-06 &	-2.1E-06 &
2.5E-06 &	6.2E-06 &	-1.73E-06 &	-2.2E-05 & 1,192.3\\

Lewis 35 & 7.7E-04 &	1.2E-04 &	-3.2E-05 &	-1.9E-05 &	-2.9E-06 &
1.5E-06 &	2.2E-06 &	5.8E-06 &	2.9E-05 & 465.4
\\\hline
\end{tabular}
}
\begin{flushleft}{\tiny 
SINH: $\om_1=-0.5$, $b=0.769884522$, $\om=0$, 	$\ze=0.1836992027$, $N=23$, BL Modification with $M=317$.
\\
$V_H$: hybrid method of \cite{RoughNotTough} \\
Flat iFT-BM: $\sg_0=0.5$, $\om_1=-0.5$, 	$\ze=0.789389176$, $N=	30$, BL Modification with $M=317$. \\

Flat FT: $\om_1=-0.5	$,	$\ze=0.0877$, $N=200$, BL Modification with $M=317$.\\

Lewis 35: Lewis method and Gauss-Legendre quadrature with 35 terms, BL Modification with $M=317$. \\

For $V_H$, the CPU time is in the range 666-689 msec. per strike. 

}
\end{flushleft}

\label{table:rel_errors_moderate}

 \end{table}
 
  \begin{table}
\caption{\small Short maturities,  spot $S_0=1$. Relative errors (rounded) of calculations of OTM and ATM puts $(K\le 1$) and OTM calls $(K>1$)  in the rough Heston model with the parameters \eq{parEuRos} for several numerical schemes and CPU time (in msec., the average over 1000 runs). 
  }

{\tiny
\begin{tabular}{c|rrrrrrrrr|r}
\hline\hline
$T=1/12$ & & & & & & & & & & Time \\\hline
$K$ & 0.80 & 0.85 & 0.90& 0.95 & 1.00 & 1.05 & 1.10 & 1.15 & 1.20 & \\\hline
SINH & 1.4E-05 &	-2.1E-04 &	-1.2E-05	& -2.9E-07 &	-1.8E-06 &
-5.8E-07	& 5.8E-06	& -1.3E-04 & 2.9E-03 & 415.7\\

$V_H$ & -0.057 &	-0.0018 &	3.5E-04	& -8.7E-05 &	-3.3E-05 &
-3.3E-05 &	-1.35E-05	& -2.4E-04&	7.8E-03 & \\

Flat iFT-BM & -0.092 &	1.5E-03 &	1.2E-04 &	-5.0E-05 &	1.6E-05 &
-3.4E-05 &	-1.1E-04 &	5.2E-04	& -0.15 & 133.3\\

Flat iFT & -5.6 &	0.13 &	0.017 &	-0.0083& 0.0024 &
-9.4E-04 &	-0.047 &	0.42 &	2.7 & 2,341.2\\

Lewis 80 & 0.045 &	-0.0073 &	1.7E-04 &	-5.8E-06 & 6.9E-08 &	-3.8E-07	& 
4.0E-06 &	-3.3E-05 &	-0.013 & 1,062.3
\\\hline
\end{tabular}
}
\begin{flushleft}{\tiny 
SINH: $\om_1=-0.5$, $b=0.769884522$, $\om=0$, 	$\ze=0.1836992027$, $N=28$, BL Modification with $M=317$.
\\
$V_H$: hybrid method of \cite{RoughNotTough} \\
Flat iFT-BM: $\sg_0=0.5$, $\om_1=-0.5$, 	$\ze=0.717626524$, $N=	80$, BL Modification with $M=317$. \\

Flat FT: $\om_1=-0.5	$,	$\ze=0.0877$, $N=450$, BL Modification with $M=317$.\\
Lewis 80: Lewis method and Gauss-Legendre quadrature with 80 terms, BL Modification with $M=317$. \\
For $V_H$, the CPU time is in the range 410-423 msec. per strike.

}
\end{flushleft}

{\tiny
\begin{tabular}{c|rrrrrrr|r}
\hline\hline
$T=1/52$ & & & & & & & &  Time \\\hline
$K$ &  0.85 & 0.90& 0.95 & 1.00 & 1.05 & 1.10 & 1.15 &\\\hline
SINH & -0.42 &	1.5E-03 &	-1.6E-05 &	-6.6E-06 &
	-2.3E-04	& -0.043 & -205 & 154.8 \\
$V_H$ & 	(**) &	0.013 & 0.085 &	0.016 &
0.096 &	0.32 &	0.82&	 \\
Flat iFT-BM & 
26.5 &	2.8E-03 &	-1.1E-04 &	-9.4E-07 &
1.3E-04 & 	0.075 &	1,030& 339.3\\
Flat iFT & 1,167 &	0.71 &	3.9E-04 &	1.5E-04 &
	1.7E-03 &	-1.7 & -49,413 & 1,664.2\\
Lewis 100 & 25,177 &	1.2 &	 3.5E-04 &	4.3E-07 &
6.3E-05 &	0.60 &	-119,127 & 187.8
\\\hline
\end{tabular}
}
\begin{flushleft}{\tiny 
At $K=0.8$ and $K=1.2$, the prices of OTM options are smaller than $10^{-12}$, and the benchmark prices
cannot be calculated using double precision arithmetic.\\

SINH, puts: $\om_1=0.325762041$, $b=1.014615984$, $\om=0.2$, 	$\ze=0.145086905$, $N=38$, BL Modification with $M=100$\\
SINH, calls: $\om_1=-1.325762041$, $b=1.014615984$, $\om=-0.2$, 	$\ze=0.145086905$, $N=38$, BL Modification with $M=100$.
\\
$V_H$: hybrid method of \cite{RoughNotTough}; (**): the call price in \cite{RoughNotTough} implies that the price of the put is 0. \\
Flat iFT-BM: $\sg_0=0.5$, $\om_1=-0.5$, 	$\ze=0.717626524$, $N=	200$, BL Modification with $M=100$. \\

Flat FT: $\om_1=-0.5	$,	$\ze=0.07309159$, $N=1500$, BL Modification with $M=100$.\\
Lewis 100: Lewis method and Gauss-Legendre quadrature with 100 terms, BL Modification with $M=100$. \\
The order of the errors of Flat iFT-BM, Flat FT and Lewis 100 does not decrease if $N$ increases further. \\
\vskip-0.2cm
For $V_H$, the CPU time is in the range 125-164 msec. per strike. 
}
\end{flushleft}

{\tiny
\begin{tabular}{c|rrr|r}
\hline\hline
$T=1/252$  & & & &   Time \\\hline
$K$ &   0.95 & 1.00 & 1.05  &\\\hline
SINH & 	-2.7E-03 &	4.7E-07 & 9.0E-03 & 212.1\\
$V_H$ & 	11.2 &	1.7E-04 &
18.3 &		 \\
Flat iFT-BM & -0.51 &	1.E-04 & 18.8 & 557.8\\
Flat iFT & -17.8 &	3.1E-03 & -370 & 1,664.2\\
Lewis 100 & 6.3&	-2.2E-05 & 270 & 190.1 
\\\hline
\end{tabular}
}
\begin{flushleft}{\tiny 
At $K=0.80, 0.85, 0.90$ and $K=1.10, 1.15, 1.20$, the prices of OTM options are smaller than $10^{-12}$, and the benchmark prices
cannot be calculated accurately using double precision arithmetic.\\

SINH, puts: $\om_1=0.325762041$, $b=1.014615984$, $\om=0.2$, 	$\ze=0.145086905$, $N=46$, BL Modification with $M=100$.\\
SINH, calls: $\om_1=-1.325762041$, $b=1.014615984$, $\om=-0.2$, 	$\ze=0.145086905$, $N=46$, BL Modification with $M=100$.
\\
$V_H$: hybrid method of \cite{RoughNotTough}. \\
Flat iFT-BM: $\sg_0=0.5$, $\om_1=-0.5$, 	$\ze=0.717626524$, $N=	350$, BL Modification with $M=100$. \\

Flat FT: $\om_1=-0.5	$,	$\ze=0.07309159$, $N=1500$, BL Modification with $M=100$.\\
Lewis 100: Lewis method and Gauss-Legendre quadrature with $N=100$ terms, BL Modification with $M=100$. \\
The order of the errors of Flat iFT-BM, Flat FT and Lewis  does not decrease if $N$ increases further.\\
\vskip-0.2cm
For $V_H$, the CPU time is in the range 154-196 msec. per strike.  
}
\end{flushleft}

\label{table:rel_errors_short}
 \end{table}
 
 \begin{table}
\caption{\small Implied volatilities for options of short maturities in Table~\ref{table:rel_errors_short}.}

{\tiny
\begin{tabular}{c|rrrrrrrrr}
\hline\hline
$T=1/12$ & & & & & & & & & \\\hline
$K$ &  0.80 & 0.85 & 0.90& 0.95 & 1.00 & 1.05 & 1.10 & 1.15 & 1.20\\\hline
BB & 0.2280 &	0.2226 &	0.2173 &	0.2123 &	0.2075 &	0.2030 &	0.1986 &	0.1945 &	0.1907\\
SINH & 0.2280 &	0.2225 &	0.2173 &	0.2123 &	0.2075 &	0.2030 &	0.1986 &	0.1945 &	0.1907  \\
$V_H$ & 	0.2271 &	0.2225 &	0.2173 &	0.2123&	0.2075 &	0.2030 &	0.1986 &	0.1944 &	0.1907	 \\
Flat iFT-BM & 
0.2265 &	0.2226 &	0.2173 &	0.2123 &	0.2075 &	0.2030 &	0.1986 &	0.1947 &	0.1884\\
Flat iFT & (**) &	0.2257 &	0.2181 &	0.2116 &	0.2080 &	0.2030 &	0.1968 &	0.2029 &	0.2116 \\
Lewis 100 & 0.2243 &	0.2226 &	0.2173 &	0.2123 &	0.2075 &	0.2030 &	0.1986 &	0.1945 &	0.1911

\\\hline
\end{tabular}

{\tiny
\begin{tabular}{c|rrrrrrrrr}
\hline\hline
$T=1/52$ & & & & & & &  & & \\\hline
$K$ & 0.8 & 0.85 & 0.90& 0.95 & 1.00 & 1.05 & 1.10 & 1.15  & 1.20\\\hline
BB &  0.2383 &	0.2288& 	0.2195&	0.2105 &0.2018
&0.1935 &	0.1857 & 0.1786 & 0.1737\\
SINH &  
0.2450 &	0.2288 &	0.2195 &	0.2105 &	0.2018&	0.1935 &	0.1857 &	0.1786&	0.1703\\
$V_H$ & (**) & 	(**) &	0.2197 &	0.2138 &	0.2051 &	0.1968 &	0.1889 &	0.1818 & 0.1843 \\
Flat iFT-BM & (*) & 0.2600 &	0.2196 &	0.2105 &	0.2018 &	0.1935 &	0.1866 &	0.2291 & 0.2929\\
Flat iFT& 0.4029 & 0.3147 &	0.2280 &	0.2107 &	0.2019 &	0.1936&(*) &	(*)  & 0.3071\\
Lewis 100& 0.5883 & 0.3928&	0.2321 &	0.2106 &	0.2018&0.1935 &	0.1913 &	(*)  & (*)
\\\hline
\end{tabular}
}

{\tiny
\begin{tabular}{c|rrr}
\hline\hline
$T=1/252$  & & &     \\\hline
$K$ &   0.95 & 1.00 & 1.05  \\\hline
BB & 0.2154 &	0.1994 &	0.1841\\
SINH & 	0.2154 &	0.1994 &	0.1841 \\
$V_H$ & 	0.2552 &	0.1994 &	0.2174 \\
Flat iFT-BM & 0.2068 &	0.1994 &	0.2178 \\
Flat iFT & (*)	&0.2000 &	(*) \\
Lewis 100 & 0.2456 &	0.1994 &	0.2661  
\\\hline
\end{tabular}
}

\begin{flushleft}{\tiny BB: benchmark.\\
(*): the price outside the no-arbitrage bounds.\\
(**): the put price is smaller than $10^{-12}$.\\
}
\end{flushleft}
}
\label{table:implvol_short}

\end{table}

\begin{table}
\caption{\small ``Bad region in the parameters space". Prices of OTM and ATM put and OTM call options of short maturity $T=1/365$, $r=0.1$, spot $S_0=1000$,  in the KoBoL model of small order, with parameters $(\mu, c, \nu, \lp, \lm)=(0.1,	1,	0.5,	0.2,	-1.2)$, and  relative errors of SINH-, Gauss-Laguerre (GL)
and Gauss-Kronrod  (GK) quadratures w.r.t. $V$. $N$ is the number of terms.
 }
{\tiny \begin{tabular}{c|ccccc| c}
\hline\hline
$K$ & 	0.6	& 0.8 &	1	&1.2 &	1.4	& N\\\hline
$V$ &1.15596308274723 &	2.2769702099306 &	5.97763601818645 &	3.06782338691266 &
2.39427586598299 &  105-274-114 \\\hline
 $SINH $ & 
 -7.22E-06 &	-4.33E-06	&-1.90E-06 &	-3.96E-06	& -5.38E-06
  &33-94-36 \\
 \hline
 $GL$ & 1.86E-04 &	-5.05E-03 &	0.036	& 4.12E-03 &	9.53E-04& 175
  \\\hline
 $GK $ 
 & 1.45E-04 &	1.25E-05	& -2.14E-05 &	1.59E-05 &	-2.70E-05 & \\\hline
\end{tabular}
}
\begin{flushleft}{\tiny Relative errors of the benchmark prices are smaller than $E-13$, and defined as differences
of prices calculated with $\om=\pm \pi/4$ and $\om=\pm \pi/8$, following the general prescription
with the error tolerance $\eps=E-15$ and dividing (resp., multiplying) $\ze$ and $\La$ by 1.4.\\
$SINH$ - calculated for $\om=\pm \pi/4$, following the general prescription for $\eps=E-07$}
\end{flushleft}
\label{table:relerrKBLBad}
\end{table}
\vskip-1.5cm
\begin{table}
\caption{\small ``Good region in the parameters space". Prices of OTM and ATM put and OTM call options of short maturity $T=1$, $r=0.1$, spot $S_0=1000$,  in the KoBoL model of large order, with parameters $(\mu, c, \nu, \lp, \lm)=(0.1,	1,	0.5,	0.2,	-1.2)$, and  relative errors of SINH-, Gauss-Laguerre (GL)
and Gauss-Kronrod  (GK) quadratures w.r.t. $V$. $N$ is the number of terms.
 }
{\tiny \begin{tabular}{c|ccccc| c}
\hline\hline
$K$ & 	0.6	& 0.8 &	1	&1.2 &	1.4	& N\\\hline
$V$ &326.631884432011 &	469.61080845886 &	618.758920614544 &	686.548810890079 &
662.549963116958 &  53-54 \\\hline
 $SINH $ & 
 0 &	0	&0 &	0	& -1.55E-12
  &38-39 \\
 \hline
 $GL$ & 2.86E-11	& 2.22E-11 &	1.86E-11 &	1.84E-11 &	2.07E-11& 175
  \\\hline
 $GK $ 
 & 1.15E-14 &	-3.51E-15	& -6.61E-15	& -6.62E-15 &	-5.15E-15 & \\\hline
\end{tabular}
}
\begin{flushleft}{\tiny Relative errors of the benchmark prices are 0 (calculated in Matlab with double precision arithmetic), and defined as differences
of prices calculated with $\om=\pm \pi/4$ and $\om=\pm \pi/8$, following the general prescription
with the error tolerance $\eps=E-15$ and dividing (resp., multiplying) $\ze$ and $\La$ by 1.4.\\
$SINH$ - calculated for $\om=\pm \pi/4$, following the general prescription for $\eps=E-15$}
\end{flushleft}
\label{table:relerrKBLGood}
\end{table}

\newpage

  \section{Pricing algorithm for the Markovian approximation (BL2)}\label{a:BL2}
This appendix outlines the BL2 algorithm described 
in \cite{MarkovianGG}. The procedure in Appendix F of \cite{MarkovianGG} is used to calculate the weights and nodes of the Markovian
approximation. The full description of the pricing algorithm  is not provided in  \cite{MarkovianGG} after the node construction, but can be found in the Python code published by the authors on GitHub \cite{breneis2025}, specifically in the function \texttt{compute\_Fourier\_inversion}, in file \texttt{rHestonFourier.py}. 
This algorithm adaptively refines the parameters of the Riccati solver and the Fourier inversion routine in order to satisfy a prescribed error tolerance. In particular, it adjusts the number of time steps in the Riccati ODE solver ($M$), the truncation point of the Fourier integral ($L$), and the number of quadrature points used in the Fourier inversion ($N$). For clarity of presentation, the algorithm is written in plain text rather than in pseudocode.
\mbr 
\begin{enumerate}[1.]
	\item \textbf{Inputs:}
	\begin{itemize}
		\item Maturity $T$
		\item An array of strikes
		\item Relative error tolerance $\varepsilon$
		\item Hurst parameter $H > 0$
		\item A pricing routine \texttt{compute}$(M, L, N)$, that returns a vector of prices or implied volatilities, denoted as $\sigma$. This always uses flat iFT with, in our notation, $\omega=2$ for put options and		$\omega=-2$ for calls.
	\end{itemize}
	
	\item \textbf{Initial parameter guess:} The numerical parameters are initialized based on the maturity and Hurst parameter.
	\[
	L = 100 \cdot T^{-0.5 + H}, \quad M = \mathrm{int}(10 \cdot L), \quad N = \mathrm{int}(8 \cdot L)
	\]
	
	\item \textbf{Baseline calculation:} A first solution is computed.
	\[
	\sigma^{(0)} \leftarrow \mathrm{compute}(M, L, N)
	\]
	
	\item \textbf{Initial error estimation:} The error is estimated by comparing the baseline solution to a  coarser one. This error is used to determine if the adaptive loop is necessary.
	\[
	\sigma^{\mathrm{coarse}} \leftarrow \mathrm{compute}\left(\mathrm{int}(M/1.6), L/1.2, \mathrm{int}(N/2)\right)
	\]
	\[
	\mathrm{error} = \max_k \frac{|\sigma^{\mathrm{coarse}}_k - \sigma^{(0)}_k|}{|\sigma^{(0)}_k|}
	\]
	
	\item \textbf{Adaptive refinement loop:} The loop continues as long as the error is above the tolerance\footnote{For the calculations in Table \ref{tab:markov-times}, we used $\varepsilon=10^{-3}$.} $\varepsilon$ or if the solution $\sigma^{(k)}$ contains invalid numbers (NaN). Let $k=0$.
	\sbr 
	\textbf{While} ($\mathrm{error} > \varepsilon$ \textbf{or} ($\sigma^{(k)}$) contains NaN):
	\sbr
	\begin{enumerate}
		\item[a.] \textbf{Store current solution:} $\sigma^{\mathrm{old}} \leftarrow \sigma^{(k)}$.
		
		\item[b.] \textbf{Check for NaN values in the solution:}
		\sbr
		\begin{itemize}
			\item \textbf{Case 1: No NaN values in $\sigma^{(k)}$}
			\begin{enumerate}
				\item[i.] \textbf{Shrink Test for Riccati time steps ($M$):} Check if $M$ can be reduced.
				\[
				\sigma^{\mathrm{test}} \leftarrow \mathrm{compute}(\mathrm{int}(M/1.8), L, N)
				\]
				\[
				\mathrm{error}_{\mathrm{R}} = \max_k \frac{|\sigma^{\mathrm{test}}_k - \sigma^{(k)}_k|}{|\sigma^{(k)}_k|},
				\]
				where $\mathrm{int}(\cdot)$ rounds to the nearest integer. If $\mathrm{error}_{\mathrm{R}} < \varepsilon/5$, then update $M \leftarrow \mathrm{int}(M/1.6)$.
				
				\item[ii.] \textbf{Shrink Test for Fourier quadrature points ($N$):} Check if $N$ can be reduced.
				\[
				\sigma^{\mathrm{test}} \leftarrow \mathrm{compute}(M, L, \mathrm{int}(N/2))
				\]
				\[
				\mathrm{error}_{\mathrm{F}} = \max_k \frac{|\sigma^{\mathrm{test}}_k - \sigma^{(k)}_k|}{|\sigma^{(k)}_k|}
				\]
				If $\mathrm{error}_{\mathrm{F}} < \varepsilon/5$, then update $N \leftarrow \mathrm{int}(N/1.8)$.
				
				\item[iii.] \textbf{Refine parameters:} Increase parameters based on the component-wise errors.
				\[
				L \leftarrow 1.4 \cdot L
				\]
				\[
				M \leftarrow
				\begin{cases}
					\mathrm{int}(1.4 \cdot M), & \text{if } \mathrm{error}_{\mathrm{R}} < \varepsilon/2 \\
					2 \cdot M, & \text{otherwise}
				\end{cases}
				\]
				\[
				N \leftarrow
				\begin{cases}
					\mathrm{int}(1.4 \cdot N), & \text{if } \mathrm{error}_{\mathrm{F}} < \varepsilon/2 \\
					2 \cdot N, & \text{otherwise}
				\end{cases}
				\]
				
			\end{enumerate}
			
			\sbr
			\item \textbf{Case 2: $\sigma^{(k)}$ contains NaN values}
			\sbr
			\begin{enumerate}
				\item[i.] Increase parameters as follows:
				\[
				L \leftarrow 1.6 \cdot L, \quad N \leftarrow \mathrm{int}(1.7 \cdot N), \quad M \leftarrow \mathrm{int}(2.5 \cdot M)
				\]
			\end{enumerate}
		\end{itemize}
		
		\item[c.] \textbf{Recompute Solution:} Calculate the new solution with the updated parameters
		\[
		\sigma^{(k+1)} \leftarrow \mathrm{compute}(M, L, N)
		\]
		
		\item[d.] \textbf{Update Error:} Calculate the new error by comparing the new and old solutions:
		\[
		\mathrm{error} \leftarrow \max_k \frac{|\sigma^{\mathrm{old}}_k - \sigma^{(k+1)}_k|}{|\sigma^{(k+1)}_k|}
		\]
		
		\item[e.] \textbf{Increment:} $k \leftarrow k+1$.
	\end{enumerate}
	\sbr
	\item \textbf{Return:} The final converged solution $\sigma^{(k)}$ and the final error estimate.
\end{enumerate}

\clearpage

\end{document}